\documentclass{article}

\PassOptionsToPackage{longnamesfirst, authoryear}{natbib}

\usepackage{amsmath}
\usepackage{amsthm}
\usepackage{amssymb}
\usepackage{hyperref}
\usepackage{cleveref}
\usepackage[linesnumbered,lined,boxed,commentsnumbered]{algorithm2e}
\usepackage{mathtools}
\usepackage{comment}
\usepackage{subcaption}

\usepackage[preprint]{neurips_2026}

\usepackage[utf8]{inputenc} 
\usepackage[T1]{fontenc}    
\usepackage{hyperref}       
\usepackage{url}            
\usepackage{booktabs}       
\usepackage{amsfonts}       
\usepackage{nicefrac}       
\usepackage{microtype}      
\usepackage{xcolor}         

\usepackage[normalem]{ulem} 

\usepackage{tikz}
\usepackage{pgfplots}
\pgfplotsset{compat=1.18}

\theoremstyle{definition}
\newtheorem{definition}{Definition}

\theoremstyle{plain}
\newtheorem{theorem}{Theorem}
\newtheorem{proposition}[theorem]{Proposition}
\newtheorem{lemma}[theorem]{Lemma}
\newtheorem{corollary}[theorem]{Corollary}

\theoremstyle{remark}

\newcommand{\N}{\mathbb{N}}
\newcommand{\Z}{\mathbb{Z}}
\newcommand{\R}{\mathbb{R}}
\newcommand{\M}{\mathcal{M}}
\newcommand{\MD}{\mathcal{M}_{\text{DLap}}}
\newcommand{\E}[2][]{\mathbb{E}_{#1} \left[ #2 \right]}
\newcommand{\Var}[1]{\operatorname{Var} \left( #1 \right)}

\newcommand{\proba}[1]{\Pr \left[ #1 \right]}
\newcommand{\MSE}[2][]{\text{MSE}_{#1}(#2)}

\def\tx{\tilde{x}}

\def\vol{\textnormal{vol}}

\SetKwFunction{DLap}{DLap}

\renewcommand{\epsilon}{\varepsilon}

\author{%
  Quentin Hillebrand \\
  BARC \\
  University of Copenhagen\\
    \And
  Jacob Imola \\
  University of Waterloo \\
    \AND
  Rasmus Pagh \\
  BARC \\
  University of Copenhagen\\
    \And
  Sia Sejer \\
  BARC \\
  University of Copenhagen\\
}

\title{Privacy by Postprocessing the\\ Discrete Laplace Mechanism}

\begin{document}

\maketitle

\begin{abstract}
    We show that an ``old dog,'' the classical discrete Laplace (aka.~geometric) mechanism, can ``perform new tricks'':
    \begin{enumerate}
        \item It can be post-processed to yield a simple, unbiased estimator of any subexponential function~$f$ of the original data, giving a simple, discrete, multivariate version of the recent unbiasing result for the Laplace mechanism by Calmon et al.~(FORC~'25).
        \item It can be post-processed to output the same distribution as the Laplace mechanism or the Staircase mechanism with identical privacy parameters.
    \end{enumerate}
    Thus, the discrete Laplace mechanism is a versatile mechanism that should be preferred over the Laplace and Staircase mechanisms whenever the data is discrete (or can be made discrete while controlling $\ell_1$-sensitivity).
    
    We show bounds on the variance of our estimator, compared to the mean square error of the biased estimator that simply evaluates the $f$ on the output of the mechanism.
    Though our unbiased estimator has exponential running time for worst-case functions, we show that it can often be computed in linear or polynomial time for some common functions exhibiting structure.
    We showcase the properties of our methods empirically with several use cases including profile and entropy estimation, as well as distributed/federated data analysis applications in which unbiasedness is key to accuracy.
\end{abstract}

\section{Introduction} 
Differential privacy has a powerful \emph{post-processing} property: 
For any differentially private mechanism~$\M$ and any function~$f$, we can release $f(\M(x))$ while preserving the privacy guarantees of $\M$. 
As a consequence, we can release all values $f_i(\M(x))$ for any family of functions $(f_i)_{i\in I}$ with no loss of privacy. 
In this paper, we consider additive noise mechanisms where $\E{\M(x)}=x$.
For non-linear functions~$f$, the expected function value $\E{f(\M(x))}$, assuming it exists, is in general different from $f(x)$. 
Such bias can be a problem in some applications, which led \cite{hillebrand2023unbiased} and later \cite{calmon2025debiasing} to study methods for debiasing via post-processing of the Laplace mechanism.
First, \cite{hillebrand2023unbiased} showed how to construct unbiased estimators for multivariate polynomials.
Later, \cite{calmon2025debiasing} gave estimators in the univariate setting for a wide class of twice differentiable functions~$f$. 

\paragraph{Debiasing in Post-Processing.}
In this paper, we propose a new post-processing method for the \emph{discrete} Laplace mechanism $\MD: \Z^n \rightarrow \Z^n$, first proposed and studied by \citet{ghosh2009universally}.
For convenience we will use the notation $\tilde{x} = \MD(x)$, that is, $\tilde{x}_i = x_i + \eta_i$ where $\eta_i$ is sampled i.i.d.~from the discrete Laplace distribution with parameter $p = e^{-\varepsilon}$. 
This mechanism is a canonical way of releasing histograms, and more generally integer-valued data with sensitivity~1, with an $\varepsilon$-differential privacy guarantee.
Given $\tilde{x}$, our post-processing method returns an unbiased estimator of $f(x)$ for \emph{any} univariate or multivariate function~$f$ where $\E{f(\tilde x)}$ exists and is finite.
A sufficient condition for this is that~$f$ is subexponential (or exponential with low enough rate).
Our method thus supports a class of functions~$f$ much wider than that of \cite{calmon2025debiasing, hillebrand2023unbiased}, both of which can only handle functions that are twice differentiable. 
Previously, unbiased postprocessing of the discrete Laplace distribution had only been described in the special case of univariate threshold functions, by \citet{ghazi2022anonymized}.

We show that our post-processing method is unique in the sense that it is the only deterministic method that constructs unbiased estimates in the discrete setting.
From this, it follows that it has the smallest variance of all unbiased estimators.
Though our estimator takes exponential time to compute for worst-case functions, we show that for functions which exhibit structure (e.g.~decomposable functions such as entropy and order statistics), it is often possible to design linear or polynomial time algorithms for computing the unbiased estimation.

\paragraph{Transformations to other Additive Noise Mechanisms.}
We show how to post-process $\tilde{x}$ so that the resulting distribution is identical to that of the classical (real-valued) Laplace mechanism~\citep{dwork2006calibrating} or the Staircase mechanism~\citep{geng2014optimal,geng2015optimal} on input $x$ with the same privacy parameter $\varepsilon$.
This means the Discrete Laplace Mechanism is a versatile mechanism that should be preferred over the Laplace and Staircase Mechanisms whenever the data is discrete or can be made discrete while still controlling the $\ell_1$-sensitivity.
Everything that can be done with these mechanisms can also be based on the Discrete Laplace Mechanism, and the latter may lead to better utility, especially for medium to large values of $\varepsilon$.

\paragraph{Applications to Federated/ Distributed Settings.}
Unbiased estimators are particularly useful in settings where the statistic of interest can be obtained by summing or averaging many individual contributions.
For example, suppose that each of $k$ parties (e.g.~hospitals) holds a distribution given by a histogram, that each histogram has been released with the Discrete Laplace mechanism, and we wish to compute the entropy of the joint distribution over all parties.
Assuming independence, the joint entropy is the sum of the entropy of each distribution, a non-linear function of the histogram.
Summing an unbiased estimator for the entropy of each distribution we get an estimator for the total entropy whose error scales with $O(\sqrt{k})$.
In contrast, summing biased estimators will result in error proportional to $k$, which will be worse for sufficiently large $k$.
As another example, suppose we wish to gather statistics \emph{across} histograms, for example the minimum, maximum, mean and variance of each column in the histograms.
We can avoid bias in these estimates by using our estimator on the ``transposed'' histograms formed by combining information on each column across histograms.

\subsection{Overview of Our Contributions.}

In \Cref{sec:DebiasDLap}, we first show (\Cref{thm:unbiase-univariate}) that in the univariate case ($n=1$) a linear combination of $f(\tilde{x}-1)$, $f(\tilde{x})$, and $f(\tilde{x}+1)$ is an unbiased estimator of $f(x)$ for any function $f$ and input~$x$ where $\E{f(\tilde{x})}$ exists.
This estimator can be interpreted as the discrete second derivative of $f$ evaluated in $\tilde{x}$, analogous to the approach by \cite{calmon2025debiasing} for univariate functions.
We then show (\Cref{thm:unbiase-multivariate}) that our method extends naturally to the multivariate case $x\in\mathbb Z^n$ where the estimator is a linear combination of the function values $f(\tilde x+\xi)$ for $\xi \in \{-1,0,+1\}^n$.

\Cref{sec:properties} investigates properties of our unbiased estimator.
We first show (\Cref{thm:uniqueness}) that it is unique among deterministic estimators and has minimum variance among all unbiased estimators.
In particular, for a black-box function $f$, any deterministic estimator requires $3^n$ function evaluations.
Then we consider the mean square error (MSE) and show (\Cref{thm:l2-bound-multivariate}) that it is at most a factor $O(1+1/\varepsilon^{4})^n$ worse than the MSE of the naive estimator~$f(\tilde{x})$.
This bound is tight: Proposition~\ref{prop:power-func} shows that there are functions for which our estimator exhibits this rate of exponential variance growth, but for other functions we instead see that the unbiased estimator is exponentially \emph{better}.

\Cref{sec:tranformation} shows that it is possible to post-process the Discrete Laplace Mechanism to be identical to either the classical Laplace Mechanism (\Cref{thm:laplace}) or to the Staircase mechanism (\Cref{thm:staircase}) while preserving the same privacy level $\varepsilon$.
The insight is that it is possible to choose random variables $X_\varepsilon$ and $Y_{\gamma, \varepsilon}$ (independently of $x$), such that $\tilde{x} + X_\varepsilon$ and $\tilde{x} + Y_{\gamma, \varepsilon}$ have the same distribution as the classical Laplace mechanism and the Staircase Mechanism with privacy parameter~$\varepsilon$, respectively.  

\Cref{sec:faster} studies functions and classes of functions for which our estimator can be computed much faster than by direct summation of its $3^n$ terms.
The key idea is to group terms with the same contribution, and compute how many terms are in each group.
This allows us to compute order statistics (such as the minimum, median, and maximum), inference for decision trees, as well as entropy and KL divergence in polynomial time.
 
In \Cref{sec:Empirical} we present four case studies showcasing settings in which our estimator has an advantage over alternatives.
We consider polynomials, for a use case in graph counting, for which we emphasize the advantage of the discrete Laplace mechanism over the classical Laplace mechanism for high values of $\varepsilon$.
For empirical entropy, where we see a significantly improved mean square error for low- and medium values of $\varepsilon$ for sparse histograms.
For estimating partition functions, a setting in which the naive estimator has considerable bias, we also see improved behavior near the convergence radius.
Finally, we give the first practical demonstration of profile estimation by post-processing.

\paragraph{Societal Impact of Unbiasedness.}
So far, we have considered unbiasedness from the technical perspective of statistical bias. 
From a societal perspective, one can ask how unbiased estimation relates to algorithmic bias and fairness. 
There are many distinct and incompatible notions of algorithmic fairness (e.g. \cite{Kleinberg2017Fairness}). 
A potential benefit of pointwise unbiased estimators is that they avoid directional statistical bias so that no group of points is systematically underestimated or overestimated in expectation.
However, an unbiased estimator might have a high variance and over- or underestimate the function value most of the time, reducing its fairness in practice.
For example, the estimator may overestimate the target value with high probability while remaining unbiased, because it occasionally produces underestimates that compensate in expectation. 
Another concern is that the estimator's variance may differ substantially across groups of input points.
This raises the question of whether one can design unbiased estimators with uniform or controlled variance for all points, and what trade-offs such constraints would impose. 
To sum up, one should be careful not to conflate unbiasedness and fairness.

\subsection{Related Work}

\paragraph{Debiasing Functions of the Laplace Mechanism.}
\citet{hillebrand2023unbiased} showed how to debias functions of the classical Laplace mechanism (an additive noise mechanism that yields privacy guarantees also for non-integer data) when $f$ is a multivariate polynomial.
\citet{calmon2025debiasing} considered a much more general class of functions in the univariate case ($n=1$). 
They showed that there is an essentially unique way of computing such unbiased estimators: for a suitable constant $c_\varepsilon$ and any twice-differentiable, so-called tempered function $f$, the estimator
\[\hat f(\mathcal{M}(x))=f(\mathcal{M}(x))-c_\varepsilon f''(\mathcal{M}(x))\]
\noindent
satisfies $\E{\hat f(\mathcal{M}(x))}=f(x)$.  
The result is derived via a deconvolution perspective in the Fourier domain and yields closed-form unbiased estimators for broad classes of nonlinear functions.
It is not immediately clear from their results how to extend the result to the multivariate setting. 
For our estimator these strong assumptions on $f$ are not necessary, so it is applicable to a wider class of functions for which the naive estimator has a well-defined, finite expectation.

\paragraph{Mechanisms based on Postprocessing the Discrete Laplace Mechanism.}
\citet{ghazi2022anonymized} considered a related distributed setting in which a cryptographic primitive, a \emph{shuffler}~\cite{cheu2019distributed, erlingsson2019amplification}, is used to collect and noise histogram data.
Similar to our results they consider Discrete Laplace noise, and in the univariate case they show how to debias a threshold indicator function applied to the output of the Discrete Laplace mechanism.
This result is used as a stepping stone to estimate any \emph{symmetric} function of the histogram.
For comparison we provide a method that can be used without shuffling and supports a wider class of functions that are not necessarily symmetric.

\citet{wu2024profile} studied \emph{profile estimation} which boils down to estimating for various $k$ a sum of indicator function values $f(x)$ where $f(x) = 1$ if $x=k$ and $f(x)=0$ otherwise.
They present a way of ``inverting'' the noise added by the Discrete Laplace mechanism and show that it is asymptotically near-optimal.
Our estimator gives a much simpler solution to this problem.

\paragraph{Privately Estimating Black-Box Statistics.}
Related to supporting privacy on a broad class of functions, \cite{linder2025untrustedBlackboxFunction} designed a so-called down-local algorithm for estimating a function $f$ with exponentially many evaluations of $f$ on noisy data.
Despite the superficial similarity, their setting is rather different from ours.
Their method provides accurate estimations for \textit{well-behaved} (finite-domain or $c$-Lipschitz) functions on the dataset and its subset. 
For monotone functions, the bias is always downwards, and for more general functions with finite range, they provided a guaranteed interval. \cite{steinke2025blackBoxStatistics} presents a method for interpolating between computational efficiency and statistical accuracy when estimating a black box function $f$. 
Formally, they interpolate between high statistical accuracy and an exponential number of evaluations, and low statistical accuracy, where the number of evaluations depends only on the privacy parameters and the domain size.

\subsection{Definitions}
\begin{definition}[Differential Privacy (DP) \citep{dwork2006calibrating}]
For $\varepsilon > 0$ an algorithm $\mathcal{M}$ with domain~$\mathcal{D}$ and range $\mathcal{S}$ satisfies $\varepsilon$-differential privacy ($\varepsilon$-DP) if, for every $S \subset \mathcal{S}$ and any $D^{(1)}, D^{(2)} \in \mathcal{D}$ that differ by only one element, we have
\[
    \Pr \left[ \mathcal{M}\left(D^{(1)}\right) \in S \right]
    \leq {e^{\varepsilon} \cdot \Pr \left[ \mathcal{M}\left(D^{(2)}\right) \in S \right]} \enspace .
\]
\end{definition}
In this paper we consider datasets that have been transformed to integer-valued vectors in $\Z^n$. 
A canonical example of such a dataset is a histogram of counts of elements in a discrete domain of size~$n$.
We consider two datasets $x, x'$ to be neighboring if $\|x - x'\|_1 \leq 1$, i.e., at most one count can be changed by $1$.
For discrete-valued data, the canonical DP mechanism is based on adding noise drawn from the \emph{discrete Laplace distribution}:

\begin{definition}[Discrete Laplace distribution]
    \label{def:dlap}
    For $p \in (0,1)$, $\DLap(p)$ denotes the discrete Laplace distribution whose probability mass at $i \in \mathbb{Z}$ is $\frac{1-p}{1+p} \cdot p^{|i|}$.
\end{definition}

We consider the setting where $x$ has been published as a noisy histogram $\tilde{x} = x + \DLap(p)^n$ by adding independent noise drawn from $\DLap(p)$ to each coordinate. 
This is an example of the Discrete Laplace Mechanism, and it is known that if $p = e^{-\varepsilon}$, it satisfies $\varepsilon$-DP~\citep{ghosh2009universally}. 
We are interested in estimating a function $f(x)$ of the sensitive dataset, where $f:\Z^n \rightarrow \R$, while only having access to $\tilde{x}$. 
To do so, we will use an estimator $g(\tilde{x})$.
We measure the accuracy of an estimator using the mean square error.

\begin{definition}[Mean Square Error]
    The mean square error (MSE) of an estimator $g$ on input $x$ is 
    \(
        \MSE[g]{x} = \E{(g(\tilde{x}) - f(x))^2}
    \).
    If the expected value does not exist the MSE is \emph{undefined}.
\end{definition}

Note that this definition includes the dataset $x$ as a parameter, rather than considering a maximum over all $x$. 
This pointwise definition will be useful when computing our error bounds later.
    
\section{Debiasing the Discrete Laplace Mechanism} \label{sec:DebiasDLap}

We now study how to design an unbiased estimator $g$ that, given access to the dataset $x$ published with Discrete Laplace Noise, has expected value $f(x)$ for a given function~$f$.

\paragraph{Univariate Case.}
We first consider a single dimension $n = 1$. The basic idea is to ``invert'' the equation $\E{g(\tilde{x})} = f(x)$, which is generally possible because, when expanded out, expectation is a linear operator. 
In general, such an estimator would not have a closed form, but the Discrete Laplace Distribution has tails with a simple exponential shape, which enables cancellation. This makes it possible to debias $f(\tilde{x})$ by considering only $f(\tilde{x} + 1)$, $f(\tilde{x})$, and $f(\tilde{x}-1)$. 
We show: 
\begin{theorem}\label{thm:unbiase-univariate}
    For $f: \Z \to \R$, $p \in (0,1)$, $x \in \Z$, and $y \in \Z$ define: 
    \begin{equation}
        g(y) = f(y) - \tfrac{p}{(1 - p)^2} \left( f(y+1) - 2 f(y) + f(y-1) \right).
        \label{eq:unbiased}
    \end{equation}
    Given $\tilde{x} = x + \DLap(p)$ if $\E{|f(\tilde x)|}$ exists and is finite then $g(\tilde{x})$ is an unbiased estimator of $f(x)$, i.e., $\E{g(\tilde{x})} = f(x)$.
\end{theorem}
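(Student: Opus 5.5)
The plan is to write $\E{g(\tilde x)}$ as a sum over the noise distribution, regroup it as a convolution of $f$ with a kernel, and check that this kernel is the point mass at $0$. Write $\mu(i) = \frac{1-p}{1+p}p^{|i|}$ for the mass function of $\DLap(p)$ and $c = \frac{p}{(1-p)^2}$. Then
\[
\E{g(\tilde x)} = \sum_{i\in\Z} \mu(i)\Bigl((1+2c) f(x+i) - c f(x+i+1) - c f(x+i-1)\Bigr).
\]
First we justify splitting this into three separate series. The hypothesis that $\E{|f(\tilde x)|}$ is finite means $\sum_i \mu(i)|f(x+i)| < \infty$. Since $\mu(i\pm 1) \ge p\,\mu(i)$, we get $\mu(i) \le p^{-1}\mu(i\pm1)$, so the shifted series $\sum_i \mu(i)|f(x+i\pm1)|$ are also finite. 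All three series therefore converge absolutely, and we may reindex each by $j = i\pm 1$. This gives
\[
\E{g(\tilde x)} = \sum_{j\in\Z} f(x+j)\, K(j),\qquad K(j) = (1+2c)\mu(j) - c\,\mu(j-1) - c\,\mu(j+1).
\]

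The key step is to show $K(j) = \mathbf{1}[j=0]$, which finishes the proof.
\begin{itemize}
\item For $j \ge 1$: $\mu(j\pm1) = p^{\pm1}\mu(j)$, so $K(j) = \mu(j)\bigl(1 + 2c - c(p + p^{-1})\bigr)$. Since $2 - p - p^{-1} = -(1-p)^2/p$, we have $c(2 - p - p^{-1}) = -1$, and hence $K(j) = 0$.
\item For $j \le -1$: the same computation applies by symmetry of $\mu$.
\item For $j = 0$: $\mu(\pm1) = p\,\mu(0)$, so $K(0) = \mu(0)(1 + 2c - 2cp) = \frac{1-p}{1+p}\bigl(1 + \frac{2p}{1-p}\bigr) = \frac{1-p}{1+p}\cdot\frac{1+p}{1-p} = 1$.
\end{itemize}
Conceptually, this says $\mu$ is the Green's function of the operator $1 - c\Delta$, where $\Delta$ is the discrete Laplacian. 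The operator annihilates $\mu$ on each exponential tail, and the kink at $0$ produces exactly unit mass.

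The only delicate point is the interchange and reindexing of the sums. This is handled entirely by the domination bound $\mu(i) \le p^{-1}\mu(i\pm1)$ above, which ensures $g(\tilde x)$ is integrable under exactly the stated hypothesis. The rest is routine algebra.
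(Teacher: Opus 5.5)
Your proof is correct and follows essentially the paper's approach: the paper also expands $\mathbb{E}[g(\tilde x)]$ against the discrete Laplace mass function, reindexes the two shifted sums, and collects the coefficient of each $f(x+j)$. It groups these as a multiple of $\sum_i p^{|i|}f(x+i)$ plus a correction at $f(x)$, rather than showing pointwise that your kernel satisfies $K(j)=\mathbf{1}[j=0]$. Your domination bound $\mu(i)\le p^{-1}\mu(i\pm1)$, which justifies splitting and reindexing the series under exactly the stated integrability hypothesis, is a step the paper leaves implicit.
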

The proof of \Cref{thm:unbiase-univariate} is a straightforward computation of the expected value of $g(\tilde{x})$ and is omitted to Appendix~\ref{app:debias}.
The condition that $\E{f(\tilde x)}$ exists and is finite is satisfied for every function whose growth rate is subexponential or, more generally, dominated by a certain exponential function depending on $p$, see Lemma~\ref{lem:poly} for details.

{\bf Comment.}
The value $f(y+1) - 2f(y) + f(y-1)$ is equal to  $(f(y+1) - f(y) )- (f(y) - f(y-1))$ and can be interpreted as a discrete second derivative. 
This makes our estimator analogous to the unbiased estimator for real-valued functions $f(\tilde{x}) - c_\varepsilon f''(\tilde{x})$ derived in~\cite{calmon2025debiasing}.

The special case where $f$ is a threshold function, i.e., where for some value $k$, $f(x)=1$ if $x\geq k$ and $f(x)=0$ for $x<k$, was studied previously in \citet[Lemma 10]{ghazi2022anonymized}.
The unbiased estimator given in their paper coincides with ours in this case.
In the univariate case, our estimator in \Cref{eq:unbiased} can in fact be recovered via indicator functions $f_k(x)=\mathbf{1}[x=k]$, combining the resulting linear estimators. Our approach operates directly at the level of the function $f$
and extends naturally to arbitrary multivariate functions via tensorization as we discuss below.

\paragraph{Multivariate Case.}
We next discuss how to extend Theorem~\ref{thm:unbiase-univariate} to multivariate functions.
\begin{theorem}
    \label{thm:unbiase-multivariate}
    For $f: \Z^n \to \R$, $p \in (0,1)$, $y \in \Z^n$, define: 
    \begin{equation}
        g(y) = \sum\limits_{\xi \in \{-1, 0, 1\}^n} \left(f(y + \xi) \prod\limits_{j=1}^n \alpha_{\xi_j}\right) \text{ where }
        \alpha_{\xi_i} =
        \begin{cases}
            1 + \frac{2 p}{(1-p)^2}, & \text{for } \xi_i=0 \\
            \frac{-p}{(1-p)^2},      & \text{for } \xi_i \in \{-1,+1\}
        \end{cases}
        \label{eq:unbiased-multivariate}
    \end{equation}
    For all $i \in [n]$ let $\tilde{x}_i = x_i + \DLap(p)$.
    If $\E{|f(\tilde x)|}$ exists and is finite, $g(\tilde{x}_1, \ldots, \tilde{x}_n)$ is an unbiased estimator of $f$, i.e., $\E{g(\tilde{x}_1, \ldots, \tilde{x}_n)} = f(x_1, \ldots, x_n)$
\end{theorem}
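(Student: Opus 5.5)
The plan is to prove \Cref{thm:unbiase-multivariate} by tensorization: apply the univariate result, \Cref{thm:unbiase-univariate}, one coordinate at a time.

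Write the univariate operator as $(T h)(y) = \sum_{\xi\in\{-1,0,1\}} \alpha_\xi h(y+\xi)$. Expanding \eqref{eq:unbiased} shows that $Th$ is exactly the $g$ of \Cref{thm:unbiase-univariate}: the coefficient of $h(y)$ is $1+\frac{2p}{(1-p)^2}$ and the coefficient of $h(y\pm1)$ is $-\frac{p}{(1-p)^2}$. Let $T_j$ denote $T$ acting on coordinate $j$ with the other coordinates held fixed. The operators $T_j$ commute, and $g = T_1 T_2\cdots T_n f$, because the product $\prod_j \alpha_{\xi_j}$ is precisely the tensor product of the one-dimensional weights.

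Next we compute $\E{g(\tilde x)}$ by iterated expectation, using that the noise coordinates $\eta_1,\dots,\eta_n$ are independent. Condition on $\eta_2,\dots,\eta_n$, so that $\tilde x_2,\dots,\tilde x_n$ are fixed. As a function of $\tilde x_1$ alone, $g(\tilde x) = T_1 h(\tilde x_1)$ with $h(z) = (T_2\cdots T_n f)(z,\tilde x_2,\dots,\tilde x_n)$. \Cref{thm:unbiase-univariate} then gives $\E[\eta_1]{T_1h(\tilde x_1)} = h(x_1)$. Repeating this for coordinates $2,\dots,n$ yields $f(x)$. Equivalently, we can induct on $n$: the inductive hypothesis handles the $(n-1)$-variate function $z' \mapsto (T_1 f)$ with coordinate 1 replaced, after first integrating out $\eta_1$.

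The main obstacle is integrability, since we need Fubini to justify exchanging the order of expectations. Each univariate application requires $\E{|h(\tilde x_1)|}<\infty$ for almost every value of the other coordinates, and $h$ is a finite combination of shifts $f(\cdot+\xi)$. The plan is to show that if $\E{|f(\tilde x)|}<\infty$, then $\E{|f(\tilde x+\xi)|}<\infty$ for every $\xi\in\{-1,0,1\}^n$. This holds because the density of $x+\eta$ at $z$ and the density of $x+\xi+\eta$ at $z$ differ by a factor of at most $p^{-\|\xi\|_1}\le p^{-n}$, pointwise in $z$. This is the same ratio bound that underlies the privacy guarantee, so $\E{|f(\tilde x+\xi)|}\le p^{-n}\E{|f(\tilde x)|}$. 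It follows that $\E{|g(\tilde x)|}<\infty$ and that every intermediate function is absolutely integrable with respect to the product measure. Fubini–Tonelli then justifies computing $\E{g(\tilde x)}$ as an iterated expectation, and by Tonelli the conditional integrability hypotheses hold for almost every value of the remaining coordinates. (Since the distribution is discrete and every point has positive mass, they in fact hold for every value.)
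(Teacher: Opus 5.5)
Your proposal is correct and takes essentially the same route as the paper: express $g$ as a product of univariate operators, take iterated expectations using independence, and apply \Cref{thm:unbiase-univariate} in one coordinate with the others held fixed, then induct on $n$ (the paper peels off $\tilde{x}_n$ first rather than $\tilde{x}_1$). Your integrability argument via the shift bound $\E{|f(\tilde x+\xi)|}\le p^{-\|\xi\|_1}\E{|f(\tilde x)|}$, combined with the observation that every atom has positive mass, is a genuine improvement in rigor, since the paper justifies the exchange of expectations only by independence and does not check the univariate integrability hypothesis for each slice.
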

The proof of \Cref{thm:unbiase-multivariate}, which can be found in Appendix~\ref{app:debias}, essentially applies the reasoning of \Cref{thm:unbiase-univariate} to one variable at a time. 
A direct implementation of the estimator (\ref{eq:unbiased-multivariate}) sums $3^n$ terms, making it impractical for large $n$.
The sum can be estimated by sampling, but that results in a larger variance.
In Section~\ref{sec:faster} we show how to compute unbiased estimators for specific multivariate functions with much better computational complexity.

\section{Properties of the Unbiased Estimator}\label{sec:properties}

In this section we show some basic properties of our estimator.
First, we show that it is the unique deterministic, unbiased estimator (\Cref{sec:uniqueness}).
Second, we study the mean square error (MSE) of our estimator.
We show an upper bound in terms of $\MSE[f]{x}$, the MSE of the naive estimator (\Cref{sec:bounding-l2-error}), and give examples of functions for which the upper bound is essentially tight, and well as examples of functions where our estimator is much better than the naive estimator (\Cref{sec:gaps}).

\subsection{Uniqueness of the Unbiased Estimator}\label{sec:uniqueness}

We show that the estimator (\ref{eq:unbiased-multivariate}) is the only such unbiased, deterministic estimator for discrete Laplace noise. 
Moreover, if there exists another (potentially randomized) estimator, then its variance cannot be smaller than the one of ours.
To obtain those results, we use the Rao-Blackwell Theorem:

\begin{theorem}[Rao-Blackwell Theorem (\citet{lehmann1998theory}, Theorem 7.8)]
    \label{thm:rao}
    Let $Y$ be an observed random variable and let $U$ be auxiliary randomness.
    Let $\delta(Y,U)$ be an estimator with finite variance, and define
    \(
        \bar{\delta}(Y)
        \coloneqq
        \E{\delta(Y,U)\mid Y}.
    \)
    Then
    \[
        \E{\bar{\delta}(Y)} = \E{\delta(Y,U)}
        \quad\text{and}\quad
        \Var{\bar{\delta}(Y)} \leq \Var{\delta(Y,U)}.
    \]
\end{theorem}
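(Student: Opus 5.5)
The plan is to derive both claims from the tower property of conditional expectation together with the law of total variance (equivalently, conditional Jensen applied to the square). Throughout I assume $\delta(Y,U)$ is square integrable, as the finite-variance hypothesis provides. Then $\bar{\delta}(Y)=\E{\delta(Y,U)\mid Y}$ is a well-defined, $Y$-measurable, integrable random variable. Only $Y$ and $U$ appear in the statement, so no structure on their joint distribution is needed. In the application $U$ will be independent of $Y=\tilde x$, but the argument does not use this.

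For the first identity I will apply the tower property directly: $\E{\bar{\delta}(Y)}=\E{\E{\delta(Y,U)\mid Y}}=\E{\delta(Y,U)}$. This requires only that $\delta(Y,U)$ be integrable, which follows from finite variance.

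For the variance inequality I will decompose $\delta(Y,U)-\mu = (\delta(Y,U)-\bar{\delta}(Y)) + (\bar{\delta}(Y)-\mu)$, where $\mu=\E{\delta(Y,U)}$, and expand the square. The cross term $\E{(\delta(Y,U)-\bar{\delta}(Y))(\bar{\delta}(Y)-\mu)}$ vanishes. To see this, condition on $Y$: the factor $\bar{\delta}(Y)-\mu$ is $Y$-measurable and can be pulled out, leaving $\E{\delta(Y,U)-\bar{\delta}(Y)\mid Y}=0$. This gives $\Var{\delta(Y,U)}=\E{(\delta(Y,U)-\bar{\delta}(Y))^2}+\Var{\bar{\delta}(Y)}\ge \Var{\bar{\delta}(Y)}$. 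Equality holds exactly when $\delta(Y,U)=\bar{\delta}(Y)$ almost surely, i.e.\ when $\delta$ is already a (deterministic) function of $Y$. This equality case is what the paper will need when comparing randomized estimators to the deterministic estimator (\ref{eq:unbiased-multivariate}).

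The only step needing care is justifying that the cross term is well defined and that pulling out the $Y$-measurable factor is legitimate. I will handle it by first noting $\E{\bar{\delta}(Y)^2}\le \E{\delta(Y,U)^2}<\infty$, by conditional Jensen applied to $t\mapsto t^2$. Both factors are then in $L^2$, so their product is integrable by Cauchy--Schwarz, and the ``taking out what is known'' property of conditional expectation applies. Everything else is routine.
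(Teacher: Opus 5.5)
Your proof is correct. The paper gives no proof of its own here; the result is quoted from Lehmann and Casella. So the only comparison is with the textbook argument, which proves Theorem~7.8 for an arbitrary (strictly) convex loss. It applies conditional Jensen's inequality given the sufficient statistic and then takes expectations.

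You already have that route in hand. The bound $\E{\bar\delta(Y)^2}\le\E{\delta(Y,U)^2}$, which you obtain from conditional Jensen, combines with your first identity $\E{\bar\delta(Y)}=\E{\delta(Y,U)}$ to give $\Var{\bar\delta(Y)}\le\Var{\delta(Y,U)}$ at once. The cross-term computation is therefore not needed for the inequality itself. Your orthogonal decomposition adds the exact gap $\Var{\delta(Y,U)}-\Var{\bar\delta(Y)}=\E{(\delta(Y,U)-\bar\delta(Y))^2}$, which makes the equality case transparent. The Jensen route adds validity for losses other than squared error.

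One correction to your forecast: the paper does not use the equality case. In the proof of \Cref{thm:uniqueness} it only combines the inequality with uniqueness of the deterministic unbiased estimator: $\bar\delta=g$, hence $\Var{g(\tx)}\le\Var{\delta(\tx,U)}$. Your equality case would strengthen this to strict improvement over every genuinely randomized unbiased estimator, which the paper does not claim.

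The application does need one ingredient outside the theorem as stated: $\bar\delta$ must be a legitimate estimator, i.e.\ a fixed function of $\tx$ not depending on the unknown $x$. This holds because $U$ is independent of $\tx$ with an $x$-free law, so $\bar\delta(y)=\E[U]{\delta(y,U)}$. This plays the role of sufficiency in the textbook version. It is exactly the independence you correctly note is irrelevant to the purely probabilistic statement.
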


We are now ready to state the main claim of the section:

\begin{theorem}\label{thm:uniqueness}
    Let $n \in \N^*$, $f: \Z^n \to \R$, $p \in (0,1)$, and $g$ the unbiased estimator of \Cref{eq:unbiased-multivariate}. 
    Assuming all expectations are well defined, $g$ is the only deterministic unbiased estimator of $f(x)$ from $\tx = x + \DLap(p)$.
    Moreover, if $g$ has a finite variance, then it has minimum variance among all unbiased estimators, including randomized estimators.
\end{theorem}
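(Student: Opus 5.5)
The proof has two parts: uniqueness among deterministic estimators, then minimum variance via \Cref{thm:rao}.

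\textbf{Uniqueness.} Suppose $h:\Z^n\to\R$ is a deterministic estimator with $\E{h(\tx)}=f(x)$ for all $x\in\Z^n$. By \Cref{thm:unbiase-multivariate}, $g$ is also unbiased, so $d\coloneqq h-g$ satisfies
\[
\sum_{y\in\Z^n} d(y)\prod_{i=1}^n p^{|y_i-x_i|}=0 \quad\text{for all } x\in\Z^n.
\]
The goal is to show $d\equiv 0$. The cleanest route is to exhibit an explicit inverse of the convolution operator $T d(x)=\sum_y d(y)\prod_i \frac{1-p}{1+p}p^{|y_i-x_i|}$ on functions for which these sums converge absolutely.

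\emph{Univariate case.} Let $K(k)=\frac{1-p}{1+p}p^{|k|}$ and let $D$ be the operator $D\phi(y)=\phi(y)-\frac{p}{(1-p)^2}(\phi(y+1)-2\phi(y)+\phi(y-1))$. A direct computation gives $D K=\delta_0$; this is exactly the identity behind \Cref{thm:unbiase-univariate}. The kernel $K$ is even and $D$ is a symmetric finite-difference operator, so convolutions can be reordered. Applying $D$ in the variable $x$ to $Td$ therefore gives
\[
D(Td)(x)=\sum_y d(y)\,(DK)(y-x)=d(x).
\]
Moving $D$ inside the sum is legitimate because $D$ involves only three shifts and each sum converges absolutely. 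Hence $Td\equiv0$ implies $d\equiv0$.

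\emph{Multivariate case.} The kernel is a product, so apply $D$ coordinate by coordinate. The tensorized operator $D^{\otimes n}$ is exactly the operator defining \eqref{eq:unbiased-multivariate}, and it inverts $T$ in the same way, giving $d\equiv 0$.

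The main obstacle is justifying the interchange of the finite-difference operator with the infinite sum. Absolute convergence of $\E{|h(\tx)|}$ and $\E{|g(\tx)|}$ is needed for every $x$, and in particular at $x\pm e_i$. This is covered by the hypothesis that all expectations are well defined. Convergence at the shifted points then follows because shifting $x$ by one changes the kernel by at most a factor $p^{-1}$.

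\textbf{Minimum variance.} Let $\delta(\tx,U)$ be any unbiased, possibly randomized, estimator with finite variance. By \Cref{thm:rao}, $\bar\delta(\tx)=\E{\delta(\tx,U)\mid\tx}$ is a deterministic estimator. It is unbiased for every $x$, since $\E{\bar\delta(\tx)}=\E{\delta(\tx,U)}=f(x)$, and it satisfies $\Var{\bar\delta}\le\Var{\delta}$.

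By the uniqueness part, $\bar\delta=g$ pointwise on $\Z^n$. Therefore $\Var{g(\tx)}\le\Var{\delta(\tx,U)}$ for every $x$. If $\delta$ has infinite variance, the inequality holds trivially, which completes the argument.
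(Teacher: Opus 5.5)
Your proof is correct and is essentially the paper's argument. The paper's combination $\E{h(\tx-e_1)}+\E{h(\tx+e_1)}-\tfrac{p^2+1}{p}\E{h(\tx)}$ is a scalar multiple of your operator $D$ applied in $x_1$: it collapses the kernel to a point mass (your identity $DK=\delta_0$) and is repeated coordinate by coordinate, and the minimum-variance part is the same Rao--Blackwell-plus-uniqueness step, so your tensorized $D^{\otimes n}T=I$ phrasing with the explicit convergence remarks is only a tidier packaging of the same idea.
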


To prove uniqueness, observe that any two unbiased estimators $g_1, g_2$ for $f$ produce an unbiased estimator $g_1-g_2$ for $0$. 
We show that $g = 0$ is the only deterministic, unbiased estimator for $f = 0$ using cancellation properties. Optimality of $g$ follows directly from~\Cref{thm:rao}. 
Details can be found in Appendix~\ref{app:uniquenessproof}.

\subsection{Bounding MSE of the Unbiased Estimator} \label{sec:bounding-l2-error}

For a general function $f$, we bound $\MSE[g]{x}$ as follows:
\begin{theorem}
    \label{thm:l2-bound-multivariate}
    For $f: \Z^n \to \R$, $p \in (0,1)$, $(x_1, \ldots, x_n) \in \Z^n$, and for all $i \in [n]$, $\tilde{x}_i = x_i + \DLap(p)$, and $g$ as in \Cref{eq:unbiased-multivariate}. Assuming $f(\tx)$ has well defined expectation and variance,
    \[
        \MSE[g]{x_1, \ldots, x_n} \leq \left( 3 \tfrac{\left(1 + p^2\right)^2 + 2p}{(1-p)^4} \right)^n \MSE[f]{x_1, \ldots, x_n}.
    \]
\end{theorem}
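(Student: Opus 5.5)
Since $g$ is unbiased by \Cref{thm:unbiase-multivariate}, I would work directly with $\MSE[g]{x}=\E{(g(\tx)-f(x))^2}$. The first step is to use that the weights sum to one. Per coordinate, $\alpha_0+2\alpha_{\pm1}=1$, so $\sum_{\xi}\prod_j\alpha_{\xi_j}=1$. Setting $h(z)\coloneqq f(z)-f(x)$ (with $x$ fixed) then gives
\[
g(\tx)-f(x)=\sum_{\xi\in\{-1,0,1\}^n}\Big(\prod_{j}\alpha_{\xi_j}\Big)h(\tx+\xi),
\]
and $\MSE[f]{x}=\E{h(\tx)^2}$. Thus it suffices to bound the second moment of a $3^n$-term linear combination of shifted copies of $h$ by $\E{h(\tx)^2}$.

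Next, I would apply Cauchy--Schwarz with the all-ones vector over the $3^n$ indices:
\[
\Big(\sum_\xi a_\xi h(\tx+\xi)\Big)^2\le 3^n\sum_\xi a_\xi^2\, h(\tx+\xi)^2, \qquad a_\xi=\prod_j\alpha_{\xi_j}.
\]
This step accounts for the factor $3$ per coordinate in the bound. Taking expectations and writing $P(\eta)=\prod_j \frac{1-p}{1+p}p^{|\eta_j|}$ for the noise pmf, I reindex each term by $z=\tx+\xi$:
\[
\MSE[g]{x}\le 3^n\sum_{z\in\Z^n}h(z)^2\sum_{\xi}a_\xi^2\,P(z-x-\xi).
\]
The key observation is that the inner sum factorizes over coordinates, because both $a_\xi^2$ and $P$ are products.

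The main step is therefore a one-dimensional pointwise inequality. For every $i\in\Z$ I want
\[
\alpha_0^2P_1(i)+\alpha_{\pm1}^2\big(P_1(i-1)+P_1(i+1)\big)\le K\,P_1(i), \qquad K=\frac{(1+p^2)^2+2p}{(1-p)^4},
\]
where $P_1$ is the $\DLap(p)$ pmf. I would check this by cases on $i$:
\begin{itemize}
\item For $i\neq 0$, the two neighbours have masses $pP_1(i)$ and $p^{-1}P_1(i)$, so the ratio is $\alpha_0^2+\alpha_{\pm1}^2(p+1/p)$.
\item For $i=0$, the ratio is $\alpha_0^2+2p\,\alpha_{\pm1}^2$, which is smaller than in the first case.
\end{itemize}
Substituting $\alpha_0=\frac{1+p^2}{(1-p)^2}$ and $\alpha_{\pm1}^2=\frac{p^2}{(1-p)^4}$ gives the ratio $\frac{(1+p^2)^2+p+p^3}{(1-p)^4}$, which is at most $K$ since $p^3\le p$.

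Multiplying the $n$ coordinate bounds gives $\sum_\xi a_\xi^2P(z-x-\xi)\le K^nP(z-x)$. Hence $\MSE[g]{x}\le (3K)^n\sum_z h(z)^2P(z-x)=(3K)^n\,\MSE[f]{x}$, which is the claimed bound.

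The only delicate point is justifying the interchange of sums. This is fine because all terms are nonnegative (Tonelli), and the assumed finiteness of $\MSE[f]{x}$ then makes every quantity finite.
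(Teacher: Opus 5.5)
Your proof is correct and follows the paper's route. It uses the same rewriting of $g(\tilde{x})-f(x)$ via $\sum_\xi \prod_j\alpha_{\xi_j}=1$, the same Cauchy--Schwarz step producing $3^n$, and a pointwise comparison of shifted noise masses with the unshifted one. The only difference is that the paper bounds each shift separately via $P(z-x-\xi)\le p^{-\|\xi\|_1}P(z-x)$, giving the per-coordinate factor $\alpha_0^2+2\alpha_{\pm1}^2/p$, which equals $K$ exactly. You instead sum over $\xi$ first and use that at most one neighbour of $i$ is closer to the origin, which gives the marginally sharper factor $\alpha_0^2+(p+1/p)\,\alpha_{\pm1}^2$.
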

To prove the theorem above, we start by showing a lemma that bounds $f((\tilde{K}+\xi)-f(K))^2$ in terms of the MSE of $f$, and then applying the lemma.
The details of the proof are included in Appendix~\ref{app:proof of l2-bound-multivariate}.
This result affirms that $\MSE[g]{x}$ can be bounded $\MSE[f]{x}$, though the factor $O(1+1/\varepsilon^4)^n$ is exponential in terms of $n$, and can also be high with low $\varepsilon$. In the next section, we derive an example where this is asymptotically tight, but also show it is possible for the \emph{other} direction to hold---i.e. for $\MSE[g]{x}$ to be exponentially \emph{lower} than $\MSE[f]{x}$.

\subsection{Exponential MSE Gaps Between Estimators}\label{sec:gaps}

In this section, we first study a family of functions for which the debiasing operator has an especially simple closed form. Let $f_{\gamma}(y) = \gamma^{S(y)}$, with $S(y) = \sum_i y_i$. These functions are useful test cases because their unbiased estimator is obtained by multiplying the naive estimator by a scalar factor $A(\gamma)^n$. This makes the dependence on the dimension n explicit and shows that the exponential factor in the general multivariate bound is unavoidable.

\begin{proposition}
    \label{prop:power-func}
    For $p \in (0,1)$, $\gamma \in (-1/p,-p) \cup (p, 1/p)$, $x \in \Z^n$, and $\tx = x + \DLap(p)$, $f_{\gamma}$ has a well defined expected value $\E{|f_{\gamma}(\tx)|}$ and the unbiased estimator defined in \Cref{eq:unbiased-multivariate} is $g_{\gamma}(y) = A(\gamma)^n f_{\gamma}(y)$ with 
    \[
        A(\gamma) = 1 - \tfrac{p}{(1-p)^2} \tfrac{(1-\gamma)^2}{\gamma}.
    \]
    If additionally $\sqrt{p} < |\gamma| < 1/\sqrt{p}$, then both variances exist and
    \(
        \Var{g_{\gamma}(\tx)} = A(\gamma)^{2n} \Var{f_{\gamma}(\tx)}
    \).
\end{proposition}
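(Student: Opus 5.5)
The plan has three steps: establish integrability through the moment generating function, compute $g_\gamma$ from the product structure, and deduce the variance identity by scaling.

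\textbf{Integrability.} Since $f_\gamma(\tx)=\gamma^{S(x)}\prod_i \gamma^{\eta_i}$ with the $\eta_i$ i.i.d.\ $\DLap(p)$, it suffices to show $\E{|\gamma|^{\eta}}<\infty$ for a single $\eta\sim\DLap(p)$. We have
\[
\E{|\gamma|^{\eta}}=\tfrac{1-p}{1+p}\Big(\sum_{k\ge0}(p|\gamma|)^k+\sum_{k\ge1}(p/|\gamma|)^k\Big).
\]
Both geometric series converge exactly when $p<|\gamma|<1/p$, which is the hypothesis. Independence then gives $\E{|f_\gamma(\tx)|}=|\gamma|^{S(x)}\,\E{|\gamma|^{\eta}}^n<\infty$, so \Cref{thm:unbiase-multivariate} applies.

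\textbf{The closed form.} We plug $f_\gamma$ into \Cref{eq:unbiased-multivariate}. Because $f_\gamma(y+\xi)=f_\gamma(y)\prod_j\gamma^{\xi_j}$, the sum over $\xi\in\{-1,0,1\}^n$ factorizes coordinatewise:
\[
g_\gamma(y)=f_\gamma(y)\prod_{j=1}^n\Big(\sum_{\xi_j\in\{-1,0,1\}}\alpha_{\xi_j}\gamma^{\xi_j}\Big).
\]
The one-dimensional factor equals
\[
1+\tfrac{2p}{(1-p)^2}-\tfrac{p}{(1-p)^2}(\gamma+\gamma^{-1})=1-\tfrac{p}{(1-p)^2}\cdot\tfrac{\gamma^2-2\gamma+1}{\gamma}=A(\gamma),
\]
so $g_\gamma=A(\gamma)^n f_\gamma$. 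The same factorization shows $\E{f_\gamma(\tx)}=\gamma^{S(x)}M(\gamma)^n$, where $M(\gamma)=\E{\gamma^\eta}$. Unbiasedness then forces $A(\gamma)M(\gamma)=1$, which gives a consistency check but is not needed for the argument.

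\textbf{Variance.} Since $g_\gamma(\tx)$ is a deterministic scalar multiple of $f_\gamma(\tx)$, we immediately get $\Var{g_\gamma(\tx)}=A(\gamma)^{2n}\Var{f_\gamma(\tx)}$, provided the variance exists. Existence reduces to $\E{f_\gamma(\tx)^2}=\gamma^{2S(x)}\E{(\gamma^2)^{\eta}}^n<\infty$. By the first step applied to $\gamma^2$, this holds iff $p<\gamma^2<1/p$, that is, $\sqrt p<|\gamma|<1/\sqrt p$.

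There is no real obstacle. The only care needed is handling negative $\gamma$. For $\gamma<0$ the quantity $\gamma^{\eta}$ alternates in sign, so convergence must be checked through absolute values, which the first step does. After that, all series manipulations are justified by absolute convergence.
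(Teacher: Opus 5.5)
Your proposal is correct and follows essentially the same route as the paper. The paper also gets integrability from the two geometric series extended by independence, and reduces existence of the variance to integrability of $f_{\gamma^2}$. Where you factorize the sum over $\xi$ coordinatewise, the paper writes the estimator as $g=\prod_i (I-c\Delta_i)f$ and notes that $f_\gamma$ is an eigenfunction of each $\Delta_i$ with eigenvalue $(1-\gamma)^2/\gamma$, which is the same computation in operator notation.
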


This result follows by simple computation. The full proof can be found in Appendix~\ref{app:proof-power-func}.

Note that for $\gamma < 0$ we have $A(\gamma) > 1$, and for $\gamma > 0$ we have $0 < A(\gamma) < 1$.
This means that unbiasing can both increase or reduce the variance of the estimator depending on the function:

\begin{corollary}
    For every $p \in (0,1)$, there are functions for which debiasing increases the MSE exponentially in $n$, and functions for which it decreases the MSE exponentially in $n$.
\end{corollary}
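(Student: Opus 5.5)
The plan is to derive the corollary from \Cref{prop:power-func} by choosing one positive and one negative value of $\gamma$ in the range where both variances exist, i.e.\ $\sqrt{p} < |\gamma| < 1/\sqrt{p}$, and comparing mean square errors.

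First, I will connect MSE to variance. By \Cref{thm:unbiase-multivariate}, $g_\gamma(\tx)$ is unbiased, so $\MSE[g_\gamma]{x} = \Var{g_\gamma(\tx)}$. For the naive estimator, $\MSE[f_\gamma]{x} = \Var{f_\gamma(\tx)} + (\E{f_\gamma(\tx)} - f_\gamma(x))^2 \ge \Var{f_\gamma(\tx)}$. \Cref{prop:power-func} then gives $\Var{g_\gamma(\tx)} = A(\gamma)^{2n}\Var{f_\gamma(\tx)}$.

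\emph{Increase.} Take $\gamma = -1$, which satisfies $\sqrt p<1<1/\sqrt p$. Then $A(-1) = 1 + \tfrac{4p}{(1-p)^2} > 1$. I will compute the bias term exactly: by independence, $\E{f_\gamma(\tx)} = f_\gamma(x)\,m(\gamma)^n$ with $m(\gamma)=\E{\gamma^{\eta}}$ for $\eta\sim\DLap(p)$. A geometric series gives $m(\gamma) = \tfrac{(1-p)^2}{(1-p\gamma)(1-p/\gamma)}$, and $m(\gamma)=1/A(\gamma)$ is forced by unbiasedness of $g_\gamma$. Similarly $\E{f_\gamma(\tx)^2} = \gamma^{2S(x)} m(\gamma^2)^n$. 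Hence
\[
\MSE[f_\gamma]{x} = \gamma^{2S(x)}\bigl(m(\gamma^2)^n - 2m(\gamma)^n + 1\bigr), \qquad \MSE[g_\gamma]{x} = \gamma^{2S(x)}\bigl(A(\gamma)^{2n}m(\gamma^2)^n - 1\bigr).
\]
For $\gamma=-1$ we have $m(1)=1$, so $\MSE[f]{x} = 2-2m(-1)^n \le 4$, while $\MSE[g]{x} = A(-1)^{2n}-1$ grows exponentially in $n$. So debiasing increases the MSE exponentially.

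\emph{Decrease.} Take $\gamma>0$ in $(\sqrt p,1/\sqrt p)$, $\gamma\ne1$, so that $0<A(\gamma)<1$, equivalently $m(\gamma)>1$. I need $\MSE[g]{x}/\MSE[f]{x}$ to decay exponentially. Write $b=m(\gamma^2)$ and $a=A(\gamma)^2 b = b/m(\gamma)^2$. Then
\[
\frac{\MSE[g]{x}}{\MSE[f]{x}} = \frac{a^n - 1}{b^n - 2m(\gamma)^n + 1}.
\]
Since $m(\gamma)^2 \le m(\gamma^2)$ by Jensen (strict because $\eta$ is nondegenerate), we have $b > m(\gamma)^2>m(\gamma)>1$. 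The denominator is therefore $\Theta(b^n)$, and the ratio is at most roughly $(a/b)^n = A(\gamma)^{2n}$, which decays exponentially.

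The main obstacle I expect is making sure the decrease direction is a genuine exponential gap rather than a comparison with a possibly degenerate denominator. Concretely, I need $b>m(\gamma)$ so that $b^n$ dominates $2m(\gamma)^n$; Jensen handles this. I also need $a>1$ so the numerator is nonnegative, which holds automatically since it equals a variance. If some $\gamma$ made $a\le1$, the numerator would be zero or small, which only helps. Finally, I will verify that the chosen $\gamma$ keeps $\gamma^2\in(p,1/p)$ so that $m(\gamma^2)$ is finite; this is exactly the hypothesis $\sqrt p<|\gamma|<1/\sqrt p$.
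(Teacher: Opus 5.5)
Your proposal is correct and follows the paper's proof. It uses the same two choices ($\gamma=-1$ for the increase, and a positive $\gamma\neq 1$ in $(\sqrt p,1/\sqrt p)$ for the decrease), combined with \Cref{prop:power-func} and the facts $\MSE[g_\gamma]{x}=\Var{g_\gamma(\tx)}$ and $\MSE[f_\gamma]{x}\ge\Var{f_\gamma(\tx)}$; your exact formula $\MSE[f_{-1}]{x}=2-2m(-1)^n$ is a sharper form of the paper's ``bias at most 2'' remark. The Jensen/asymptotic analysis in the decrease direction is unnecessary, because your first paragraph already gives $\MSE[g_\gamma]{x}=A(\gamma)^{2n}\Var{f_\gamma(\tx)}\le A(\gamma)^{2n}\MSE[f_\gamma]{x}$ exactly for every $n$, which is how the paper concludes.
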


\begin{proof}
    First we notice that the MSE is equal to the variance for the unbiased estimator, but larger for the naive estimator.
    Proposition~\ref{prop:power-func} with $\gamma$ such that $\sqrt{p} < \gamma < 1/\sqrt{p}$, $\gamma \neq 1$ gives an exponential decrease in variance for the unbiased estimator, and thus at least an exponential decrease in the MSE.
    
    For the exponential increase, we set $\gamma$ to $-1$. In this case, Proposition~\ref{prop:power-func} proves that the variance of the unbiased estimator is exponentially larger than the one of the naive estimator. Additionally, the value of the naive estimator is in $\{-1, 1\}$ and thus its bias is bounded by 2. This proves that the MSE of the unbiased estimator is also exponentially larger than the one of the naive estimator.
\end{proof}

\paragraph{A Monotone Function with Exponential Variance Growth.}
As a second example we show that exponential variance growth can occur even for a monotone indicator function where the naive estimator has constant MSE.
To simplify calculations we focus on the regime $\varepsilon \le 1$.

\begin{proposition}\label{prop:threshold-exponential-lower}
Let $n$ be even and $f(x)=\mathbf{1}\{\sum_i x_i>0\}$. Let $g$ be from \Cref{thm:unbiase-multivariate}. If $p=e^{-\varepsilon}$ with $\varepsilon\le 1$, then for every $x$ with $\sum_i x_i=0$, $\Var{g(\tilde x)}=\exp(\Omega(n))$.
\end{proposition}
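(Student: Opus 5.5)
The plan is to reduce $g$ to a one-dimensional function of $S(\tilde x)=\sum_i \tilde x_i$. Then I show that this function has exponentially large absolute value on a window of width about $\sqrt n$ around $0$, and that $S(\tilde x)$ falls in that window with non-negligible probability.

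\textbf{Step 1 (collapse to one dimension).} Since $f(y+\xi)=\mathbf 1\{S(y)+\sum_j\xi_j>0\}$ depends only on $\sum_j \xi_j$, grouping the $3^n$ terms of \Cref{eq:unbiased-multivariate} by $k=\sum_j\xi_j$ gives $g(y)=h(S(y))$ with
\[
h(s)=\sum_{k>-s} c_k ,
\]
where $c_k$ is the coefficient of $z^k$ in $P(z)^n$. Here $P(z)=\alpha_0+\alpha_{1}(z+z^{-1})=1+b(2-z-z^{-1})$ and $b=p/(1-p)^2$. Substituting $z\mapsto -z$ gives $P(-z)=1+2b+b(z+z^{-1})$, which has nonnegative coefficients. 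Hence $c_k=(-1)^k a_k$, where $a_k\ge 0$ are the coefficients of $Q(z)=(1+2b+b(z+z^{-1}))^n$ and $\sum_k a_k=(1+4b)^n$. The sequence $(a_k)$ is symmetric and unimodal: it is the law of a sum of i.i.d.\ symmetric unimodal (log-concave) variables, scaled by $(1+4b)^n$. Because $\sum_i x_i=0$, unbiasedness (\Cref{thm:unbiase-multivariate}) gives $\E{g(\tilde x)}=f(x)=0$, so $\Var{g(\tilde x)}=\E{h(S(\tilde x))^2}$.

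\textbf{Step 2 (lower bound on $|h|$ via alternation).} Let $T_m=\sum_{k\ge m}(-1)^k a_k$, so that $h(s)=T_{1-s}$. For $m\ge 0$, $(a_k)_{k\ge m}$ is nonincreasing, so $T_{m+1}$ has the sign of $(-1)^{m+1}$ or vanishes. The identity $T_m=(-1)^m a_m+T_{m+1}$ then yields $|T_m|+|T_{m+1}|=a_m$. For $m<0$ I would use symmetry, writing $T_m=(\pm)\,a\text{-total}-T'$ with a complementary tail; the parity assumption on $n$ should enter exactly here, through the sign of the full alternating sum $P(1)^n=1$, and make the argument clean. The upshot is that among any two consecutive $s$, at least one satisfies $|h(s)|\ge a_{1-s}/2$. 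A local limit theorem for $Q$ then gives $a_m\ge c\,(1+4b)^n/\sqrt{nb}$ for $|m|\le \sqrt{n}$ and $n$ large. The variance per step of the underlying variable is $2b/(1+4b)=\Theta(1)$, so the spread is of order $\sqrt n$.

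\textbf{Step 3 (probability of the window).} $S(\tilde x)$ is a sum of $n$ i.i.d.\ $\DLap(p)$ variables, each with variance $2p/(1-p)^2=\Theta(1/\varepsilon^2)$. A local CLT, or simply symmetry plus unimodality plus Chebyshev, gives $\Pr[S(\tilde x)=s]\ge c'/(\sqrt n/\varepsilon)$ uniformly for $|s|\le\sqrt n$. Pairing consecutive values of $s$ and keeping the larger of $h(s)^2, h(s+1)^2$ in each pair, I get
\[
\E{h(S)^2}\ \ge\ \Omega(\sqrt n)\cdot \frac{c'\varepsilon}{\sqrt n}\cdot \frac{c^2(1+4b)^{2n}}{4nb}\ =\ (1+4b)^{2n}/\mathrm{poly}(n,1/\varepsilon).
\]
Since $\varepsilon\le1$, we have $b\ge e^{-1}/(1-e^{-1})^2$, so $1+4b$ is bounded away from $1$ by an absolute constant. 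Moreover $b=\Theta(1/\varepsilon^2)$, so the polynomial loss in $1/\varepsilon$ is absorbed into the exponential (e.g.\ $(1+4b)^{2n}/b\ge (1+4b)^{2n-1}$). This gives $\exp(\Omega(n))$.

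\textbf{Main obstacle.} The main obstacle is Step 2 together with the uniform local-limit lower bound on $a_m$. The signs of the tail sums for $m<0$, where $(a_k)$ is increasing rather than decreasing, need care. I would handle them by expressing $h(s)$ for $s>1$ through the complementary sum $\sum_{k\le -s}c_k$ using $\sum_k c_k=P(1)^n=1$, and then applying the same alternation argument to the left tail. If the local CLT is awkward to make uniform in $b$, a fallback is to use only log-concavity of $(a_k)$ and its variance $\Theta(n)$. These two facts already force $a_m\ge (1+4b)^n/O(\sqrt n)$ on a window of width $\Theta(\sqrt n)$.
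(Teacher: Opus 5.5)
Your proof is correct, but it takes a different and heavier route than the paper's.

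\textbf{The paper's route.} The paper conditions on the single event $\sum_i\tilde x_i=0$. There $g(\tilde x)=\sum_{k>0}\beta_k$, where $\beta_k$ is exactly your $c_k$. Symmetry $\beta_k=\beta_{-k}$ and $\sum_k\beta_k=1$ give the closed form $g(\tilde x)=(1-\beta_0)/2$. It then bounds $\beta_0\ge\binom{n}{n/2}c^n\ge(2c)^n/(n+1)$: every term of $\beta_0$ is nonnegative, so one may keep only the balanced $\xi$. This is the only place $n$ even is used. Multiplying by $\Pr[\sum_i\eta_i=0]=\Theta(n^{-1/2})$ finishes.

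\textbf{Your route.} Your sign flip $c_k=(-1)^k a_k$ and the identity $|T_m|+|T_{m+1}|=a_m$ for $m\ge 0$ are valid. At $m=0$ they already contain the paper's argument: $T_0+T_1=1$ and $T_0=a_0+T_1$ give $h(0)=T_1=(1-a_0)/2$. So several pieces of your argument are dispensable:
\begin{itemize}
\item the $\sqrt n$-window and the pairing,
\item the complementary-tail case $s>1$,
\item the uniform local-limit bounds on $a_m$.
\end{itemize}
Using one point costs only a polynomial factor, which $\exp(\Omega(n))$ absorbs. What your approach buys:
\begin{itemize}
\item a sharper base, $(1+4b)^{2n}$ versus $(2c)^{2n}$;
\item independence from the parity of $n$;
\item explicit tracking of the $\varepsilon$-dependence of the point probabilities (order $\varepsilon/\sqrt n$). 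The paper's $\Theta(n^{-1/2})$ silently suppresses this, though it is absorbed anyway since $c=\Theta(\varepsilon^{-2})$.
\end{itemize}

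\textbf{Two small slips.}
\begin{itemize}
\item Parity plays no role in your argument: $P(1)^n=1$ and $c_0=a_0\ge 0$ for every $n$. So it does not ``enter exactly here.''
\item Symmetry, unimodality and Chebyshev alone lower-bound $\Pr[S=s]$ only at the mode. A symmetric unimodal law with variance $\sigma^2$ can put $o(1/\sigma)$ mass at $s=1$. For a uniform bound on the window you need one of the following:
  \begin{itemize}
  \item log-concavity of $\sum_i\eta_i$, which holds because $\mathrm{DLap}(p)$ is log-concave;
  \item a local CLT;
  \item or simply the pair $s\in\{0,1\}$, together with $\Pr[S=1]\ge p\,\Pr[S=0]$. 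This follows from $p^{|i+1|}\ge p\cdot p^{|i|}$.
  \end{itemize}
\end{itemize}
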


The proof of Proposition~\ref{prop:threshold-exponential-lower} can be found in Appendix~\ref{app:threshold-exponential-lower}. The idea is to show that for an all-zeros input the variance is lower bounded by the event that zero-sum noise is added by the Discrete Laplace mechanism, which happens with probability $\Omega(1/\sqrt{n})$, in which case the estimator has value $\exp(\Omega(n))$.

\section{Postprocessing Discrete Laplace Noise into Laplace or Staircase Noise}\label{sec:tranformation}

We show that discrete Laplace noise can be converted, by input-independent postprocessing, into either continuous Laplace noise or staircase noise at the same privacy level. The constructions are local to each unit interval and work by adding an independent random variable supported on $[-1,1]$.
It suffices to consider the univariate setting, since the multivariate setting just repeats the same method to each coordinate.

Fix $\varepsilon>0$ and write $p=e^{-\varepsilon}$. Let $\eta \sim \DLap(p)$, so that $\tilde{x} = x + \eta$ is the discrete Laplace mechanism at privacy level $\varepsilon$ for unit sensitivity. Since all transformations below are postprocessings of $\eta$, they preserve $\varepsilon$-DP automatically.
Both our results use the following lemma.

\begin{lemma}\label{lem:cell-convolution}
    Let $q$ be a probability distribution with support contained in $[-1,1]$, symmetric around zero, and let $Y\sim q$ be independent of $\eta$.
    For $z \in \mathbb R$, write $|z|=n+u$ with $n \in \mathbb Z_{\ge 0}$ and $u\in[0,1)$. Then $\eta+Y$ has density function
    \[
        f_{\eta+Y}(z)=\tfrac{1-p}{1+p}\,p^n\bigl(q(u)+p\,q(u-1)\bigr) \enspace .
    \]
\end{lemma}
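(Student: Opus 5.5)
The proof is a direct convolution computation. Since $\eta$ is integer-valued and independent of $Y$, the sum $\eta+Y$ has density
\[
    f_{\eta+Y}(z)=\sum_{k\in\Z}\Pr[\eta=k]\,q(z-k)=\tfrac{1-p}{1+p}\sum_{k\in\Z}p^{|k|}\,q(z-k).
\]
Because $q$ is supported on $[-1,1]$, only the integers $k$ with $|z-k|\le 1$ contribute. For non-integer $z$ these are just $k=\lfloor z\rfloor$ and $k=\lceil z\rceil$. The formula therefore reduces to identifying these two terms and rewriting them in terms of $n$ and $u$.

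First I reduce to $z\ge 0$. The density of $\eta$ is symmetric, and $q$ is symmetric. So $\eta+Y$ has the same law as $-\eta-Y$, and hence $f_{\eta+Y}(z)=f_{\eta+Y}(-z)$. It is then enough to prove the formula for $z=n+u\ge 0$, with $n\in\Z_{\ge 0}$ and $u\in[0,1)$.

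For $u\in(0,1)$, the contributing indices are $k=n$ and $k=n+1$, both nonnegative. They give the terms $p^{n}q(u)$ and $p^{n+1}q(u-1)$, so
\[
    f_{\eta+Y}(z)=\tfrac{1-p}{1+p}\,p^n\bigl(q(u)+p\,q(u-1)\bigr),
\]
as claimed.

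The only real obstacle is the boundary case $u=0$, i.e.\ integer $z$. There the indices $k=n-1,n,n+1$ can all contribute, through the values $q(\pm1)$ and $q(0)$. Also, for $n=0$ the index $k=-1$ has weight $p^{1}$, not $p^{-1}$. However, the set of points with $u=0$ has Lebesgue measure zero, and a density is defined only up to a.e.\ equality. I therefore plan to state that the formula holds for almost every $z$, which is all that is needed to identify the distribution of $\eta+Y$.

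If $q$ has atoms at $\pm1$, or if one wants a pointwise statement, I would instead fix a convention such as $q(\pm1)=0$. Alternatively, one can read $q$ as having support in the half-open interval. Under either convention the extra terms vanish and the displayed formula holds for every $z$.

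Finally, I will sanity-check that the density integrates to $1$:
\[
    \int_{\R} f_{\eta+Y}(z)\,dz
    =\tfrac{1-p}{1+p}\Bigl(1+2\sum_{n\ge 1}p^n\Bigr)
    =\tfrac{1-p}{1+p}\cdot\tfrac{1+p}{1-p}=1.
\]
This follows directly from the convolution identity above, since each shifted copy $q(\cdot-k)$ integrates to $1$. The check confirms that no terms were lost or double-counted.
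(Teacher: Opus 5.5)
Your proposal is correct and follows essentially the same route as the paper: write the density as a convolution with the pmf of $\eta$, observe that for $z=n+u\ge 0$ only the atoms $k=n$ and $k=n+1$ contribute, and handle $z<0$ by symmetry. Your explicit treatment of the integer points $u=0$, where $k=n-1$ can also contribute through $q(1)$, is a refinement the paper silently skips, and reading the identity almost everywhere is the right fix. One small correction: if $q$ had atoms at $\pm 1$, then $\eta+Y$ would have atoms at the integers and so no density at all, so that case calls for excluding it rather than for a convention $q(\pm1)=0$.
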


\begin{proof}
    First suppose $z\geq 0$.
    Since $q$ is supported on $[-1,1]$, only the atoms $n$ and $n+1$ can contribute to $z=n+u$. Thus
    \[
        f_{\eta+Y}(n+u) = \Pr[\eta=n]\,q(u)+\Pr[\eta=n+1]\,q(u-1) = \tfrac{1-p}{1+p}\,p^n\bigl(q(u)+p\,q(u-1)\bigr) \enspace .
    \]
    Here we use that $q(u-1)=q(1-u)$ since $q$ is symmetric around zero.
    For $z < 0$, by symmetry the same formula with $-z = n + u$ determines the density.
\end{proof}

Now we show a method for post-processing from the discrete Laplace mechanism to the continuous Laplace mechanism while preserving the same privacy level $\varepsilon = \ln(1/p)$. 

\begin{theorem}\label{thm:laplace}
    Let $Y$ be independent of $\eta$ with density
    \[
        q_{\mathrm{Lap}}(y) = 
        \begin{cases}
        \tfrac{\ln(1/p)}{2(1-p)^2}\Bigl(p^{|y|}-p^{2-|y|}\Bigr) & \text{for } |y|\le 1\\
        0 & \text{otherwise}
        \end{cases} \enspace .
    \]
    Then $\eta+Y$ has the Laplace density
    \(
        f(z)=\frac{\ln(1/p)}{2} p^{|z|}.
    \)
\end{theorem}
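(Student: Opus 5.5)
The plan is to apply \Cref{lem:cell-convolution} with $q = q_{\mathrm{Lap}}$ and check that the resulting density equals $\frac{\ln(1/p)}{2}p^{|z|}$. The lemma needs $q_{\mathrm{Lap}}$ to be a symmetric probability density supported on $[-1,1]$. Symmetry and support are immediate from the definition, since it depends only on $|y|$ and vanishes outside $[-1,1]$. For nonnegativity, note that for $|y|\le 1$ we have $|y|\le 2-|y|$, so $p^{|y|}\ge p^{2-|y|}$ because $p\in(0,1)$.

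Next I verify normalization. Writing $c=\ln(1/p)$, we have $\int_0^1 p^{y}\,dy=(1-p)/c$ and $\int_0^1 p^{2-y}\,dy=p^2\cdot (p^{-1}-1)/c=(p-p^2)/c$. Hence
\[
\int_{-1}^1 \bigl(p^{|y|}-p^{2-|y|}\bigr)\,dy = \tfrac{2}{c}\bigl((1-p)-(p-p^2)\bigr)=\tfrac{2(1-p)^2}{c}.
\]
Multiplying by the prefactor $\frac{c}{2(1-p)^2}$ gives total mass $1$.

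Then comes the core computation. For $|z|=n+u$ with $u\in[0,1)$, the lemma gives $f_{\eta+Y}(z)=\frac{1-p}{1+p}p^n\bigl(q(u)+p\,q(u-1)\bigr)$, and we have $|u-1|=1-u$. Expanding with the prefactor $\frac{c}{2(1-p)^2}$:
\[
q(u)+p\,q(u-1)=\tfrac{c}{2(1-p)^2}\bigl(p^{u}-p^{2-u}+p\cdot p^{1-u}-p\cdot p^{1+u}\bigr).
\]
The terms $-p^{2-u}$ and $+p^{2-u}$ cancel, leaving
\[
\tfrac{c}{2(1-p)^2}\,p^u(1-p^2)=\tfrac{c}{2}\,p^u\,\tfrac{1+p}{1-p}.
\]
Multiplying by $\frac{1-p}{1+p}p^n$ yields $\frac{c}{2}p^{n+u}=\frac{\ln(1/p)}{2}p^{|z|}$, which is the Laplace density.

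The only real obstacle is this cancellation step. It is what motivates the particular form $p^{|y|}-p^{2-|y|}$: the subtracted term is chosen so that the contributions from adjacent atoms $n$ and $n+1$ combine into a pure exponential in $u$. Measure-zero boundary points such as $u=0$ or $|y|=1$ do not affect the density.
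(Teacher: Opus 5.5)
Your proposal is correct and matches the paper's proof: apply \Cref{lem:cell-convolution}, and use the cancellation of the $p^{2-u}$ terms to get $q_{\mathrm{Lap}}(u)+p\,q_{\mathrm{Lap}}(1-u)=\frac{(1+p)\ln(1/p)}{2(1-p)}\,p^u$. Your explicit checks of nonnegativity and normalization fill in what the paper dismisses as ``easy to verify.''
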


\begin{proof}
    It is easy to verify that $q_{\mathrm{Lap}}$ is a density function, is symmetric around zero, and has support $[-1,1]$.
    Fix $z$ and write $|z|=n+u$ with $n \ge 0$ and $u \in[0,1)$. 
    We have
    \[
        q_{\mathrm{Lap}}(u)+p\,q_{\mathrm{Lap}}(1-u) 
            = \tfrac{\ln(1/p)}{2(1-p)^2} \Bigl((p^u-p^{2-u})+p(p^{1-u}-p^{1+u})\Bigr) 
            = \tfrac{(1+p)\ln(1/p)}{2(1-p)}\,p^u \enspace .
    \]
    By Lemma~\ref{lem:cell-convolution} and using $q_{\mathrm{Lap}}(u-1)=q_{\mathrm{Lap}}(1-u)$:
    \[
        f_{\eta+Y}(z) 
        = \tfrac{1-p}{1+p}p^n\bigl(q_{\mathrm{Lap}}(u)+p\,q_{\mathrm{Lap}}(u-1)\bigr)
            = \tfrac{1-p}{1+p}p^n\cdot \tfrac{(1+p)\ln(1/p)}{2(1-p)}\,p^u
            = \tfrac{\ln(1/p)}{2}p^{|z|} \enspace. \qedhere
    \]
\end{proof}

\Cref{{fig:dlap-to-laplace}} illustrates the transformation.
We defer the method and the accompanying proofs for post-processing to the Staircase mechanism to Appendix~\ref{app:starcase-transformation}.

\paragraph{Discussion.} We have shown in this section that both the continuous Laplace mechanism and the Staircase mechanism can be simulated by post-processing a draw from the discrete Laplace mechanism. Consequently, any unbiased estimator for either of these mechanisms induces an unbiased estimator based on discrete Laplace output. For example, if $g$ is an unbiased estimator for the continuous Laplace mechanism and $Y$ is independent post-processing noise from \Cref{thm:laplace}, then $y \mapsto g(y+Y)$, evaluated on discrete Laplace output, has the same distribution as $g$ evaluated on continuous Laplace output. The same argument applies to the Staircase mechanism. By \Cref{thm:uniqueness}, the discrete Laplace unbiased estimator has minimum variance among all unbiased estimators, including randomized ones. Hence, no unbiased estimator based on the continuous Laplace or Staircase mechanism can have smaller variance than the corresponding unbiased estimator based directly on the discrete Laplace mechanism. This fact, along with the possibility to transform the output into a continuous Laplace or Staircase distribution if necessary, means that the discrete Laplace mechanism should be preferred over the Laplace and Staircase mechanisms whenever possible.

\section{Unbiased Estimators for Specific Multivariate Functions}\label{sec:faster}

The computational bottleneck of Theorem~\ref{thm:unbiase-multivariate} is due to the exponential size of the set $B_y = \{y + \xi : \xi \in \{-1, 0, 1\}^n\}$ on which $f$ needs to be evaluated. 
Fortunately, many common functions of interest exhibit structure which makes it tractable to compute $g(y)$. In this section, we give polynomial time algorithms for unbiasing the min, max, general order statistics, a decision tree evaluation, the entropy function, and multivariate polynomials. These functions all satisfy the condition of \Cref{thm:unbiase-multivariate} because they exhibit polynomial growth (formalized in Lemma~\ref{lem:poly}). These results are unified by a general technique which can give efficient unbiased estimators for any function $f$ that is \emph{locally decomposable} as a sum of polynomially many ``easy'' functions on the domain $B_y$, for any $y \in \Z^n$.
Easy functions refer to a basis class of functions where it is possible to evaluate~\eqref{eq:unbiased-multivariate} efficiently. Due to space constraints, we provide the details of this in Appendix~\ref{app:faster}.

\noindent \textbf{Remarks on Running Time and Error.} We measure running time in the real-RAM arithmetic model; specifically, we assume that the input $y$ is given as a sequence of $n$ integers, we can evaluate $f$ at any point in constant time (producing a real number), and we can do real-valued arithmetic in constant time in addition to the standard operations of a Turing machine. In preliminary experiments, we observed that the unbiased estimators in this section can have high error that is possibly much higher than that of the naive estimator (consistent with the gap in Theorem~\ref{thm:l2-bound-multivariate}). Thus, the results in this section are primarily of theoretical interest, and further work is necessary to demonstrate their practical performance.

\paragraph{Order Statistics}
For an integer $1 \leq i \leq n$, the $i$th order statistic of an input $x_1, \ldots, x_n$ is the $i$th element in the sorted input. This encapsulates the minimum, median, and maximum. The unbiased estimators for order statistics have the following running times:

\begin{theorem} (Informal statement of Theorems~\ref{thm:unbiased-ex}, \ref{thm:unbiased-ord}).
    The estimator~(\ref{eq:unbiased-multivariate}) for min and max can be computed in $O(n)$ time. For the $i$th order statistic (e.g., median) it can be computed in $O(n^2)$ time.
\end{theorem}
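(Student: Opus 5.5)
We plan to group the $3^n$ terms of \eqref{eq:unbiased-multivariate} by the value that $f$ takes on them. The weights factor as $\prod_j \alpha_{\xi_j}$, and each $\xi_j$ only moves coordinate $j$ within the three-element set $\{y_j-1,y_j,y_j+1\}$. So the sum equals $\E{f(y+Z)}$ under a signed product measure: $Z_j$ takes the value $0$ with weight $a:=\alpha_0$ and each of $\pm 1$ with weight $b:=\alpha_{\pm1}$. Note $a+2b=1$. For any monotone order statistic $f$, we will write
\[
g(y)=\sum_{t} t\,\bigl(W(f\le t)-W(f\le t^-)\bigr),
\]
where $t$ ranges over the at most $3n$ distinct values $\{y_j-1,y_j,y_j+1\}$. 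Here $W(E)$ is the total signed weight of those $\xi$ for which $E$ holds. Summation by parts avoids differencing: $g(y)=t_{\max}-\sum_{t<t_{\max}} (t^+-t)\,W(f\le t)$, with consecutive sorted candidate values $t<t^+$.

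For the minimum, $W(\min\le t)=1-\prod_j W_j(>t)$, and $W_j(>t)$ is the weight of those moves of coordinate $j$ with $y_j+\xi_j>t$. This weight is $0$, $b$, $a+b$, or $1$, depending on where $t$ lies relative to $y_j$. We will first sort the $3n$ candidate thresholds; the paper may allow radix sort, or may accept $O(n\log n)$. Then we sweep $t$ upward. Each coordinate changes its factor only three times, so we maintain the product incrementally. Zero factors need care: we keep a count of the zero factors and the product of the nonzero ones, which also avoids dividing by zero (e.g.\ if $b=0$ can't occur, but $a+b$ could vanish for particular $p$). This gives $O(n)$ updates. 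The maximum is symmetric.

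For the $i$th order statistic, the event $f\le t$ means that at least $i$ coordinates satisfy $y_j+\xi_j\le t$. For fixed $t$, coordinate $j$ is ``low'' with weight $w_j(t)$ and ``high'' with weight $1-w_j(t)$. So $W(f\le t)$ is a tail of the Poisson-binomial-type generating polynomial $\prod_j\bigl((1-w_j)+w_j z\bigr)$, with signed weights. Recomputing it per threshold costs $O(n^2)$ each, which is $O(n^3)$ in total. To reach $O(n^2)$ we will sweep thresholds again. When a single coordinate's weight changes, we update the polynomial by dividing out one linear factor (synthetic division, $O(n)$) and multiplying in the new one ($O(n)$). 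This handles $3n$ events at $O(n)$ each, giving $O(n^2)$ overall.

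The main obstacle is numerical and algebraic degeneracy in the division step: dividing by $(1-w)+wz$ fails when its leading or constant coefficient is $0$, e.g.\ at $w\in\{0,1\}$. Our first fix is to choose the direction of the division, either from the top or from the bottom coefficient, according to which coefficient is nonzero. If that is awkward, we will instead use a divide-and-conquer over coordinates, in the style of a segment tree of polynomial products, which stays within $O(n^2)$ total work without any division. Correctness of all these steps follows directly from \Cref{thm:unbiase-multivariate} together with the grouping identity.
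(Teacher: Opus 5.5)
\textbf{The min/max part has a gap: your plan proves $O(n\log n)$, not the claimed $O(n)$.} Your sweep needs the $3n$ candidate values in sorted order. Comparison sorting costs $\Theta(n\log n)$, and radix or counting sort is linear only when the range of the $y_j$ is bounded, which the model (arbitrary integer inputs) does not assume.

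The idea you are missing is the heart of the paper's proof. Let $k=\min(y)$. Then every point of $y+\{-1,0,1\}^n$ has minimum in $\{k-1,k,k+1\}$: all coordinates are at least $k-1$, and any coordinate with $y_j=k$ is at most $k+1$. Hence $W(\min\le t)=0$ for $t<k-1$ and $W(\min\le t)=1$ for $t\ge k+1$, and your summation by parts collapses to
\[
g(y)=k+1-W(\min\le k-1)-W(\min\le k).
\]
Here $W(\min\le k-1)=1-(\alpha_0+\alpha_1)^{|A|}$ and $W(\min\le k)=1-\alpha_1^{|A|}(\alpha_0+\alpha_1)^{|B|}$, where $A$ and $B$ are the sets of coordinates equal to $k$ and $k+1$. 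One pass to find $k$, $|A|$ and $|B|$ suffices, with no sorting. This gives exactly the paper's formula $k-1+(\alpha_0+\alpha_1)^{|A|}+\alpha_1^{|A|}(\alpha_0+\alpha_1)^{|B|}$, and the max is symmetric. Incidentally, your worry that $a+b$ might vanish is unfounded: $a+b=1+p/(1-p)^2>0$ and $b=-p/(1-p)^2\neq 0$. So the only zero factors are the genuine ones, with $t\ge y_j+1$.

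\textbf{Your order-statistic argument is correct and gives $O(n^2)$; the $O(n\log n)$ sort is dominated.} $W(f\le t)$ is indeed the tail $\sum_{m\ge i}[z^m]\prod_j((1-w_j)+w_jz)$. The synthetic division never degenerates: factors with $w_j\in\{0,1\}$ are just $1$ and $z$ (a shift), while for $w_j\in\{b,a+b\}$ both coefficients are nonzero, since $1-b>0$ and $1-(a+b)=b\neq 0$. So you do not need the segment-tree fallback. As described, that fallback would re-multiply degree-$\Theta(n)$ polynomials near the root at $\Theta(n^2)$ schoolbook cost per point update, i.e.\ $O(n^3)$ overall.

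Your route differs from the paper's. The paper again uses the three-value observation, now with $k=y_{(i)}$. It writes the level sets $\{f=k-1\}$ and $\{f=k+1\}$ as disjoint unions of $O(n^2)$ sets $S^+_{A,a}\times S^-_{B,b}\times\{-1,0,1\}^{[n]\setminus A\setminus B}$ with closed-form binomial volumes. In your language, only $t=k-1$ and $t=k$ matter, and at those thresholds only coordinates with $y_j\in\{t,t+1\}$ give nontrivial factors. So $W(f\le t)$ is a tail of $z^{L}(1-b+bz)^{m_1}(b+(a+b)z)^{m_0}$, with no sweep, sorting or division needed. Your generic CDF-and-sweep machinery is more broadly applicable. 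The paper's structural reduction is what makes both bounds, and in particular the linear one for min and max, fall out directly.
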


\paragraph{Decision Tree Evaluation}
A decision tree evaluation function can be described by a tree $T$. Each internal node of the tree is labeled with a predicate of the form $x_i \leq c$ with $c$ a real-valued constant. Internal nodes have two children corresponding to YES and NO. Finally, the leaves are labeled with real-valued constants. The function is evaluated at $x_1, \ldots, x_n$ by following the valid path from root to leaf, and then outputting the value at the leaf. We show:

\begin{theorem} (Informal statement of Theorem~\ref{thm:ub-dt})
The unbiased estimator of a decision tree evaluation function with $s$ internal nodes can be computed in $O(ns)$ time.
\end{theorem}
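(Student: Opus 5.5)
The plan is to exploit the product structure of the weights in \Cref{eq:unbiased-multivariate}: the estimator is the $n$-fold tensor product of the univariate operator $L = \alpha_0 I + \alpha_{\pm}(S_+ + S_-)$, where $S_\pm$ are unit shifts. Since $L$ applied in every coordinate commutes with the decision-tree structure only piecewise, I would decompose the tree function by its leaves. Write $f(y)=\sum_{\ell} c_\ell \prod_{i=1}^n \mathbf{1}[y_i \in I_{\ell,i}]$, where for each leaf $\ell$ the path constraints intersect to an interval $I_{\ell,i}$ (possibly all of $\Z$) in each coordinate. Because the leaves partition the input space, this identity holds for every $y\in\Z^n$. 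By linearity of $g$ in $f$, together with the tensor form of the weights, each leaf term factorizes:
\[
g(y)=\sum_\ell c_\ell \prod_{i=1}^n \Bigl(\sum_{\xi_i\in\{-1,0,1\}} \alpha_{\xi_i}\mathbf{1}[y_i+\xi_i\in I_{\ell,i}]\Bigr).
\]
Coordinates not constrained on the path have $I_{\ell,i}=\Z$, and their factor is $\alpha_0+2\alpha_{\pm}=1$, so they can be skipped.

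The key steps, in order, are as follows.
\begin{enumerate}
\item Verify the tensor factorization of \Cref{eq:unbiased-multivariate} for product indicator functions. This is immediate, since the sum over $\xi\in\{-1,0,1\}^n$ of a product of per-coordinate terms is the product of the per-coordinate sums.
\item Traverse the tree by DFS, maintaining for each coordinate the current interval $[a_i,b_i]$ together with the running product $P=\prod_i h_i$, where $h_i$ is the bracketed factor for coordinate $i$.
\item When descending along an edge that tests $x_i\le c$, update only $[a_i,b_i]$, which becomes $[a_i,\min(b_i,\lfloor c\rfloor)]$ or $[\max(a_i,\lfloor c\rfloor+1),b_i]$. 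Then recompute $h_i$ in $O(1)$ by checking which of $y_i-1,y_i,y_i+1$ lie in the interval, and update $P$.
\item At each leaf, add $c_\ell P$ to the output.
\end{enumerate}

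The main obstacle is maintaining $P$ under these updates. Dividing out the old $h_i$ fails when $h_i=0$, which happens if no shift lands in the interval. I would sidestep this with either of two options. The first is to recompute the product over the at most $\text{depth}\le s$ constrained coordinates at each leaf; since a binary tree with $s$ internal nodes has $s+1$ leaves, this costs $O(s)$ per leaf. The simpler option, matching the stated bound, is to recompute, at each of the $O(s)$ leaves, the full product over all $n$ coordinates from the stored intervals in $O(n)$ time. The latter gives $O(ns)$ total: initializing the intervals takes $O(n)$, the DFS over $O(s)$ edges takes $O(1)$ per interval update with undo on backtrack, and each of the $s+1$ leaves costs $O(n)$. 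Finally, unbiasedness is inherited from \Cref{thm:unbiase-multivariate}, since $f$ is bounded and hence $\E{|f(\tx)|}$ is finite.
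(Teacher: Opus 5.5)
Your proposal is correct and follows essentially the same route as the paper. The leaves partition $\Z^n$ into $s+1$ rectangles, so $g(x)=\sum_\ell c_\ell \vol((R_\ell - x)\cap\{-1,0,1\}^n)$, where each volume factorizes into per-coordinate sums computable in $O(n)$ time. Your DFS-with-undo bookkeeping only makes explicit how the leaf rectangles are obtained within the $O(ns)$ budget, a step the paper leaves implicit.
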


\paragraph{Entropy and KL Divergence} If $x_1, \ldots, x_n$ is interpreted as a vector of counts over a domain of size $n$, then it is natural to consider the entropy of the empirical probability distribution $p_1, \ldots, p_n$ where $p_i = \frac{x_i}{\sum_j x_j}$. The entropy takes the form $H(x_1, \ldots, x_n) = \sum_{i=1}^n h(\frac{x_i}{\sum_j x_j})$, where $h(z) = -z \log_2(z)$. The estimator for the entropy has the following runtime:

\begin{theorem}
    (Informal statement of Theorem~\ref{thm:entropy}) The unbiased estimator for the entropy can be computed in $O(n^2)$ time.
\end{theorem}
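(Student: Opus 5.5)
Entropy is locally decomposable. Write $\alpha_0 = 1+\frac{2p}{(1-p)^2}$ and $\alpha_{\pm1} = \frac{-p}{(1-p)^2}$. Then $\alpha_{-1}+\alpha_0+\alpha_1 = 1$, and the weights $w(\xi)=\prod_j \alpha_{\xi_j}$ form a product measure over $\{-1,0,1\}^n$ (a signed measure, since $\alpha_{\pm 1}<0$).

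For $y+\xi$, the entropy is
\[
H(y+\xi) = \sum_i h\bigl((y_i+\xi_i)/(S+s)\bigr), \qquad S=\sum_j y_j,\quad s=\sum_j \xi_j .
\]
By linearity, $g(y)=\sum_i \sum_\xi w(\xi)\, h\bigl((y_i+\xi_i)/(S+s)\bigr)$. Each inner term depends on $\xi$ only through the pair $(\xi_i, s)$, and $s$ ranges over $\{-n,\dots,n\}$.

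Fix $i$. Group the $\xi$ by $\xi_i=a\in\{-1,0,1\}$ and by $t=\sum_{j\neq i}\xi_j$. The total weight of each group is $\alpha_a\, c_{t}$, where $c_t$ is the coefficient of $z^t$ in
\[
P(z) = (\alpha_{-1}z^{-1}+\alpha_0+\alpha_1 z)^{n-1}.
\]
Crucially, this polynomial does not depend on $i$: because all coordinates share the same $\alpha$'s, $c_t$ is the same for every $i$. So we compute the $2n-1$ coefficients once. One way is to expand iteratively by multiplying by the trinomial $n-1$ times, but that costs $O(n^2)$. A cheaper way uses the closed form
\[
c_t=\sum_{k} \binom{n-1}{k,\;k+t,\;n-1-2k-t}\,\alpha_{1}^{2k+t}\alpha_0^{n-1-2k-t},
\]
using $\alpha_{-1}=\alpha_1$. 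Either approach fits within $O(n^2)$.

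The estimator then becomes
\[
g(y)=\sum_{i=1}^n\sum_{a\in\{-1,0,1\}}\sum_{t=-(n-1)}^{n-1}\alpha_a\,c_t\, h\!\left(\frac{y_i+a}{S+a+t}\right).
\]
This is $O(n)$ terms per $i$, so $O(n^2)$ arithmetic operations in total, given $S$ and the $c_t$. Correctness is just regrouping a finite sum, so it is exactly equation (\ref{eq:unbiased-multivariate}). Unbiasedness then follows from \Cref{thm:unbiase-multivariate}, provided $\E{|H(\tx)|}$ is finite. Entropy has at most logarithmic-type growth (Lemma~\ref{lem:poly}), which suffices.

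The main obstacle is the boundary conventions rather than the algebra. Noisy counts $y_i+a$ can be negative or zero, and $S+a+t$ can vanish or be negative. We must fix an extension of $H$ to all of $\Z^n$, for instance by clipping to nonnegative values or declaring $h=0$ when the denominator is at most $0$. The grouping argument must respect that extension.

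If the convention depends on more than $(\xi_i,s)$, the decomposition breaks. Clipping every coordinate is one such case: the clipped total depends on how many coordinates hit zero. To handle this, I would first try defining $H$ on raw values with $h(z)$ extended arbitrarily, using $0$ outside the valid region. Only if the paper insists on clipping would I add a second grouping index counting the clipped coordinates. That still gives a polynomial-time algorithm, but it could raise the runtime above $O(n^2)$.
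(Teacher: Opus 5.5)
Your proposal is correct and is essentially the paper's argument: your groups $\{\xi : \xi_i = a,\ \sum_{j\neq i}\xi_j = t\}$ are exactly the paper's sets $P^i_{a,a+t}$. Their volume $\alpha_a c_t$ is the trinomial sum of Lemma~\ref{lem:vol-entropy}, computed once (independent of $i$) in $O(n^2)$ time and then used in a sum of $O(n^2)$ terms. The paper also settles your boundary concern the way you prefer: it sets $h(x,y)=0$ whenever $x\le 0$ or $y\le 0$ and $h(\min\{x/y,1\})$ otherwise, so each term depends only on $(x_i+\xi_i,\sum_j x_j+\xi_j)$, no clipping index is needed, and $H$ is bounded, which makes unbiasedness immediate.
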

In \Cref{app:kl}, we extend these results to KL divergence and obtain an unbiased estimator with $O(n^3)$ running time.

\paragraph{Multivariate Polynomials} A multivariate monomial of the form $x_1^{a_1}\ldots x_n^{a_n}$ does not fit easily into the local decomposition framework. Nonetheless, it is possible to compute the unbiased estimator efficiently because the variables in the monomial interact through a product, and this allows one to factor~\eqref{eq:unbiased-multivariate} directly. Specifically, products of functions can be unbiased as the product of the unbiased estimators as follows:

\begin{theorem}\label{thm:unbiased-prod}
    Suppose $f(y_1, \ldots, y_n) = f_1(Y_1)f_2(Y_2)$ for two functions $f_1 : \Z^{n_1} \rightarrow \R$ and $f_2 : \Z^{n_2} \rightarrow \R$, where $Y_1 \sqcup Y_2$ is a partition of the variables $y_1, \ldots, y_n$. Let $g_1(Y_1), g_2(Y_2)$ be the unbiased estimators given in Theorem~\ref{thm:unbiase-multivariate}. Then the unbiased estimator $g(y_1, \ldots, y_n)$ for $f$ is equal to $g_1(Y_1) g_2(Y_2)$.
\end{theorem}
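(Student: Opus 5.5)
The plan is to prove the identity directly from the definition in \Cref{eq:unbiased-multivariate}, using that both the index set $\{-1,0,1\}^n$ and the weights factor over coordinates. After reordering the coordinates we write $y=(Y_1,Y_2)$ with $Y_1\in\Z^{n_1}$, $Y_2\in\Z^{n_2}$, $n_1+n_2=n$. Each $\xi\in\{-1,0,1\}^n$ then corresponds bijectively to a pair $(\xi^{(1)},\xi^{(2)})\in\{-1,0,1\}^{n_1}\times\{-1,0,1\}^{n_2}$. Under this correspondence $y+\xi=(Y_1+\xi^{(1)},Y_2+\xi^{(2)})$, and the weight splits as
\[
\prod_{j=1}^n\alpha_{\xi_j}=\Bigl(\prod_{j\in Y_1}\alpha_{\xi^{(1)}_j}\Bigr)\Bigl(\prod_{j\in Y_2}\alpha_{\xi^{(2)}_j}\Bigr),
\]
because the coefficients $\alpha_{\xi_j}$ depend only on the single entry $\xi_j$ and not on its position.

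Substituting $f(y+\xi)=f_1(Y_1+\xi^{(1)})\,f_2(Y_2+\xi^{(2)})$ into the definition of $g$, each summand becomes a product of a factor depending only on $\xi^{(1)}$ and a factor depending only on $\xi^{(2)}$. The sum over pairs is a finite double sum of such products, so it factors as
\[
\Bigl(\sum_{\xi^{(1)}} f_1(Y_1+\xi^{(1)})\prod\alpha_{\xi^{(1)}_j}\Bigr)\Bigl(\sum_{\xi^{(2)}} f_2(Y_2+\xi^{(2)})\prod\alpha_{\xi^{(2)}_j}\Bigr)=g_1(Y_1)\,g_2(Y_2).
\]
This is a pointwise identity in $y$ and needs no integrability assumptions.

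To connect this with unbiasedness, I would add a remark. If $\E{|f(\tx)|}$ is finite, then \Cref{thm:unbiase-multivariate} says $g$ is unbiased for $f$, and the identity shows that this estimator equals $g_1g_2$. Uniqueness (\Cref{thm:uniqueness}) confirms that "the" unbiased estimator is this one. Alternatively, independence of the noise on $Y_1$ and $Y_2$ gives $\E{g_1g_2}=f_1(X_1)f_2(X_2)$ directly, provided the factors have finite expectations.

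There is no real obstacle here. The only care needed is the bookkeeping of the partition, which need not be into contiguous blocks, so the reindexing should be done via a permutation of coordinates. Everything else is finite-sum algebra.
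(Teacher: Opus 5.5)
Your proposal is correct and follows the paper's proof: you split each $\xi$ into its $Y_1$- and $Y_2$-coordinates, factor the weight $\prod_j \alpha_{\xi_j}$ accordingly, and factor the finite double sum into $g_1(Y_1)\,g_2(Y_2)$. Your extra remarks on integrability and on uniqueness are sound, but the pointwise identity does not need them.
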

Thus, a multivariate monomial can be unbiased by multiplying the unbiased estimators for $x_i^{a_i}$ separately (which are given in Theorem~\ref{thm:unbiase-univariate}). A general multivariate polynomial can be unbiased by adding the estimators for each monomial.

\section{Empirical Results} \label{sec:Empirical}
Here we briefly discuss our empirical validation\footnote{The code used for those experiments is available in the following repository: \url{https://github.com/Gericko/Postprocessing-Discrete-Laplace}}, for which further details can be found in Appendix~\ref{appendix:empirical-results}.
\Cref{sec:estimation-polynomials} considers polynomials, for a use case in graph counting ($k$-star counting), where we emphasize the advantage of the discrete Laplace mechanism over the classical Laplace mechanism for high values of $\varepsilon$.
The $k$-stars experiments suggest that both the use of the discrete Laplace mechanism and the debiasing matter.

\Cref{sec:estimation-bounded-entropy} concerns empirical entropy, where we see a significantly improved root mean square error for low- and medium values of $\varepsilon$.
The entropy experiment, whose results are shown in \Cref{fig:bow-nips-main}, highlights that when the naive estimator consists of a sum of biased terms, debiasing yields a significant improvement when applied to a sparse bag-of-words dataset. 

Other functions, such as partition functions, also fall into the category of functions written as a sum over disjoint variables.
We consider empirical results on estimating partition functions in \Cref{sec:partition-function}, for a setting in which the naive estimator has considerable bias.

Finally, \Cref{sec:profile-estimation} provides the first practical demonstration of post-processing-based profile estimation.
One of the results is shown in \Cref{fig:exac-main}.
These results show that debiasing can help prevent the smoothing effect of the noise addition.
Overall, debiasing is most compelling when bias would otherwise accumulate.
Our empirical results show that even in non-distributed settings, there are functions for which our unbiased estimator leads to lower error than the naive plug-in estimator.
In particular, they confirm the theoretical contributions and the applicability of the proposed method.

\begin{figure}[h]
\centering
\begin{subfigure}{0.48\textwidth}
    \centering
    \includegraphics[width=\textwidth, height=4.2cm,keepaspectratio]{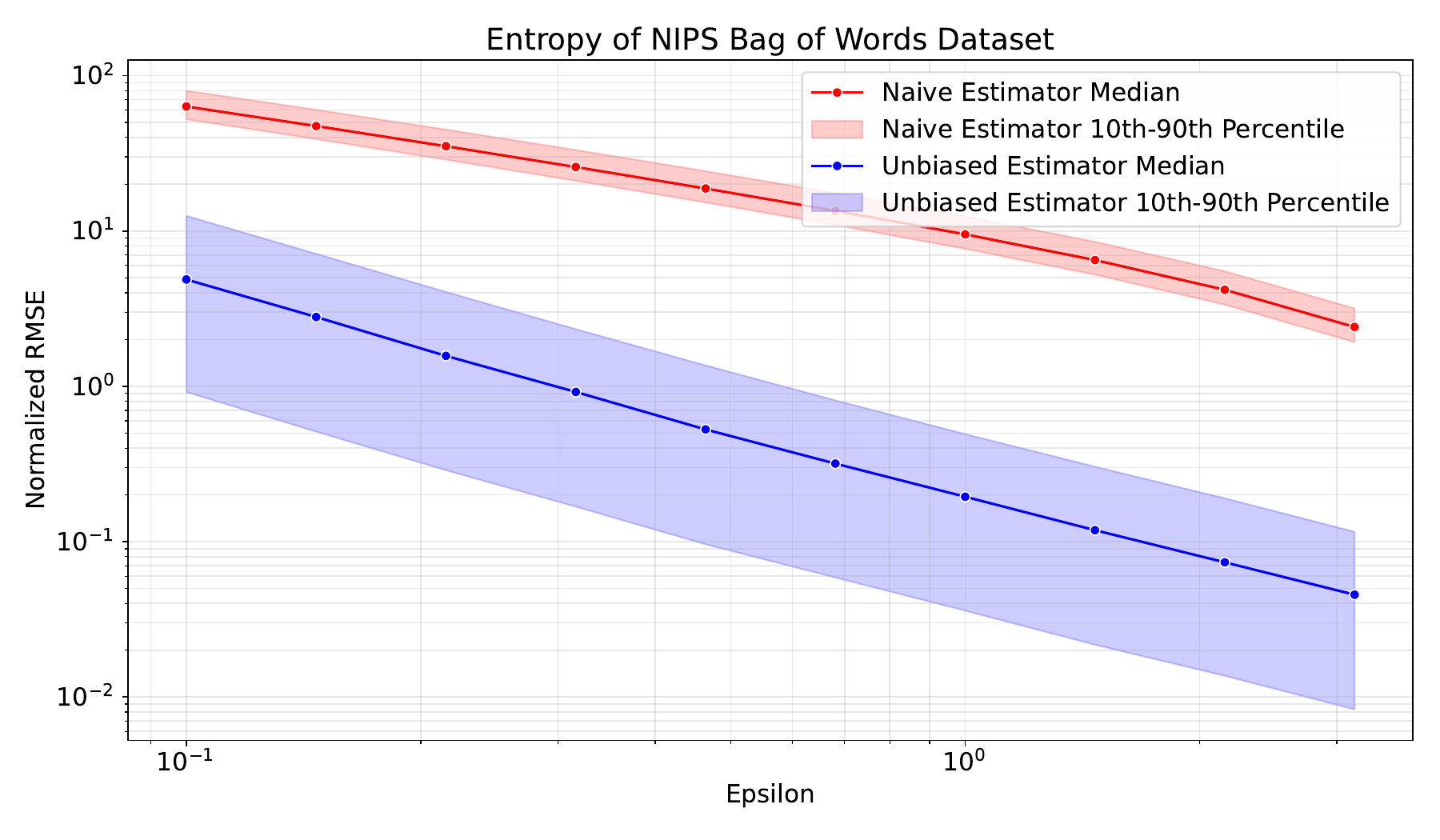}
    \caption{Root mean square error of the entropy estimation on the bag of words of NIPS articles. Quantiles are shown over about 30,000 runs for each $\varepsilon$ (20 repetitions over 1491 histograms). 
    }
    \label{fig:bow-nips-main}
\end{subfigure}\hfill
\begin{subfigure}{0.48\textwidth}
    \centering
    \includegraphics[width=\textwidth, height=4.2cm,keepaspectratio]{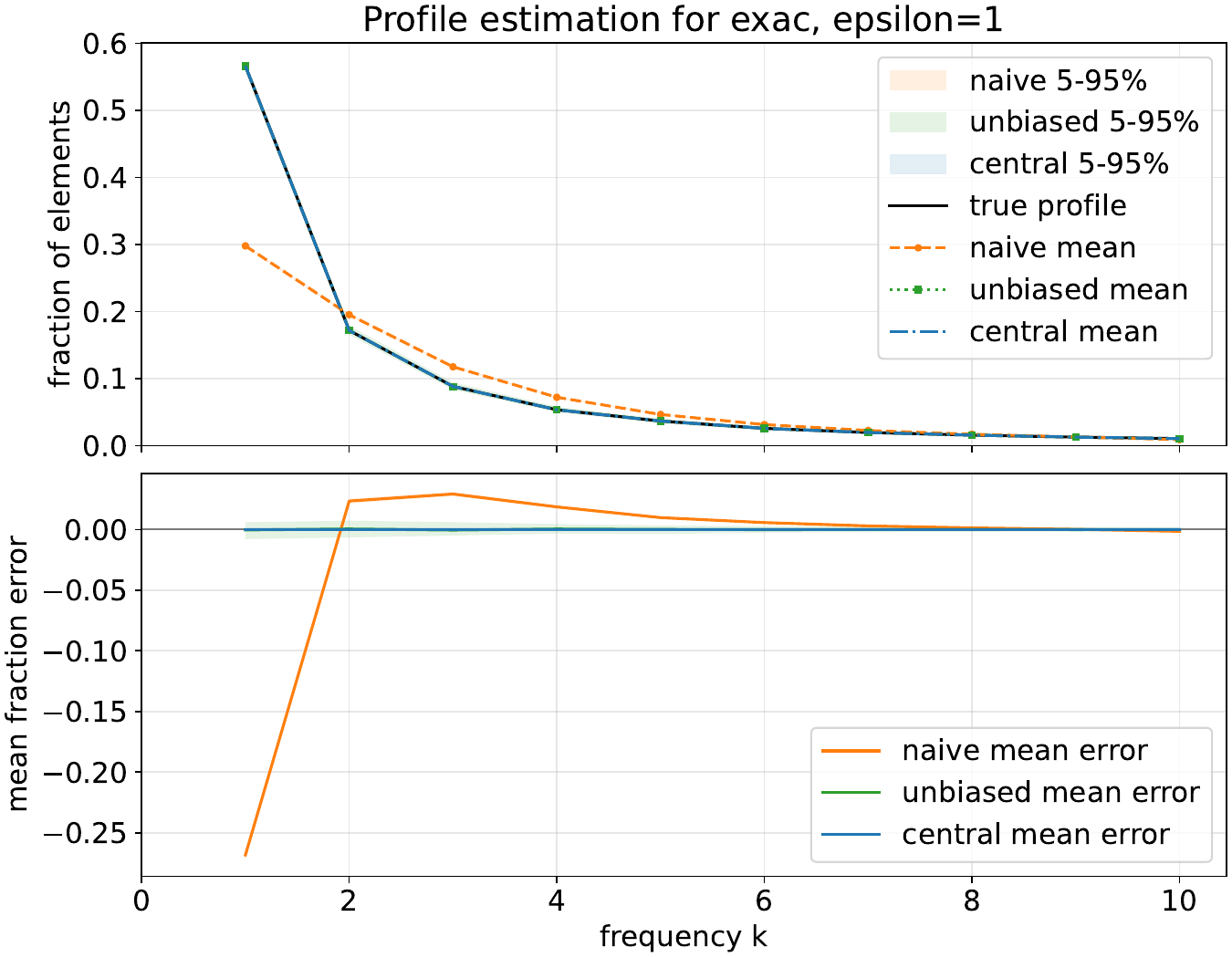}
    \caption{Profile estimation on ExAC histogram. Quantiles are shown for $\varepsilon=1$ and for over 200 noise realizations for each value of $k$.}
    \label{fig:exac-main}
\end{subfigure}
\caption{Examples where our estimator compares favorably to the naive estimator on real datasets.}
\label{fig:empirical-main}
\end{figure}

\section{Discussion and Limitations} \label{sec:discussion}
Our proposed method is general and imposes minimal assumptions on the properties of $f$.
Compared with previous work, this generality allows unbiased estimators to be constructed for a much wider class of functions.
One cost of this generality is computational: in general computing our estimator requires exponential time.
For many natural functions we show, however, that the estimator can be computed in polynomial time. A natural open problem of characterizing the classes of functions that admit polynomial-time unbiased estimation algorithms. 

A second limitation of our estimator is that it sometimes has substantially higher variance than the naive estimator.
This is the case for some of the functions we have studied, such as order statistics.
However, as we have seen in our theoretical and empirical results there are other cases of interest where the mean square error of our estimator is small or even significantly better than that of the naive estimator.
Characterizing the functions and inputs for which such an improvement occurs is another interesting avenue for future work.
Finally, unbiased estimators sometimes have worse error than estimators with a small, controlled bias --- it would be interesting to understand if such biased estimators can lead to better estimation of sums in distributed/federated settings.

\newpage

\bibliography{references.bib}

\newpage

\appendix
\section{Omitted Proofs for \texorpdfstring{\Cref{sec:DebiasDLap}}{Section 2}} \label{app:debias}

\subsection{Proof of \texorpdfstring{\Cref{thm:unbiase-univariate}}{Theorem 1}}
\begin{proof}
    Let $f: \Z \to \R$, $p \in (0,1)$, $x \in \Z$, $\tx = x + \DLap(p)$, and $g$ as defined in \Cref{eq:unbiased}.

   \begin{align*}
        \E{g(\tx)}
        = & \sum\limits_{i=-\infty}^{+\infty} \proba{\tx - x = i} g(x+i) \\
        = & \frac{1 - p}{1 + p} \sum\limits_{i=-\infty}^{+\infty} p^{|i|} g(x+i) \\
        = & \frac{1 - p}{1 + p} \sum\limits_{i=-\infty}^{+\infty} p^{|i|} \left( f(x+i) - \frac{p}{(1 - p)^2} (f(x+i+1) - 2 f(x+i) + f(x+i-1)) \right) \\
        = & \frac{(1 - p)^2 + 2p}{(1 + p)(1 - p)} \sum\limits_{i=-\infty}^{+\infty} p^{|i|} f(x+i) - \frac{p}{(1 + p)(1 - p)} \left[ \sum\limits_{i=-\infty}^{-1} p^{-i} f(x+i+1) \right. \\
          & \left. + \sum\limits_{i=0}^{+\infty} p^{i} f(x+i+1) + \sum\limits_{i=-\infty}^{0} p^{-i} f(x+i-1) + \sum\limits_{i=1}^{+\infty} p^{i} f(x+i-1) \right] \\
        = & \frac{1 + p^2}{(1 + p)(1 - p)} \sum\limits_{i=-\infty}^{+\infty} p^{|i|} f(x+i) - \frac{p}{(1 + p)(1 - p)} \\
          & \left[ \sum\limits_{i=-\infty}^{0} p^{-i+1} f(x+i) + \sum\limits_{i=1}^{+\infty} p^{i-1} f(x+i) + \sum\limits_{i=-\infty}^{-1} p^{-i-1} f(x+i) + \sum\limits_{i=0}^{+\infty} p^{i+1} f(x+i) \right] \\
        = & \frac{1 + p^2}{(1 + p)(1 - p)} \sum\limits_{i=-\infty}^{+\infty} p^{|i|} f(x+i) - \frac{p}{(1 + p)(1 - p)} \\
          & \left[ p \sum\limits_{i=-\infty}^{+\infty} p^{|i|} f(x+i) + p f(x) + \frac{1}{p} \sum\limits_{i=-\infty}^{+\infty} p^{|i|} f(x+i) - \frac{1}{p} f(x) \right] \\
        = & f(x)\qedhere
    \end{align*}
\end{proof}

\subsection{Proof of \texorpdfstring{\Cref{thm:unbiase-multivariate}}{Theorem 2}}
\begin{proof}
    Let $f: \Z^n \to \R$, $p \in (0,1)$, $(x_1, \ldots, x_n) \in \Z^n$, for all $i \in [n]$, $\tx_i = x_i + \DLap(p)$, and $g$ as defined in \Cref{eq:unbiased-multivariate}.

    \begin{align*}
        & \E[\tx_1, \ldots, \tx_n]{g(\tx_1, \ldots, \tx_n))} \\
        & = \E[\tx_1, \ldots, \tx_n]{\sum\limits_{(\xi_1, \ldots, \xi_n) \in \{-1, 0, 1\}^n} f(\tx_1 + \xi_1, \ldots, \tx_n + \xi_n) \prod\limits_{j=1}^n \alpha_{\xi_j}} \\
        & = \E[\tx_1, \ldots, \tx_{n-1}]{\E[\tx_n]{\sum\limits_{(\xi_1, \ldots, \xi_n) \in \{-1, 0, 1\}^n} f(\tx_1 + \xi_1, \ldots, \tx_n + \xi_n) \prod\limits_{j=1}^n \alpha_{\xi_j}}} \\
        & = \mathbb{E}_{\tx_1, \ldots, \tx_{n-1}} \left[ \sum\limits_{(\xi_1, \ldots, \xi_{n-1}) \in \{-1, 0, 1\}^{n-1}} \prod\limits_{j=1}^{n-1} \alpha_{\xi_j} \cdot \mathbb{E}_{\tx_n} \left[
        - \frac{p}{(1-p)^2} f(\tx_1 + \xi_1, \ldots, \tx_{n-1} + \xi_{n-1}, \tx_n - 1) \right.\right. \\
        & \left.\left. + \left(1 + \frac{2 p}{(1-p)^2}\right) f(\tx_1 + \xi_1, \ldots, \tx_{n-1} + \xi_{n-1}, \tx_n) - \frac{p}{(1-p)^2} f(\tx_1 + \xi_1, \ldots, \tx_{n-1} + \xi_{n-1}, \tx_n + 1)
        \right]\right]\\
        & = \E[\tx_1, \ldots, \tx_{n-1}]{\sum\limits_{(\xi_1, \ldots, \xi_{n-1}) \in \{-1, 0, 1\}^{n-1}} f(\tx_1 + \xi_1, \ldots, \tx_{n-1} + \xi_{n-1}, x_n) \prod\limits_{j=1}^{n-1} \alpha_{\xi_j}} \\
        & = f(x_1, \ldots, x_n)
    \end{align*}
    
    The second equality is derived using the independence of the random variables $\tx_i$.
    For the fourth equality, we apply \Cref{thm:unbiase-univariate} to the functions $g(x) = f(\tx_1 + \xi_1, \ldots, \tx_{n-1} + \xi_{n-1}, x)$ for each assignment of $(\xi_1, \ldots, \xi_{n-1})$.
    Finally the last equality is obtained by induction on $n$ by considering this time the function $g(x_1, \ldots, x_{n-1}) = f(x_1, \ldots, x_{n-1}, x_n)$.
\end{proof}

\section{Omitted Proofs for \texorpdfstring{\Cref{sec:properties}}{Section 3}} \label{app:bounding-l2-error}

\subsection{Proof of \texorpdfstring{\Cref{thm:uniqueness}}{Theorem 4}}\label{app:uniquenessproof}

    For $y \in \Z^n$ we will prove that if $g$ is an unbiased estimator of the function $y \mapsto 0$, then for all $y \in \Z^n, g(y) = 0$. From there, we can easily prove the theorem by noticing that if $g_1$ and $g_2$ are unbiased estimators of the same function $f$, then $g_1 - g_2$ is an unbiased estimator of the constant function $y \mapsto 0$.
    Let $p \in (0,1)$, $x \in \Z^n$, $\tx = x + \DLap(p)$, and $h$ be a deterministic unbiased estimator for $y \mapsto 0$.
    By the definition of the unbiased estimator, we have
    \begin{equation*}
        \E{h(\tx)} = \left(\tfrac{1 - p}{1 + p}\right)^n \sum_{i \in \Z^n} p^{\|i-x\|_1} h(i) = 0.
    \end{equation*}
    
    To obtain a cancellation in the sum, we will use the following equality.
    \[
        p^{|z-1|} + p^{|z+1|} - \tfrac{p^2 + 1}{p} p^{|z|} =
        \begin{cases}
            p - \frac{1}{p},    & z=0 \\[1.2ex]
            0,                  & \text{otherwise}
        \end{cases}
    \]

    For any fixed $x \in \Z^n$, by applying the unbiased estimator to $\tx - e_1$, $\tx + e_1$, and $\tx$, we have that the following linear combination is also equal to 0:
    \begin{align*}
        & \E{h(\tx - e_1)} + \E{h(\tx + e_1)} - \tfrac{p^2 + 1}{p} \E{h(\tx)} \\
        & = \left(\tfrac{1 - p}{1 + p}\right)^n \sum_{(i_2, \ldots, i_n) \in \Z^{n-1}} p^{\sum_{j=2}^n |i_j - x_j|} \sum_{i_1=-\infty}^{+\infty} \left[p^{|i_1 - x_1 - 1|} + p^{|i_1 - x_1 + 1|} - \tfrac{p^2 + 1}{p} p^{|i_1 - x_1|}\right] h(i) \\
        & = \left(\tfrac{1 - p}{1 + p}\right)^n \left(p - \tfrac{1}{p}\right) \sum_{(i_2, \ldots, i_n) \in \Z^{n-1}} p^{\sum_{j=2}^n |i_j - x_j|} h(x_1, i_2, \ldots, i_n) \\
        & = \tfrac{1 - p}{1 + p} \left(p - \tfrac{1}{p}\right) \E{h(x_1, \tx_2, \ldots, \tx_n)}
    \end{align*}

    This shows that the zero expectation is preserved when replacing the random variable $\tx_1$ by $x_1$.
    By performing the same cancellation for each coordinate $j$, we obtain $h(x) = 0$. 
    This is true for every $x \in \Z^n$ and concludes the proof of the uniqueness of the deterministic estimator.

    Concerning the optimality, let $U$ be auxiliary randomness and $\delta(\tx, U)$ an unbiased estimator of $f(x)$. Using \Cref{thm:rao}, we have that $\bar{\delta}(\tx)$ is a deterministic unbiased estimator of $f(x)$ with smaller variance than $\delta(\tx, U)$. However, by the uniqueness of the deterministic unbiased estimator, $\bar\delta$ is~$g$. \qed

\subsection{Proof of \texorpdfstring{\Cref{thm:l2-bound-multivariate}}{Theorem 5}} \label{app:proof of l2-bound-multivariate}
For simplicity, we first consider the $n=1$ case: 
First, we state and prove the following lemma: 
\begin{lemma}
    For $f: \Z \to X$, $p \in (0,1)$, $x \in \Z$, $\tx = x + \DLap(p)$, and $\xi \in \{-1, +1\}$,
    \[
        \E{(f(\tx + \xi) - f(x))^2} \leq \frac{1}{p} \cdot \MSE[f]{x}.
    \]
\end{lemma}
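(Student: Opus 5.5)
The plan is to compare the two expectations term by term. We write both sides as sums over the noise value and then compare the weights pointwise. By definition,
\[
\E{(f(\tx+\xi)-f(x))^2} = \tfrac{1-p}{1+p}\sum_{i\in\Z} p^{|i|}\,(f(x+i+\xi)-f(x))^2 .
\]
Reindexing with $j=i+\xi$ turns this into
\[
\tfrac{1-p}{1+p}\sum_{j\in\Z} p^{|j-\xi|}\,(f(x+j)-f(x))^2 .
\]
The right-hand side has the same form with weights $p^{|j|}$:
\[
\MSE[f]{x}=\tfrac{1-p}{1+p}\sum_{j} p^{|j|}(f(x+j)-f(x))^2 .
\]

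The key step is the elementary inequality $p^{|j-\xi|}\le p^{-1}p^{|j|}$ for every $j\in\Z$ and $\xi\in\{-1,+1\}$. It follows from the triangle inequality $|j-\xi|\ge |j|-1$ together with $p\in(0,1)$, which makes $t\mapsto p^t$ decreasing. Since every summand $(f(x+j)-f(x))^2$ is nonnegative, we may multiply this weight bound by the summand and sum over $j$. This gives the claimed bound $\frac1p\,\MSE[f]{x}$.

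The only point needing care is well-definedness. All terms are nonnegative, so Tonelli's theorem justifies the reindexing and the termwise comparison. If $\MSE[f]{x}$ is finite, the left-hand side is dominated by a convergent series and therefore also exists and is finite. If $\MSE[f]{x}=\infty$, there is nothing to prove. No step here is a real obstacle; the main thing to get right is the direction of the shift inequality, namely that a shift by one costs at most a factor $p^{-1}$ in the weights.

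This lemma is then the building block for \Cref{thm:l2-bound-multivariate}. One writes $g(\tx)-f(x)=\sum_\xi \prod_j\alpha_{\xi_j}\,(f(\tx+\xi)-f(x))$, which holds because the coefficients $\prod_j\alpha_{\xi_j}$ sum to $1$. One then applies Cauchy--Schwarz over the $3^n$ terms, and uses the multivariate analogue of the lemma with factor $p^{-\|\xi\|_1}$, obtained by applying the shift inequality coordinatewise.
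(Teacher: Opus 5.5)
Your proposal is correct and follows the same argument as the paper. Both reindex by $j = i + \xi$ and bound $p^{|j-\xi|} \le p^{|j|}/p$ using $|j-\xi| \ge |j|-1$. Both then sum the nonnegative terms to get $\frac{1}{p}$ times the MSE of the naive estimator. Your Tonelli remark on well-definedness is a small addition that the paper leaves implicit.
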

\begin{proof}
    \begin{align*}
        \E{(f(\tx + \xi) - f(x))^2}
        & = \frac{1 - p}{1 + p} \sum\limits_{i=-\infty}^{+\infty} p^{|i|} (f(x+i + \xi) - f(x))^2 \\
        & = \frac{1 - p}{1 + p} \sum\limits_{j=-\infty}^{+\infty} p^{|j - \xi|} (f(x+j) - f(x))^2 \\
        & \leq \frac{1 - p}{1 + p} \sum\limits_{j=-\infty}^{+\infty} \frac{p^{|j|}}{p} (f(x+j) - f(x))^2 \\
        & = \frac{1}{p} \cdot \MSE[f]{x}\qedhere
    \end{align*}
\end{proof}
Then, we prove the theorem for $n=1$ by applying the lemma above:
\begin{proof}
    \begin{align*}
        \MSE[g]{x} 
        & = \E{\left(f(\tx) - \frac{p}{(1-p)^2} (f(\tx+1) - 2 f(\tx) + f(\tx-1)) - f(x)\right)^2} \\
        & = \frac{1}{(1-p)^4} \E{\left(\left(1 + p^2\right)\left(f(\tx) - f(x)\right) - p \left(f(\tx+1) - f(x)\right) - p \left(f(\tx-1) - f(x)\right)\right)^2} \\
        & \leq \frac{3}{(1-p)^4} \E{\left(1 + p^2\right)^2\left(f(\tx) - f(x)\right)^2 + p^2 \left(f(\tx+1) - f(x)\right)^2 + p^2 \left(f(\tx-1) - f(x)\right)^2} \\
        & \leq \frac{3}{(1-p)^4} \left(\left(1 + p^2\right)^2 \MSE[f]{x} + p \cdot \MSE[f]{x} + p \cdot \MSE[f]{x}\right) \\
        & = 3 \frac{\left(1 + p^2\right)^2 + 2p}{(1-p)^4} \cdot \MSE[f]{x}.\qedhere
    \end{align*}
\end{proof}

Now we consider the general case.
Before proving the main theorem, we state and prove the following lemma (similar to the $n=1$ case):
\begin{lemma}
    For $f: \Z^n \to X$, $p \in (0,1)$, $(x_1, \ldots, x_n) \in \Z^n$, and for all $i \in [n]$, $\tx_i = x_i + \DLap(p)$, and $(\xi_1, \ldots, \xi_n) \in \{-1, 0, 1\}^n$,
    \[
        \E{(f(\tx_1 + \xi_1, \ldots, \tx_n + \xi_n) - f(x_1, \ldots, x_n))^2} \leq \frac{1}{p^{\sum_{j=1}^n |\xi_j|}} \cdot \MSE[f]{x_1, \ldots, x_n}.
    \]
\end{lemma}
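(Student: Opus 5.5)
The plan is to mimic the univariate lemma, now using the product structure of the noise. First I write the expectation as a sum over the noise vector $i\in\Z^n$:
\[
\E{(f(\tx+\xi)-f(x))^2} = \left(\tfrac{1-p}{1+p}\right)^n \sum_{i\in\Z^n} p^{\|i\|_1}\,(f(x+i+\xi)-f(x))^2 .
\]
Next I substitute $j = i+\xi$, which is a bijection of $\Z^n$. This turns the weight into $p^{\|j-\xi\|_1} = \prod_{k} p^{|j_k-\xi_k|}$, while the summand becomes $(f(x+j)-f(x))^2$.

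The key pointwise inequality is applied coordinatewise. For each $k$, the triangle inequality gives $|j_k-\xi_k|\ge |j_k|-|\xi_k|$. Since $p<1$, this yields $p^{|j_k-\xi_k|}\le p^{|j_k|}/p^{|\xi_k|}$. For $\xi_k=0$ this holds with equality, and for $\xi_k=\pm1$ it is exactly the univariate step. Multiplying over $k$ gives
\[
p^{\|j-\xi\|_1}\le p^{-\sum_k|\xi_k|}\,p^{\|j\|_1}.
\]
All summands are nonnegative, so I can bound the sum term by term. Pulling out the constant $p^{-\sum_k|\xi_k|}$ leaves $\left(\tfrac{1-p}{1+p}\right)^n\sum_j p^{\|j\|_1}(f(x+j)-f(x))^2$, which is exactly $\MSE[f]{x_1,\ldots,x_n}$.

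There is no real obstacle. The only care needed is justifying the reindexing and the termwise comparison. Both are legitimate because the series has nonnegative terms, so Tonelli's theorem applies and the inequality holds even if the right-hand side is infinite. This means no finiteness assumption beyond what the statement makes is required.
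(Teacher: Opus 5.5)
Your proof is correct and follows the paper's argument essentially step for step. You expand the expectation over the noise vector, reindex by $j=i+\xi$, bound $p^{\|j-\xi\|_1}\le p^{\|j\|_1-\sum_k|\xi_k|}$ via the triangle inequality, and recognize the remaining sum as $\MSE[f]{x_1,\ldots,x_n}$. Your observation that nonnegativity of the terms justifies the reindexing and the termwise comparison (so the bound holds even when the right-hand side is infinite) is a small rigor point the paper leaves implicit.
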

\begin{proof}
    \begin{align*}
        & \E{(f(\tx_1 + \xi_1, \ldots, \tx_n + \xi_n) - f(x_1, \ldots, x_n))^2} \\
        & = \left(\frac{1 - p}{1 + p}\right)^n \sum\limits_{(i_1, \ldots, i_n) \in \Z^n} p^{\sum_{j=1}^n |i_j|} (f(x_1 + \xi_1 + i_1, \ldots, x_n + \xi_n + i_n) - f(x_1, \ldots, x_n))^2 \\
        & = \left(\frac{1 - p}{1 + p}\right)^n \sum\limits_{(i_1, \ldots, i_n) \in \Z^n} p^{\sum_{j=1}^n |i_j - \xi_j|} (f(x_1 + i_1, \ldots, x_n + i_n) - f(x_1, \ldots, x_n))^2 \\
        & \leq \left(\frac{1 - p}{1 + p}\right)^n \sum\limits_{(i_1, \ldots, i_n) \in \Z^n} p^{\sum_{j=1}^n |i_j| - |\xi_j|} (f(x_1 + i_1, \ldots, x_n + i_n) - f(x_1, \ldots, x_n))^2 \\
        & = \frac{1}{p^{\sum_{j=1}^n |\xi_j|}} \cdot \E{(f(\tx_1, \ldots, \tx_n) - f(x_1, \ldots, x_n))^2}.\qedhere
    \end{align*}
\end{proof}

Then, we prove \Cref{thm:l2-bound-multivariate} by applying the lemma above:
\begin{proof}
    \begin{align*}
        & \MSE[g]{x_1, \ldots, x_n} \\
        = \ & \E{\left(\sum\limits_{(\xi_1, \ldots, \xi_n) \in \{-1, 0, 1\}^n} f(\tx_1 + \xi_1, \ldots, \tx_n + \xi_n) \prod\limits_{j=1}^n \alpha_{\xi_j} - f(x_1, \ldots, x_n)\right)^2} \\
        \leq \ & 3^n \sum\limits_{(\xi_1, \ldots, \xi_n) \in \{-1, 0, 1\}^n} \E{(f(\tx_1 + \xi_1, \ldots, \tx_n + \xi_n) - f(x_1, \ldots, x_n))^2} \prod\limits_{j=1}^n \alpha_{\xi_j}^2 \\
        \leq \ & 3^n \sum\limits_{(\xi_1, \ldots, \xi_n) \in \{-1, 0, 1\}^n} \MSE[f]{x_1, \ldots, x_n} \prod\limits_{j=1}^n \frac{\alpha_{\xi_j}^2}{p^{|\xi_j|}} \\
        = \ & \left( 3 \frac{\left(1 + p^2\right)^2 + 2p}{(1-p)^4} \right)^n \MSE[f]{x_1, \ldots, x_n}.\qedhere
    \end{align*}
\end{proof}

\subsection{Proof of Proposition~\texorpdfstring{\ref{prop:power-func}}{6}}\label{app:proof-power-func}

\begin{proof}
    Let $p \in (0,1)$, $\gamma \in (-1/p,-p) \cup (p, 1/p)$, $x \in \Z^n$, and $\tx = x + \DLap(p)$.
    In one dimension, 
    \begin{align*}
        \E{|\gamma^{\tx}|}
        & = \tfrac{1 - p}{1 + p} \sum_{i=-\infty}^{+\infty} p^{|i|}|\gamma|^{x + i}
         = \tfrac{1 - p}{1 + p} |\gamma|^{x} \left( \sum_{i=1}^{+\infty} \left(\frac{p}{|\gamma|}\right)^i + \sum_{i=0}^{+\infty} (p |\gamma|)^i \right).
    \end{align*}
    As $|\gamma| \in (p, 1/p)$, both $p / |\gamma| < 1$ and $p|\gamma| < 1$. Thus the expected value is finite. This carries to the multidimensional case as the noise added to each component are independent and thus
    \(
        \E{|f_{\gamma}(\tx)|} = \prod_{i=1}^n \E{|\gamma^{\tx_i}|}.
    \)

    We use the same independence property to get the unbiased estimator.
    It is useful to express this using the operator form of the estimator. Let
    \[
    \Delta_i f(y) = f(y+e_i)-2f(y)+f(y-e_i)
    \]
    denote the discrete second derivative in coordinate $i$, and write
    \(T_i = I - c\Delta_i\), \(c = \frac{p}{(1-p)^2}\).
    Then \Cref{thm:unbiase-multivariate} can be written compactly as
    \[
    g = \prod_{i=1}^n T_i f,
    \]
    where the operators commute because they act on different coordinates.
    As $\Delta_i(f_{\gamma}(y)) = \frac{(1 - \gamma)^2}{\gamma} f_{\gamma}(y)$, this gives $g_{\gamma}(y) = A(\gamma)^n f_{\gamma}(y)$.

    Last, whenever the variance of $f_{\gamma}(\tx)$ exists, it follows immediately that the variance of $g_{\gamma}(\tx)$ exists as well and that $\Var{g_{\gamma}(\tx)} = A(\gamma)^{2n} \Var{f_{\gamma}(\tx)}$. The existence of that variance is given by the previous claim on expected value by noticing that $\Var{f_{\gamma}(\tx)}$ converges exactly when $\E{f_{\gamma^2}(\tx)}$ converges.
\end{proof}

\subsection{Proof of Proposition~\ref{prop:threshold-exponential-lower}}\label{app:threshold-exponential-lower}

\begin{proof}
Write $\alpha_0=1+2c$, $\alpha_{+1}=\alpha_{-1}=-c$, where $c=p/(1-p)^2$. For $y$ with $\sum_i y_i=0$,
\[
g(y)=\sum_{\xi:\,\sum_i\xi_i>0}\prod_i \alpha_{\xi_i}.
\]
Let $\beta_k=\sum_{\xi:\,\sum_i\xi_i=k}\prod_i\alpha_{\xi_i}$. Then $\beta_k=\beta_{-k}$ and $\sum_k\beta_k=1$, so $g(y)=\sum_{k>0}\beta_k=\tfrac{1-\beta_0}{2}$.
Since all terms in the sum defining $\beta_0$ are non-negative we can lower bound $\beta_0$ using vectors $\xi$ with $n/2$ entries $+1$ and $n/2$ entries $-1$:
\[
\beta_0 \ge \binom{n}{n/2}c^n \ge \tfrac{(2c)^n}{n+1}.
\]
For $\varepsilon \le 1$ we have $p \ge e^{-1}$ and thus $c = p/(1-p)^2> 0.92$.
Consider the event $E$ that $\sum_i \xi_i=0$. 
By standard concentration bounds, $\Pr[E]=\Theta(n^{-1/2})$. 
Conditioned on $E$, we have $\sum_i\tilde x_i=0$, so
\[
|g(\tilde x)|=\tfrac{|1-\beta_0|}{2}\ge \Omega\bigl((2c)^n/n\bigr).
\]
Since $\E{g(\tilde x)} = 0$,
\[
\Var{g(\tilde x)} \ge \Pr[E]\cdot \Omega\bigl((2c)^{2n}/n^2\bigr)
= \Omega\bigl((2c)^{2n}/n^{5/2}\bigr),
\]
which is exponential since $2c>1$.
\end{proof}

\section{Details on Transformations to Other Mechanisms} \label{app:transformation}

\subsection{Illustration of the Conversion to the Laplace Mechanism} 

\begin{figure}[h]
\centering
\includegraphics{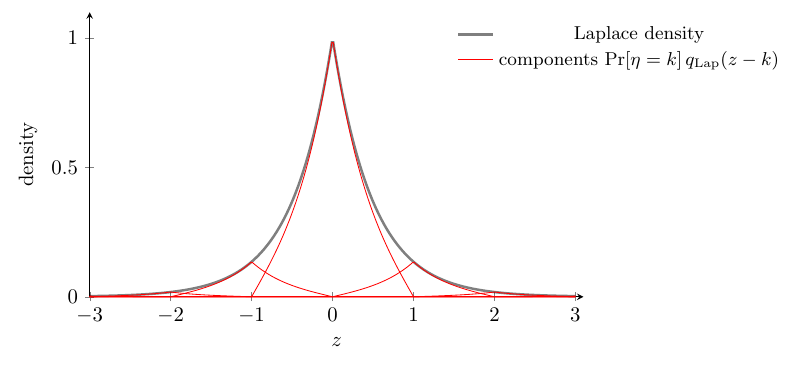}
\caption{
Illustration of Theorem~\ref{thm:laplace} for $\varepsilon=2$ (so $p=e^{-2}$).
Each red curve is one translated component
$\Pr[\eta=k]\,q_{\rm Lap}(z-k)$ corresponding to an atom
of the discrete Laplace distribution
$\eta\sim\mathrm{DLap}(e^{-\varepsilon})$.
Summing all components yields exactly the continuous Laplace density
$\frac{\varepsilon}{2}e^{-\varepsilon |z|}$ shown in black.
}
\label{fig:dlap-to-laplace}
\end{figure}

\subsection{Conversion to the Staircase Mechanism} \label{app:starcase-transformation}

We next treat the Staircase noise distribution~\cite{geng2014optimal,geng2015optimal} in the standard form used in the differential privacy literature.
Like for the Laplace mechanisms we use the parameter $p = e^{-\varepsilon}$ and further introduce a shape parameter $\gamma\in[0,1/2]$.
Define the symmetric density $f_{\gamma}$ by
\[
    f_{\gamma}(x)=
        \begin{cases}
            a_{\gamma}\,p^k & \text{ for } |x|\in [k,k+\gamma),\, k\in\mathbb Z_{\ge 0}\\
            a_{\gamma}\,p^{k+1} & \text{ for } |x|\in [k+\gamma,k+1),\, k\in\mathbb Z_{\ge 0}
        \end{cases}
\]
where
\(
    a_{\gamma}=\frac{1-p}{2(\gamma+p(1-\gamma))}
\).
Additive noise using this density function is the staircase mechanism with shape parameter $\gamma$, satisfying $\varepsilon$-differential privacy~\cite{geng2014optimal,geng2015optimal}.

\begin{theorem}\label{thm:staircase}
    Fix $\gamma\in[0,1/2]$. Let $\eta$ be sampled from the discrete Laplace distribution with parameter $p$, and let $Y_{\gamma}$ be independent of $\eta$ with density
    \[
        q_{\gamma}(y) =
            \begin{cases}
                \tfrac{1+p}{2(\gamma+p(1-\gamma))} & \text{for } |y|<\gamma\\
                \tfrac{p}{2(\gamma+p(1-\gamma))} & \text{for } \gamma\le |y|\le 1-\gamma\\
                0 & \text{otherwise }
            \end{cases} \enspace .
    \]
    Then $\eta+Y_{\gamma}$ has density $f_{\gamma}$.
\end{theorem}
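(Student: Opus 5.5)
The plan is to mirror the proof of \Cref{thm:laplace}: apply Lemma~\ref{lem:cell-convolution} with $q = q_\gamma$, and check that $q_\gamma(u) + p\,q_\gamma(1-u)$ is piecewise constant in $u\in[0,1)$ with exactly the right values. First I will check the hypotheses of the lemma. By construction $q_\gamma$ is symmetric around zero and supported in $[-1,1]$. For normalization, the inner region $|y|<\gamma$ has length $2\gamma$ and the middle region $\gamma\le|y|\le 1-\gamma$ has length $2(1-2\gamma)$. Writing $D = 2(\gamma+p(1-\gamma))$, the total mass is
\[
\bigl(2\gamma(1+p) + 2(1-2\gamma)p\bigr)/D .
\]
Since $2\gamma(1+p) + 2(1-2\gamma)p = 2\gamma + 2p - 2\gamma p = D$, the mass is $1$. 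The condition $\gamma\le 1/2$ guarantees that the middle region is nonempty or degenerate, so the two regions do not overlap.

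Next, fix $z$ and write $|z| = n+u$ with $n\in\Z_{\ge 0}$ and $u\in[0,1)$. By the lemma, the density of $\eta+Y_\gamma$ at $z$ is $\tfrac{1-p}{1+p}\,p^n\bigl(q_\gamma(u)+p\,q_\gamma(1-u)\bigr)$. I split into two cases according to $u$.
\begin{enumerate}
\item If $u\in[0,\gamma)$, then $q_\gamma(u) = (1+p)/D$. For $1-u\in(1-\gamma,1]$ we have $q_\gamma(1-u) = 0$. So the sum is $(1+p)/D$, and the density is $\tfrac{1-p}{D}p^n = a_\gamma p^n$. This matches $f_\gamma$ on $|z|\in[n,n+\gamma)$.
\item If $u\in[\gamma,1-\gamma]$, both $u$ and $1-u$ lie in the middle region. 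The sum is $(p+p^2)/D = p(1+p)/D$, and the density is $a_\gamma p^{n+1}$.
\end{enumerate}

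The remaining range $u\in(1-\gamma,1)$ is the one that needs care. There $q_\gamma(u) = 0$, since $u > 1-\gamma$ lies outside the support apart from the boundary convention, and $1-u\in(0,\gamma)$ gives $q_\gamma(1-u) = (1+p)/D$. The sum is again $p(1+p)/D$, so the density is $a_\gamma p^{n+1}$. Together with the previous case this covers $|z|\in[n+\gamma,n+1)$, as required.

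The main obstacle is purely bookkeeping: making sure the three subintervals of $[0,1)$ line up correctly with the staircase steps, and that the boundary values (measure zero) and the degenerate cases $\gamma=0$ and $\gamma=1/2$ cause no issue. For $\gamma=0$ the inner region is empty and $q_0$ is uniform on $[-1,1]$. Densities are only defined almost everywhere, so boundary points can be ignored. Finally, negative $z$ follows by the symmetry already built into Lemma~\ref{lem:cell-convolution}.
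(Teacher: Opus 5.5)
Your proposal is correct and follows essentially the same route as the paper: check that $q_\gamma$ is a symmetric density of total mass one, apply Lemma~\ref{lem:cell-convolution}, and match the three subintervals $u<\gamma$, $\gamma\le u\le 1-\gamma$, $u>1-\gamma$ against the steps of $f_\gamma$. Putting the boundary point $u=1-\gamma$ in the middle case is slightly more accurate pointwise than the paper's split (the paper's third case gives the wrong values of $q_\gamma(u)$ and $q_\gamma(u-1)$ at that single point). This does not affect the result, since the point has measure zero.
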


\begin{proof}
    The function $q_{\gamma}$ is symmetric and nonnegative.
    Its total mass (the integral over all reals) is
    \[
        2\gamma\cdot \tfrac{1+p}{2(\gamma+p(1-\gamma))} + 2(1-2\gamma)\cdot \tfrac{p}{2(\gamma+p(1-\gamma))} = 1,
    \]
    so it is indeed a density. For any $z$ we write $|z|=n+u$ with $n\ge 0$ and $u\in[0,1)$, and apply Lemma~\ref{lem:cell-convolution}.    
    If $u<\gamma$, then $q_{\gamma}(u)=\frac{1+p}{2(\gamma+p(1-\gamma))}$ and $q_{\gamma}(u-1)=0$, because $|u-1|>1-\gamma$. Therefore
    \[
        f_{X+Y_{\gamma}}(z) 
            = \tfrac{1-p}{1+p}p^n\cdot \tfrac{1+p}{2(\gamma+p(1-\gamma))}
            = a_{\gamma}p^n.
    \]
    
    If $\gamma\le u<1-\gamma$, then $q_{\gamma}(u) = q_{\gamma}(u-1) = \frac{p}{2(\gamma+p(1-\gamma))}$, and hence
    \[
        f_{X+Y_{\gamma}}(z) 
            = \tfrac{1-p}{1+p}p^n\cdot \tfrac{p(1+p)}{2(\gamma+p(1-\gamma))}
            = a_{\gamma}p^{n+1}.
    \]
    
    If $1-\gamma\le u$, then $q_{\gamma}(u)=0$ and $q_{\gamma}(u-1)=\frac{1+p}{2(\gamma+p(1-\gamma))}$, so again
    \[
        f_{X+Y_{\gamma}}(z)
            = \tfrac{1-p}{1+p}p^n\cdot p\tfrac{1+p}{2(\gamma+p(1-\gamma))}
            = a_{\gamma}p^{n+1}.
    \]
    
    This matches exactly the definition of $f_{\gamma}$.
\end{proof}
The transformation is illustrated in \Cref{fig:dlap-to-staircase}.

\begin{figure}[t]
\centering
\includegraphics{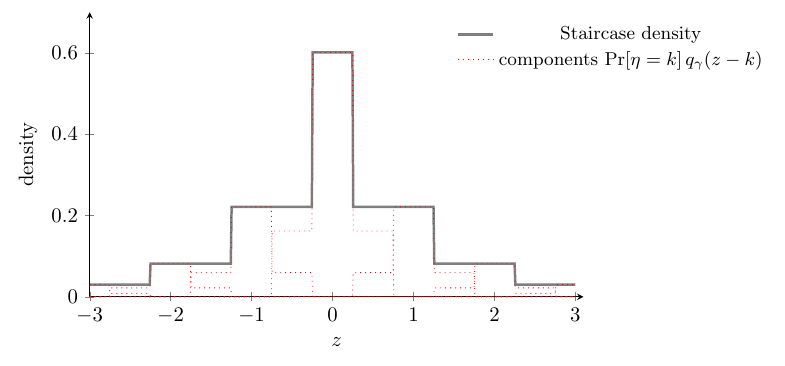}
\caption{
Illustration of Theorem~\ref{thm:staircase} for $\varepsilon=1$ and $\gamma=\tfrac14$
(so $p=e^{-1}$).
Each red dashed curve is one translated component
$\Pr[\eta=k]\,q_\gamma(z-k)$ corresponding to an atom
of the discrete Laplace distribution
$\eta\sim\mathrm{DLap}(e^{-\varepsilon})$.
Summing all components yields exactly the Staircase density shown in black.
}
\label{fig:dlap-to-staircase}
\end{figure}

\section{Omitted Details from \texorpdfstring{\Cref{sec:faster}}{Section 5}}\label{app:faster}

In Section~\ref{app:faster-formula}, we describe the general framework for obtaining faster unbiased formula. In the remaining sections, we use this framework to obtain faster estimators for various functions.

The functions in this section satisfy the condition of Theorem~\ref{thm:unbiase-multivariate}, and thus have unbiased estimators, because they all exhibit subexponential growth which ensures their expectation exists and is finite. We formalize this in the following lemma.

\begin{lemma}\label{lem:poly}
    For $p \in (0,1)$, suppose $|f(y)| \leq r(\|y\|_1)$, where $r(k) : \R \rightarrow \R$ is an increasing function such that $\lim_{k \rightarrow \infty}r(k)p^{k/4} = 0$. For a given $x \in \Z^d$, let $\tilde{x} = x + \DLap(p)^d$. Then, $\E{f(\tilde{x})}$ exists and is finite.
\end{lemma}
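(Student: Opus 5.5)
We show absolute convergence of $\E{|f(\tilde x)|}=\left(\tfrac{1-p}{1+p}\right)^d\sum_{z\in\Z^d}p^{\|z-x\|_1}|f(z)|$ by grouping the terms according to the $\ell_1$-norm of the noise $i=z-x$. Fix $x$ and set $m=\|x\|_1$. By the triangle inequality $\|z\|_1\le m+\|i\|_1$, and since $r$ is increasing, $|f(z)|\le r(m+k)$ whenever $\|i\|_1=k$. Let $N_d(k)$ be the number of $i\in\Z^d$ with $\|i\|_1=k$. Then
\[
\E{|f(\tilde x)|}\le \left(\tfrac{1-p}{1+p}\right)^d\sum_{k\ge 0} N_d(k)\,p^{k}\,r(m+k).
\]

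The hypothesis gives a $K$ with $r(j)\le p^{-j/4}$ for all $j\ge K$. For $k\ge K$ we therefore get $r(m+k)\le p^{-m/4}p^{-k/4}$, so each such term is at most $p^{-m/4}N_d(k)p^{3k/4}$. The finitely many terms with $k<K$ are finite, which is clear because $N_d(k)$ is finite and $r$ is real-valued.

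It remains to bound $N_d(k)$, which is the only counting step. Each $i$ with $\|i\|_1=k$ is determined by a choice of signs together with a composition of $k$ into $d$ nonnegative parts. This gives $N_d(k)\le 2^d\binom{k+d-1}{d-1}$, a polynomial in $k$ of degree $d-1$. Since $p^{3k/4}$ decays geometrically, the tail $\sum_k \mathrm{poly}(k)\,p^{3k/4}$ converges, for instance by the ratio test.

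Alternatively, we can avoid counting altogether by factoring over coordinates: $\sum_{i\in\Z^d}p^{3\|i\|_1/4}=\bigl(\sum_{j\in\Z}p^{3|j|/4}\bigr)^d<\infty$. This works because $p^{3\|i\|_1/4}=\prod_j p^{3|i_j|/4}$.

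There is no real obstacle in this argument. The only care needed is that $r$ bounds $f$ in terms of $\|z\|_1$ rather than the noise norm, which the monotonicity of $r$ and the triangle inequality handle, and that we stay on the side of absolute convergence so the expectation is well defined.
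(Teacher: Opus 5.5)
Your proof is correct and follows essentially the same route as the paper: group the sum by the $\ell_1$-norm $k$ of the noise, bound $|f|$ by $r(\|x\|_1+k)$ via the triangle inequality and monotonicity of $r$, and let the geometric factor $p^k$ absorb both the polynomial shell count and the growth of $r$. The paper phrases this as a Cauchy criterion on partial sums over $\ell_1$-balls, but it bounds absolute values just as you do; your version directly gives $\E{|f(\tilde x)|}<\infty$, which is the hypothesis Theorems~\ref{thm:unbiase-univariate} and~\ref{thm:unbiase-multivariate} actually use, and your shell count $2^d\binom{k+d-1}{d-1}$ correctly includes the sign choices.
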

\begin{proof}
    We will show that the expectation converges to a finite value. To do so, observe the expectation can be written as a limit of the following partial sums:
    \begin{align*}
        \E{f(\tilde{x})} &= \sum_{z \in \mathbb{Z}^d} f(x+z) \left(\frac{1-p}{1+p}\right)^d p^{\|z\|_1} \\
        &= \lim_{k \rightarrow \infty} \sum_{z : \|z\|_1 \leq k} f(x+z) \left(\frac{1-p}{1+p}\right)^d p^{\|z\|_1}.
    \end{align*}
    Define the partial sum $P(k) = \sum_{z : \|z\|_1 \leq k} f(x+z) p^{\|z\|_1}$. We will show that for any $\alpha > 0$, there exists $k_0$ such that for all $k_1, k_2 \geq k_0$ such that $k_1 < k_2$, we have $|P(k_2) - P(k_1)| \leq \alpha$, and thus the above limit exists.

    To do this, we bound 
    \begin{align*}
        |P(k_2) - P(k_1)| &= \left| \sum_{k=k_1}^{k_2}\sum_{z : \|z\|_1 = k} f(x+z) p^k \right| \\
        &\leq 
        \sum_{k=k_1}^{k_2}\sum_{z : \|z\|_1 = k} |f(x+z)|p^k \\
        &\leq \sum_{k=k_1}^{k_2}\sum_{z : \|z\|_1 = k} p^{k} r\left(\|x+z\|_1  \right) \\
        &\leq \sum_{k=k_1}^{k_2} \sum_{z:\|z\|_1=k} p^{k} r\left( \|x\|_1 + \|z\|_1 \right) \\
        &= \sum_{k=k_1}^{k_2} \sum_{z:\|z\|_1=k} p^{k} r\left( \|x\|_1 + k \right) \\
        &\leq \sum_{k=k_1}^{k_2} (n+k)^n p^{k} r\left( \|x\|_1 + k \right),
    \end{align*}
    where the last step follows because $|\{z:\|z\|_1 = k\}\| = \binom{n+k-1}{n-1} \leq (n+k)^n$. $k_0$ can be made large enough so that $(n+k)^n p^{k/4} < 1$ for all $k \geq k_0$. 
    By the property of $r$, $k_0$ can be made large enough so that $r(\|x\|_1 + k) p^{k/2} \leq r(2k) p^{k/2} \leq \beta$ for all $k \geq k_0$. This means the above sum can be bounded as $\sum_{k=k_1}^\infty p^{k/4} \beta < \frac{\beta}{1-p^{1/4}}$, and the proof is finished by taking $\beta < \alpha (1-p^{1/4})$.
\end{proof}

\subsection{A General Debiasing Technique Via Local Decompositions}\label{app:faster-formula}
The unbiased estimator of Theorem~\ref{thm:unbiase-multivariate} depends only on the values of $f$ in the set $y+ \{-1, 0, 1\}^n$. Suppose that $f$ takes a constant value $C$ on a region $y + R$ where $R \subseteq \{-1, 0, 1\}^d$. Then, the total contribution of the region $y + R$ to the unbiased estimator $g$ is $C$ times
\[
    \sum_{\xi \in R} \prod_{j=1}^n \alpha_{\xi_j} \triangleq \vol(R),
\]
which can be interpreted as a type of volume of $R$.
Thus, if $f$ can be decomposed into a polynomial number of regions on the entire space $y + \{-1, 0, 1\}^n$ \emph{and} each region has an efficiently computable volume, then $g(y)$ can be efficiently computed. The precise notion of decomposability we need is:

\begin{definition}\label{def:decomp}
    Let $\mathcal{B}$ denote a set of subsets of $\{-1, 0, 1\}^n$. We will refer to such a $\mathcal{B}$ as a basis.
    We say a function $f : \Z^m \rightarrow \R$ is $m$-locally decomposable under a basis $\mathcal{B}$ if for any point $y \in \Z^n$, there exist sets $S_1, \ldots, S_m \in \mathcal{B}$ and coefficients $c_1, \ldots, c_m \in \R$ such that $f(y + \xi) = \sum_{i=1}^m c_i \mathbf{1}[\xi \in S_i]$ for all $\xi \in \{-1,0,1\}^n$.
\end{definition}

Observe that the decomposition may be different for each point $y$ and only needs to hold in a local neighborhood $y + \{-1, 0, 1\}^n$. We say that $f$ is \emph{efficiently} decomposable under $\mathcal{B}$ if the volume of each $S \in \mathcal{B}$ is efficiently computable, $f$ is $m$-locally decomposable with $m = \text{poly}(n)$, and the decomposition $S_1, \ldots, S_m, c_1, \ldots, c_m$ can be efficiently computed for any $y$. 

By combining~\Cref{eq:unbiased} with \Cref{def:decomp}, the unbiased estimator of $f$ can be written as
\begin{equation}\label{eq:unbias-rewrite}
    g(y) = \sum_{i=1}^m c_i \vol(S_i),
\end{equation}
and this equation is efficiently computable as $f$ is efficiently decomposable under $\mathcal{B}$.

A particularly nice basis is the set $\mathcal{R}_n$ of $n$-dimensional rectangles. A rectangle in $\{-1, 0, 1\}^n$ is a Cartesian product $R = R_1 \times \cdots R_n$, where each $R_i \subseteq \{-1, 0, 1\}$. We can compute the volume of rectangles easily: the volume of each $R_i$ is $\vol(R_i) = \sum_{r \in R_i} \alpha_{r}$, and volume of $R$ is given by $\vol(R) = \prod_{i=1}^n \vol(R_i)$. \Cref{eq:unbias-rewrite} gives an efficient method for computing the unbiased estimator of any efficiently decomposable function under $\mathcal{R}_n$. The next three sections will use this technique, with different choices of bases, to give efficient unbiased estimators for specific functions.

\subsection{Decision Tree Regression} 

It is easy to see that a decision tree regression function with $s$ internal nodes partitions the input space $\Z^n$ into $s+1$ rectangles. This means it can be written as $f(x) = \sum_{i=1}^{s+1} c_i \mathbf{1}[x \in R_i]$ for rectangles $R_1, R_2, \ldots, R_n \subseteq \Z^n$. This global property is enough to imply local decomposability under rectangles, and allow us to obtain an efficient estimator: 
\begin{theorem}\label{thm:ub-dt}
    A decision tree regression function $f$ with $s$ internal nodes is $s+1$-decomposable under $\mathcal{R}_n$, and its corresponding estimator is
    \[
        g(x) = \sum_{i=1}^{s+1} c_i \vol((R_i-x) \cap\{-1, 0, 1\}^n ).
    \]
\end{theorem}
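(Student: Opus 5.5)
We first establish the global rectangle decomposition, then localize it to the neighborhood $x+\{-1,0,1\}^n$ and apply \eqref{eq:unbias-rewrite}.

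\emph{Global decomposition.} Induct on the tree. Each leaf $\ell$ is reached by exactly those inputs satisfying every predicate along its root-to-leaf path. Each predicate has the form $y_i \le c$ (YES branch) or $y_i > c$ (NO branch), so it constrains only coordinate $i$ to a half-line. The set reaching $\ell$ is therefore an intersection of such half-spaces, which is a product $R_\ell = \prod_{i=1}^n I_{\ell,i}$ with each $I_{\ell,i}\subseteq \Z$ an interval, possibly unbounded. A binary tree with $s$ internal nodes has $s+1$ leaves. The leaf sets partition $\Z^n$, since every input follows exactly one path. Hence $f(y)=\sum_{i=1}^{s+1} c_i \mathbf{1}[y\in R_i]$, where $c_i$ is the label of leaf $i$.

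\emph{Localization.} Fix $x$. For $\xi\in\{-1,0,1\}^n$ we have $f(x+\xi)=\sum_i c_i \mathbf{1}[\xi \in (R_i-x)\cap\{-1,0,1\}^n]$. Each set $S_i=(R_i-x)\cap\{-1,0,1\}^n=\prod_j \big((I_{i,j}-x_j)\cap\{-1,0,1\}\big)$ is a product of subsets of $\{-1,0,1\}$, so it lies in $\mathcal{R}_n$. This shows $f$ is $(s+1)$-locally decomposable under $\mathcal{R}_n$ in the sense of \Cref{def:decomp}.

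\emph{The estimator.} Substitute this decomposition into \eqref{eq:unbiased-multivariate}. Swapping the finite sums gives $g(x)=\sum_i c_i\sum_{\xi\in S_i}\prod_j\alpha_{\xi_j}=\sum_i c_i\vol(S_i)$, which is exactly \eqref{eq:unbias-rewrite}. The function $f$ is bounded, so Lemma~\ref{lem:poly} applies and the unbiasedness guarantee of \Cref{thm:unbiase-multivariate} holds.

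\emph{Running time.} Since each $S_i$ is a product set, its volume factorizes as $\vol(S_i)=\prod_j \vol\big((I_{i,j}-x_j)\cap\{-1,0,1\}\big)$, and each factor is a sum of at most three $\alpha$'s. This gives $O(n)$ work per leaf and $O(ns)$ in total. The only step needing care is computing every $I_{i,j}$ in $O(n s)$ total time. To do this, traverse the tree depth-first while maintaining the $n$ current per-coordinate intervals, tightening one interval at each predicate and restoring it on backtrack. At each leaf, spend $O(n)$ to evaluate the product of volumes. I expect no real obstacle here; the argument is mostly bookkeeping.
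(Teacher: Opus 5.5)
Your proof is correct and follows the paper's route: write $f$ as a sum of leaf labels times indicators of the $s+1$ leaf rectangles, intersect each shifted rectangle $R_i - x$ with $\{-1,0,1\}^n$, and apply \eqref{eq:unbias-rewrite}. Beyond the paper's terse argument, you also justify why each leaf region is a product of intervals, why the localized sets lie in $\mathcal{R}_n$, and how to compute all the rectangles within the $O(ns)$ time budget.
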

Since the volume of a rectangle may be computed in $O(n)$ time, we can compute the unbiased estimator in $O(ns)$ time.
\begin{proof}
    For $\xi \in \{-1, 0, 1\}^n$, the decision tree regression is given by $f(x+\xi) = \sum_{i=1}^{s+1} c_i\mathbf{1}[x + \xi \in R_i] = \sum_{i=1}^{s+1} c_i\mathbf{1}[\xi \in R_i - x]$. As each $R_i-x$ is not necessarily a subset of $\{-1, 0, 1\}^n$, we express the function equivalently as $\sum_{i=1}^{s+1} c_i\mathbf{1}[\xi \in (R_i - x) \cap \{-1, 0, 1\}^n]$.
    The unbiased estimator then follows from \eqref{eq:unbias-rewrite}.
\end{proof}

\subsection{Order Statistics}
The minima and maxima of $x$ also admit efficient local decompositions under $\mathcal{R}_n$. This is because, for the points in $x + \{-1, 0, 1\}^n$, the max or min can take on three values, which can be written as a disjunction (sum) of three cases, each of which is a conjunction (rectangle). For example, in the case where $\min(x+\xi) = \min(x)+1$, it holds that $\xi_i = +1$ for all $i$ where $x_i = \min(x)$ and $\xi_i \in \{0, 1\}$ for all $i$ such that $x_i = \min(x)+1$. The other cases can be expressed similarly. This yields the following:
\begin{theorem}\label{thm:unbiased-ex}
    The minimum and maximum functions are $3$-locally decomposable under $\mathcal{R}_n$, and the corresponding estimator is $\min(x)-1 + (\alpha_0 + \alpha_1)^{\underline{n}_0} + (\alpha_1)^{\underline{n}_0}(\alpha_0 + \alpha_1)^{\underline{n}_1}$ for the minimum and $\max(x)+1 - (\alpha_0 + \alpha_{-1})^{\overline{n}_0} - (\alpha_{-1})^{\overline{n}_0}(\alpha_0 + \alpha_{-1})^{\overline{n}_1},$ for the maximum, where $\underline{n}_0$ (resp. $\overline{n}_0$) is the number of times in $x$ that $\min(x)$ (resp. $\max(x)$) appears, and $\underline{n}_1$ (resp. $\overline{n}_1$) is the number of times in $x$ that $\min(x)+1$ (resp. $\max(x)-1$) appears.
\end{theorem}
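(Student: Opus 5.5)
The plan is to use the local decomposition framework of \Cref{app:faster-formula} with the rectangle basis $\mathcal{R}_n$. I treat the minimum first and then get the maximum by symmetry. Fix $y=x\in\Z^n$, write $m=\min(x)$, and let $\underline{n}_0,\underline{n}_1$ be the numbers of coordinates equal to $m$ and to $m+1$. For every $\xi\in\{-1,0,1\}^n$ we have $\min(x+\xi)\in\{m-1,m,m+1\}$: the coordinate achieving $m$ moves by at most one, and every coordinate is at least $m-1$ after shifting. Instead of partitioning into the three level sets, I will use the telescoping identity
\[
\min(x+\xi) = (m-1)\cdot\mathbf{1}[\xi\in\{-1,0,1\}^n] + \mathbf{1}[\min(x+\xi)\ge m] + \mathbf{1}[\min(x+\xi)\ge m+1].
\]
This exhibits $\min$ as $3$-locally decomposable, provided each of the three events is a rectangle.

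Next I identify the rectangles coordinatewise. The whole cube is trivially a rectangle. The event $\min(x+\xi)\ge m$ requires $\xi_i\in\{0,1\}$ for each $i$ with $x_i=m$. Coordinates with $x_i\ge m+1$ satisfy $x_i+\xi_i\ge m$ automatically, so they are unconstrained ($R_i=\{-1,0,1\}$). The event $\min(x+\xi)\ge m+1$ requires $\xi_i=1$ when $x_i=m$ and $\xi_i\in\{0,1\}$ when $x_i=m+1$, and leaves coordinates with $x_i\ge m+2$ free. Both events are therefore Cartesian products, i.e.\ elements of $\mathcal{R}_n$.

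I then apply \eqref{eq:unbias-rewrite} with $\vol(R)=\prod_i\vol(R_i)$. The key observation is that an unconstrained coordinate contributes $\alpha_{-1}+\alpha_0+\alpha_1 = 1+\tfrac{2p}{(1-p)^2}-\tfrac{2p}{(1-p)^2}=1$, so the full cube has volume $1$. The three volumes are then $1$, $(\alpha_0+\alpha_1)^{\underline{n}_0}$ and $\alpha_1^{\underline{n}_0}(\alpha_0+\alpha_1)^{\underline{n}_1}$. Summing with coefficients $m-1$, $1$, $1$ gives exactly the stated estimator. Computing $\underline{n}_0,\underline{n}_1$ and the powers takes $O(n)$ time.

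For the maximum I use $\max(x)=-\min(-x)$ and the symmetry $\alpha_{-1}=\alpha_1$. Equivalently, I repeat the argument with the identity $\max(x+\xi)=(M+1)-\mathbf{1}[\max\le M]-\mathbf{1}[\max\le M-1]$, where $M=\max(x)$. The roles of $\xi_i=+1$ and $\xi_i=-1$ swap, which yields the formula with $\alpha_{-1}$, $\overline{n}_0$ and $\overline{n}_1$. I expect no real obstacle here. The only point needing care is checking that coordinates away from the extreme values are genuinely unconstrained in each event, so that they contribute the factor $1$ and drop out of the formula.
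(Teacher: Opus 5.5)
Your proposal is correct and is essentially the paper's proof. The paper uses the same three rectangles (the full cube, $\{0,1\}^A\times\{-1,0,1\}^{[n]\setminus A}$, and $\{1\}^A\times\{0,1\}^B\times\{-1,0,1\}^{[n]\setminus(A\cup B)}$) with coefficients $k-1$, $1$, $1$; it reaches them by writing each level set of $\min(x+\xi)$ as a difference of nested rectangles and telescoping, whereas you state the threshold-indicator identity directly, and your observation that $\alpha_{-1}+\alpha_0+\alpha_1=1$ (so free coordinates contribute a factor $1$) is what the paper uses implicitly when evaluating the volumes.
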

The above estimators can be computed in $O(n)$ time. 
\begin{proof}
    First, we will start with the minimum. Let $k = \min(x)$. Observe that $\min(x+\xi)$ is equal to either $k+1$, $k$ or $k-1$ when $\xi \in \{-1,0,1\}^n$. Let $A = \{i \in [n]: x_i = k\}$ and $B = \{i \in [n]: x_i = k+1\}$. We have $\min(x+\xi) = k-1$ if and only if at least one index of in $A$ satisfies $\xi_i = -1$. This set can be written as a difference of rectangles $R_1 \setminus R_2$, where $R_1 = \{-1,0,1\}^n$ and $R_2 = \{0,1\}^A \times \{-1,0,1\}^{[n] \setminus A}$.
    
    Next, $\min(x+\xi) = k$ if $\xi_i \geq 0$ for all indices in $A$, and $\xi_i = 0$ for at least one index in $A$, or $\xi_i = 1$ for all indices in $A$ and $\xi_i = -1$ for at least one index in $B$. The first set of indices can be written as $R_2 \setminus R_3$, where $R_3 = \{1\}^A \times \{-1, 0, 1\}^{[n] \setminus A}$. The second set of indices can be written as $R_3 \setminus R_4$, where $R_4 = \{1\}^A \times \{0,1\}^B \times \{-1,0,1\}^{[n] \setminus A \setminus B}$. Finally, $\min(x+\xi) = k+1$ precisely when $\xi \in R_4$. 
    This establishes that $\min$ can be written as 
    \begin{align*}
        \min(x + \xi) &= (k-1) \mathbf{1}[\xi \in R_1] - (k-1) \mathbf{1}[\xi \in R_2] + k \mathbf{1}[\xi \in R_2] - k \mathbf{1}[\xi \in R_3] \\
        &\qquad + k\mathbf{1} [\xi \in R_3] - k \mathbf{1}[\xi \in R_4] + (k+1) \mathbf{1} [\xi \in R_4] \\
        &= (k-1)\mathbf{1}[\xi \in R_1] + \mathbf{1}[\xi \in R_2] + \mathbf{1}[\xi \in R_4].
    \end{align*}
    Therefore, $\min$ is $3$-decomposable. Thus, by~\eqref{eq:unbias-rewrite}, the unbiased estimator is 
    \begin{align*}
        (k-1)\vol(R_1) + \vol(R_2) + \vol( R_4) = k-1 + (\alpha_0 + \alpha_1)^{|A|} + (\alpha_1)^{|A|}(\alpha_0 + \alpha_1)^{|B|}.
    \end{align*}
    
    We take a similar approach for the maximum. Let $k' = \max(x)$, $A' = \{i \in [n] : x_i = k'\}$, $B = \{i \in [n] : x_i = k'-1\}$, $R_2' = \{-1,0\}^{A'} \times \{-1, 0, 1\}^{[n]\setminus A'}$, $R_3' = \{-1\}^A \times \{-1, 0, 1\}^{[n]\setminus A}$, and $R_4' = \{-1\}^A \times \{-1, 0\}^B \times \{-1, 0, 1\}^{[n] \setminus A \setminus B}$. We can show that $\max$ is $3$-locally decomposable as 
    \begin{align*}
        \max(x + \xi) &= (k'+1) \mathbf{1}[\xi \in R_1] - (k'+1) \mathbf{1}[\xi \in R_2'] + k' \mathbf{1}[\xi \in R_2'] - k'\mathbf{1}[\xi \in R_3'] \\
        &\qquad + k' \mathbf{1}[\xi \in R_3'] - k' \mathbf{1}[\xi \in R_4'] + (k'-1) \mathbf{1}[\xi \in R_4'] \\
        &= (k'+1) \mathbf{1}[\xi \in R_1] - \mathbf{1}[\xi \in R_2'] - \mathbf{1}[\xi \in R_4'].
    \end{align*}
    The unbiased formula is given by
    \[
        k'+1 - (\alpha_0 + \alpha_{-1})^{|A'|} - (\alpha_{-1})^{|A'|}(\alpha_0 + \alpha_{-1})^{|B'|}. \qedhere
    \]
\end{proof}

General order statistics, which return element of rank $i$ in $x$ for a given $i$, can also be computed efficiently. To do so, we require a more complex basis. For a given set $A \subseteq [n]$ and an integer $1 \leq a \leq |A|$, define the set
    \begin{gather*}
        \textstyle{S_{A,a}^{-} = \{\xi \in \{-1, 0, 1\}^A : \sum_{i=1}^n \mathbf{1}[\xi_i = -1] = a\}} \\
        \textstyle{S_{A,a}^{+} = \{\xi \in \{-1, 0, 1\}^A : \sum_{i=1}^n \mathbf{1}[\xi_i = +1] = a\}},
    \end{gather*}
    or the vectors where exactly $a$ of $n$ coordinates are equal to $1$. The volume of these sets are given by
    \begin{gather*}
        \vol(S_{A,a}^-) = \binom{|A|}{a} \alpha_{-1}^a(\alpha_0 + \alpha_1)^{|A|-a} \\
        \vol(S_{A,a}^+) = \binom{|A|}{a} \alpha_{+1}^a (\alpha_{-1} + \alpha_0)^{|A|-a}.
    \end{gather*}
    
    Our basis will be the following collection of sets
    \[
        \mathcal{S} = \left\{ S_{A,a}^+ \times S_{B,b}^- \times \{-1, 0, 1\}^{[n]\setminus A\setminus B} : A, B \subseteq [n], A \cap B = \emptyset, 0 \leq a \leq |A|, 0 \leq b  \leq |B| \right\} .
    \]
    These are sets with both rectangular and symmetric structure, and their volume can be computed by multiplying the above volume formulae. 

    We now derive an efficient unbiased estimator for the $i$th order statistic using $\mathcal{S}$ as the basis.
    
    \begin{theorem}\label{thm:unbiased-ord}
        The $i$th order statistic is $O(n^2)$-locally decomposable under $\mathcal{S}$, and the corresponding unbiased estimate is computable in $O(n^2)$ time.
    \end{theorem}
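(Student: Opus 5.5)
We mirror the proof of \Cref{thm:unbiased-ex}: write the $i$th order statistic $\mathrm{ord}_i(x+\xi)$ for $\xi \in \{-1,0,1\}^n$ as a telescoping sum of indicator functions of threshold events, and then express each threshold event as a disjoint union of $O(n)$ sets from $\mathcal{S}$. Let $k=\mathrm{ord}_i(x)$. Every coordinate moves by at most one, so $\mathrm{ord}_i(x+\xi)\in\{k-1,k,k+1\}$. This gives
\[
    \mathrm{ord}_i(x+\xi) = (k-1) + \mathbf{1}[\mathrm{ord}_i(x+\xi)\ge k] + \mathbf{1}[\mathrm{ord}_i(x+\xi)\ge k+1],
\]
and since $\vol(\{-1,0,1\}^n)=(\alpha_{-1}+\alpha_0+\alpha_1)^n=1$, \eqref{eq:unbias-rewrite} reduces the task to computing the volumes of the two events $E_t=\{\xi : \mathrm{ord}_i(x+\xi)\ge t\}$ for $t\in\{k,k+1\}$.

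The event $\mathrm{ord}_i(x+\xi)\ge t$ holds iff fewer than $i$ coordinates satisfy $x_j+\xi_j<t$, i.e.\ iff $|\{j: x_j+\xi_j\le t-1\}|\le i-1$. Only three kinds of coordinates matter. Coordinates with $x_j\le t-2$ are always below $t$, and there are $m$ of them, a count fixed by $x$. Coordinates with $x_j\ge t+1$ are never below $t$. The two boundary classes are $A=\{j:x_j=t-1\}$, which drop below $t$ unless $\xi_j=+1$, and $B=\{j:x_j=t\}$, which drop below $t$ iff $\xi_j=-1$.

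Let $a$ be the number of $+1$ entries of $\xi$ on $A$ and $b$ the number of $-1$ entries on $B$. Then the count of coordinates below $t$ is $m+(|A|-a)+b$, so
\[
E_t=\bigsqcup_{(a,b):\, m+|A|-a+b\le i-1} S^+_{A,a}\times S^-_{B,b}\times\{-1,0,1\}^{[n]\setminus A\setminus B}.
\]
Each piece is a basis element of $\mathcal{S}$, and the pieces are disjoint because $(a,b)$ is determined by $\xi$. Using the product volume formulas,
\[
\vol(E_t)=\sum_{a,b}\binom{|A|}{a}\alpha_{+1}^a(\alpha_{-1}+\alpha_0)^{|A|-a}\binom{|B|}{b}\alpha_{-1}^b(\alpha_0+\alpha_1)^{|B|-b},
\]
with the remaining factor equal to $1$. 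There are $O(n^2)$ pairs $(a,b)$ for each of the two values of $t$, which gives $O(n^2)$-local decomposability.

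The main obstacle is the running time. After an $O(n)$ counting pass to find $k$, $m$, $|A|$, $|B|$, we precompute binomials and powers for all $a,b$ in $O(n)$ time. The double sum then costs $O(n^2)$ directly, or $O(n)$ with prefix sums over $b$, since the constraint is $b\le a+c$ for a constant $c$. Besides this, one must check edge cases. These are when $t-1$ or $t$ does not occur in $x$, i.e.\ empty $A$ or $B$, and when $i$ lies in a tie block of $x$. The threshold formulation handles both automatically.
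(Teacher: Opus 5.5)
Your proof is correct and follows essentially the paper's argument. The paper writes the statistic as $k\,\mathbf{1}[\xi\in R]$ minus the indicator of $\{\mathrm{ord}_i = k-1\}$ plus the indicator of $\{\mathrm{ord}_i = k+1\}$; your threshold events $E_k$ and $E_{k+1}$ are exactly the complement of the first region and the second region, decomposed into the same $S^+_{A,a}\times S^-_{B,b}\times\{-1,0,1\}^{\text{rest}}$ pieces with the same linear constraints and volume formulas (your prefix-sum remark, giving $O(n)$ evaluation after finding $k$, is a small improvement over the paper's stated $O(n^2)$).
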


    \begin{proof}
    Define $k = x_{(i)}$, or the element in $x$ of rank $i$. Observe that $f$ will return either $k-1, k$, or $k+1$ in the set $x + \{-1, 0, 1\}^n$. Define the sets of indices $A = \{j \in [n] : x_j = k-1\}$, $B = \{j \in [n] : x_j = k\}$, and $C = \{j \in [n] : x_j = k+1\}$. Let $\ell_1, \ell_2$ denote the maximum index with rank $k-1, k$; i.e. $\ell_1 = \sum_{i=1}^n \mathbf{1}[x_i \leq k-1]$ and $\ell_2 = \sum_{i=1}^n \mathbf{1}[x_i \leq k]$. Define $d_1 = i - \ell_1$ and $d_2 = \ell_2-i+1$. These are the ``distances'' that the rank $i$ element is from being equal to $k-1$ or $k+1$.
    
    We will begin by decomposing the set where $f(x+\xi) = k-1$. For $\xi \in \{-1, 0, +1\}^n$, let $a^+$ denote the number of indices $j \in A$ such that $\xi_j = +1$. Similarly, let $b^-$ denote the number of indices $j \in B$ such that $\xi_j = -1$. Thus, the element of rank $i$ will be $k-1$ if and only if $b^- - a^+ \geq d_1$. This means the $\xi$ where $f(x+\xi) = k-1$ can be decomposed as the following disjoint union of sets
    \[
        \bigsqcup_{b = 0}^{|B|} \bigsqcup_{a=0}^{b-d_1} S_{A,a}^+ \times S^{-}_{B,b} \times \{-1, 0, 1\}^{[n] \setminus A \setminus B}.
    \]
    Denote the above region as $R_{a,b} = S_{A,a}^+ \times S^{-}_{B,b} \times \{-1, 0, 1\}^{[n] \setminus A \setminus B}$. The volumes of the regions are 
    \begin{align*}
    \vol(R_{a,b}) &= 
    \vol(S^+_{A,a}) \vol(S_{B,b}^-) \\ &= \binom{|A|}{a} \binom{|B|}{b}\alpha_{1}^{a}(\alpha_{-1} + \alpha_0)^{|A|-a} \alpha_{-1}^{b} (\alpha_{0} + \alpha_1)^{|B|-b}.
    \end{align*}
    Similarly, let $b^+$ denote the number of indices $j \in B$ such that $x_{j} = 1$, and $c^-$ denote the number of indices $j \in C$ such that $x_j = -1$. 
    This means $f(x + \xi) = k+1$ if and only if $b^+ - c^-  \geq d_2$, corresponding to the region
    \[
        \bigsqcup_{b=0}^{|B|} \bigsqcup_{c = 0}^{b - d_2} S_{B, b}^+ \times S_{C, c}^- \times \{-1, 0, 1\}^{[n] \setminus B \setminus C}.
    \]
    Denote the above regions as $R'_{b,c}$. Their volumes are
    \begin{align*}
        \vol(R_{b,c}') &= \vol(S_{B,b}^+) \vol(S_{C,c}^-) \\ &= \binom{|B|}{b} \binom{|C|}{c} \alpha_{1}^b(\alpha_{-1} + \alpha_0)^{|B|-b}\alpha_{-1}^c (\alpha_0 + \alpha_1)^{|C|-c}.
    \end{align*}
    Finally, let $R = \{-1,0,+1\}^n$. We can represent $f(x + \xi)$ on the whole domain by starting with $k \mathbf{1}[\xi \in R]$, and then using the two unions above to either add $+1$ or $-1$. Formally, the representation is
    \[
        f(x + \xi) = k\mathbf{1}[\xi \in R] - \sum_{b=0}^{|B|} \sum_{a = 0}^{b - d_1} \mathbf{1}[ \xi \in R_{a,b}] + \sum_{b=0}^{|B|} \sum_{c=0}^{b-d_2} \mathbf{1}[\xi \in R_{c,b}'],
    \]
    showing that $f$ is $O(n^2)$-decomposable among symmetric rectangles. Using the volume formulae, the corresponding unbiased estimator is 
    \begin{multline*}
    g(x) = k - \sum_{b=0}^{|B|} \sum_{a = 0}^{b-d_1} \binom{|A|}{a} \binom{|B|}{b}\alpha_{1}^{a}(\alpha_{-1} + \alpha_0)^{|A|-a} \alpha_{-1}^{b} (\alpha_{0} + \alpha_1)^{|B|-b} \\ + \sum_{b=0}^{|B|} \sum_{c = 0}^{b-d_2} \binom{|B|}{b} \binom{|C|}{c} \alpha_{1}^b(\alpha_{-1} + \alpha_0)^{|B|-b}\alpha_{-1}^c (\alpha_0 + \alpha_1)^{|C|-c}.
    \end{multline*}
    To prove the running time guarantee, we can precompute the binomial coefficients up to $n$ in $O(n^2)$ time, as well as the exponents $\alpha_1^i$, $(\alpha_{-1}+\alpha_0)^i$ up to $n$. Then, the above estimator is a sum of $O(n^2)$ terms, resulting in a total time of $O(n^2)$.
\end{proof}

\subsection{Entropy}
The entropy takes the form $H(x_1, \ldots, x_n) = \sum_{i=1}^n h(\frac{x_i}{\sum_j x_j})$, where $h(z) = -z \log_2(z)$. As our results apply to the entire domain $\Z^n$, we need to extend the domain of the entropy function. To do so, we use an extension of the form
\[
    h(x, y) = \begin{cases} 0 & x \leq 0 \text{ or } y \leq 0 \\ 
    h(\min\{\frac{x}{y}, 1\}) & \text{otherwise}
    \end{cases}
\]
and write $H(x_1, \ldots, x_n) = \sum_{i=1}^n h(x_i, \sum_j x_j)$. There are many possible extensions of entropy to negative counts, and as noted in~\cite{calmon2025debiasing}, different extensions can produce different MSEs. We use the above extension because it is simple and bounded, which is sufficient for our purposes.

The structure of this function is particularly appealing for \Cref{def:decomp} because, on the domain $x + \{-1, 0, 1\}^n$, each $h(x_i, \sum_j x_j)$ depends only on $x_i + \xi_i$ and $\sum_j x_j + \xi_j$. This means that for each $i$, $h(x_i, \sum_j x_j)$ is constant on the set $P^i_{a,b} = \{\xi \in \{-1, 0, 1\}^n : \xi_i = a, \sum_j \xi_j = b\}$. This suggests the following basis:
\[
    \mathcal{P}_n = \{P_{a,b}^i : 1 \leq i \leq n, a \in \{-1, 0, 1\}, b \in \{-n, -n+1, \ldots, n\} \}.
\]
Furthermore, each $P_{a,b}^i$ is a highly symmetric set, and its volume can be computed efficiently using trinomial coefficients as follows:

\begin{lemma}\label{lem:vol-entropy}
    The volume of each $P_{a,b}^i$ satisfies
    \[
        \vol(P_{a,b}^i) = \sum_{\substack{0 \leq r \leq n \\ b+r \text{ is even}} } \alpha_{0}^{n-r} \alpha_1^{r}\binom{n-1}{n-1-r+|a|,\frac{b-a+r-|a|}2,\frac{r-|a|-b+a)}2}.
    \]
\end{lemma}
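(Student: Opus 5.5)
The plan is to compute $\vol(P^i_{a,b}) = \sum_{\xi \in P^i_{a,b}} \prod_{j=1}^n \alpha_{\xi_j}$ by grouping the vectors $\xi$ according to how many of their coordinates are nonzero. The key observation is that $\alpha_{-1} = \alpha_{+1}$ (both equal $-p/(1-p)^2$). Hence the weight $\prod_j \alpha_{\xi_j}$ of a vector $\xi \in \{-1,0,1\}^n$ depends only on $r = |\{j : \xi_j \neq 0\}|$, and equals $\alpha_0^{n-r}\alpha_1^{r}$. This already holds for coordinate $i$ itself, since $\alpha_{\xi_i}=\alpha_a$ is $\alpha_0$ if $a=0$ and $\alpha_1$ otherwise. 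So
\[
\vol(P^i_{a,b}) = \sum_{r=0}^{n} \alpha_0^{n-r}\alpha_1^{r}\, N_r,
\]
where $N_r$ is the number of $\xi\in P^i_{a,b}$ with exactly $r$ nonzero entries. It then remains to show that $N_r$ is the stated trinomial coefficient, and that it vanishes unless $b+r$ is even.

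To count $N_r$, I fix $\xi_i = a$ and look at the remaining $n-1$ coordinates. They must contain exactly $r-|a|$ nonzero entries, so $n-1-r+|a|$ zeros. Let $u$ and $v$ be the numbers of $+1$'s and $-1$'s among them. The constraint $\sum_j \xi_j = b$ becomes $u - v = b-a$, and together with $u+v = r-|a|$ this gives
\[
u = \frac{b-a+r-|a|}{2}, \qquad v = \frac{r-|a|-b+a}{2}.
\]
The number of ways to assign which of the $n-1$ coordinates are zero, $+1$, or $-1$ is then exactly $\binom{n-1}{\,n-1-r+|a|,\,u,\,v\,}$, which matches the lemma.

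Two bookkeeping points remain. First, the parity condition: $u$ is an integer iff $b-a+r-|a|$ is even. Since $a+|a|\in\{0,2\}$ is always even, this holds iff $b+r$ is even, so the terms with $b+r$ odd contribute nothing and can be dropped from the sum. Second, for values of $r$ where some entry of the trinomial is negative (e.g.\ $r<|a|$, or $|b-a| > r-|a|$), there are no such $\xi$. With the standard convention that a multinomial coefficient with a negative entry is $0$, these terms vanish and the formula remains correct over the full range $0\le r\le n$.

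I do not expect a real obstacle. The only delicate step is getting the bookkeeping right: including coordinate $i$ in the exponent $r$ while excluding it from the trinomial, and stating the convention for out-of-range terms explicitly.
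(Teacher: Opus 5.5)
Your proposal is correct and follows the same route as the paper's proof. You use $\alpha_{-1}=\alpha_1$ to reduce each weight to $\alpha_0^{n-r}\alpha_1^{r}$ with $r=\|\xi\|_1$, then count the vectors of $P^i_{a,b}$ on each $\ell_1$ shell by a trinomial coefficient over the $n-1$ coordinates other than $i$, with the parity check via $a+|a|\in\{0,2\}$ and the zero convention for negative entries handled explicitly.
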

In the above formula, $\binom{n}{a,b,c}$ with integers $a+b+c = n$ denotes the trinomial coefficient given by $\frac{n!}{a!b!c!}$. If any of $a,b,c$ are negative, we extend the notation to evaluate to $0$.
\begin{proof}
        The product $\prod_{j=1}^n \alpha_{\xi_j}$ simplifies to $\alpha_{-1}^{n_{-1}}\alpha_0^{n_0} \alpha_1^{n_1}$, where $n_{-1}, n_0, n_1$ is the number of $-1$, $0$s, and $1$s in $\xi$, respectively. 
    Since $\alpha_{-1} = \alpha_1$, this simplifies to $\alpha_0^{n_0} \alpha_1^{n-n_0}$, or equivalently $\alpha_0^{n-\|\xi\|_1} \alpha_1^{\|\xi\|_1}$. 
    Let $l_1(r) = \{\xi \in \{-1, 0, 1\}^n : \|\xi\|_1 = r\}$ denote the $l_1$ shell of radius $r$. 
    We have
    \begin{align*}
        \vol(P_{a,b}^i) 
            &= \sum_{\xi \in P_{a,b}^i} \alpha_0^{n-\|\xi\|_1} \alpha_1^{\|\xi\|_1} \\
            &= \sum_{r= 0}^n \sum_{\xi \in P_{a,b}^i \cap l_1(r)} \alpha_0^{n-r} \alpha_1^{r} \\
            &= \sum_{r = 0}^n \alpha_0^{n-r} \alpha_1^{r} \lvert P_{a,b}^i \cap l_1(r) \rvert.
    \end{align*}
    Each vector $\xi \in P_{a,b}^i \cap l_1(r)$ satisfies $\xi_i = a$, $\sum_{j\neq i} \xi_j = b-a$, and $\sum_{j\neq i} |\xi_j| = r - |a|$. 
    An equivalent property for membership in $P_{a,b}^i \cap l_1(r)$ is that $\xi_i = a$ and for coordinates $i\neq j$, there are $n-1-r+|a|$ equal to $0$, $\frac{b-a+r-|a|}{2}$ equal to $+1$, and $\frac{-b+a+r-|a|}{2}$ equal to $-1$. Note that if $r,b$ have different parity, then $P_{a,b}^i \cap l_1(r) = \emptyset$. Thus, 
    \[
    |P_{a,b}^i \cap l_1(r)| = \begin{cases} \binom{n-1}{n-1-r+|a|, \frac{b-a+r-|a|}{2}, \frac{-b+a+r-|a|}{2}} & b+r \text{ is even} \\ 0 & \text{otherwise} \end{cases},\]
    and the result is shown.
\end{proof}

Having computed the volumes of each set in $\mathcal{P}_n$, we are able to give an efficient unbiased estimator for $H(x_1, \ldots, x_n)$.

\begin{theorem}\label{thm:entropy}
    The entropy function is $O(n^2)$-locally decomposable under $\mathcal{P}_n$, and its corresponding unbiased estimator is given by
    \[
        g(x_1, \ldots, x_n) = \sum_{i=1}^n \sum_{a \in \{-1, 0, 1\}} \sum_{b = -n}^n \vol(P^i_{a,b}) h\left(x_i + a, b+\sum_{j=1}^n x_j\right).
    \]
\end{theorem}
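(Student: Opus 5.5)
The plan is to verify the local decomposition directly and then invoke \eqref{eq:unbias-rewrite}. Fix $x \in \Z^n$ and write $S = \sum_j x_j$. For $\xi \in \{-1,0,1\}^n$ we have
\[
H(x+\xi) = \sum_{i=1}^n h\Bigl(x_i + \xi_i,\; S + \sum_j \xi_j\Bigr).
\]
For each fixed $i$, the summand depends on $\xi$ only through the pair $(\xi_i, \sum_j \xi_j)$. This pair ranges over $a \in \{-1,0,1\}$ and $b \in \{-n,\ldots,n\}$. The sets $P^i_{a,b}$, over these $(a,b)$, form a partition of $\{-1,0,1\}^n$. Hence, for every $\xi$,
\[
h\Bigl(x_i+\xi_i, S+\sum_j\xi_j\Bigr) = \sum_{a\in\{-1,0,1\}}\sum_{b=-n}^{n} h(x_i+a, S+b)\,\mathbf{1}[\xi \in P^i_{a,b}],
\]
since exactly one indicator is nonzero and its coefficient is the correct value. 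Summing over $i$ expresses $H(x+\xi)$ as a combination of $3n(2n+1) = O(n^2)$ indicators of sets in $\mathcal{P}_n$, with coefficients $c = h(x_i+a, S+b)$. These coefficients depend on $x$ but not on $\xi$, which is exactly what \Cref{def:decomp} requires.

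Next, I would check that \Cref{thm:unbiase-multivariate} applies. The extended $h$ takes values in $[0, \max_{z\in[0,1]} -z\log_2 z]$, so $H$ is bounded by $n/(e\ln 2)$. Therefore $\E{|H(\tilde x)|}$ is finite, and this follows even more directly than through Lemma~\ref{lem:poly}. Substituting the decomposition into \eqref{eq:unbiased-multivariate} and exchanging the finite sums gives
\[
g(x) = \sum_\xi \prod_j \alpha_{\xi_j} H(x+\xi) = \sum_{i,a,b} h(x_i+a, S+b)\sum_{\xi\in P^i_{a,b}}\prod_j\alpha_{\xi_j} = \sum_{i,a,b} \vol(P^i_{a,b})\, h(x_i+a, S+b).
\]
This is the claimed formula.

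For efficiency I would use Lemma~\ref{lem:vol-entropy}. Its volume does not depend on $i$, only on $(a,b)$, so there are only $3(2n+1)$ distinct volumes. I expect the main obstacle to be the running-time bookkeeping, because a naive evaluation of each volume costs $O(n)$ and would give $O(n^3)$ total. I would first precompute factorials (or trinomials) together with the powers $\alpha_0^k$ and $\alpha_1^k$ in $O(n)$ time. Each of the $O(n)$ volumes then takes $O(n)$ time via the lemma's sum over $r$, for $O(n^2)$ in total. After that, the outer triple sum has $O(n^2)$ terms, each costing $O(1)$ time, so the whole estimator is computed in $O(n^2)$ time.
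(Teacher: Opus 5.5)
Your proposal is correct and matches the paper's proof. Both arguments use the fact that, for each fixed $i$, the sets $P^i_{a,b}$ partition $\{-1,0,1\}^n$ and the $i$th summand is constant on each of them; both then apply \eqref{eq:unbias-rewrite}, and both get $O(n^2)$ time because the volumes do not depend on $i$. Your explicit integrability check, via the boundedness of the extended $h$, is a small welcome addition that the paper covers only implicitly through Lemma~\ref{lem:poly}.
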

Using memoization, the $O(n^2)$ trinomial coefficients ahead of time requires $O(n^2)$ time. Having access to each coefficient, each $\vol(P_{a,b}^i)$ can be computed in $O(n)$ time. Since $\vol(P_{a,b}^i)$ does not depend on $i$, the volumes for each possible $a,b$ can all be computed in $O(n^2)$ time. The estimator in
Theorem~\ref{thm:entropy} is a sum over $O(n^2)$ terms, and thus the total running time is $O(n^2)$.

\begin{proof}
For all $\xi \in P_{a,b}^i$, $h(x+\xi_i, \sum_j x_j + \xi_j)$ takes the value $h(x_i + a, b + \sum_j x_j)$. Furthermore, the entire space $\{-1, 0, 1\}^n$ is partitioned by $P_{a,b}^i$ for $a \in \{-1, 0, 1\}$ and $b \in \{-n, \ldots, n\}$. Thus, we can write
\[
    h\left(x_i + \xi_i, \sum_{j=1}^n x_j + \xi_j\right) = \sum_{a \in \{-1, 0, 1\}} \sum_{b = -n}^n \mathbf{1}[\xi \in P_{a,b}^i]h\left(x_i + a, b + \sum_{j=1}^n x_j \right),
\]
and this gives the decomposition
\[
    H(x_1+ \xi_1, \ldots, x_n + \xi_n) = \sum_{i=1}^n\sum_{a \in \{-1, 0, 1\}} \sum_{b = -n}^n \mathbf{1}[\xi \in P_{a,b}^i] h\left(x_i + a, b + \sum_{j=1}^n x_j\right).
\]
The unbiased estimator follows by applying~\eqref{eq:unbias-rewrite}.
\end{proof}

\subsection{KL Divergence}\label{app:kl}
Assuming a domain of size $n$, one can compute the KL divergence between the empirical distributions of two different counts $x_1, \ldots, x_n$ and $y_1, \ldots, y_n$. This has the following formula: $KL(x_1, \ldots, x_n \| y_1, \ldots, y_n) = \sum_{i=1}^n kl(\frac{x_i}{\sum_j{x_j}} , \frac{y_i}{\sum_j y_j})$, where $kl(x,y) = x \log(\frac{x}{y})$. Again, because our results apply to arbitrary $x,y \in \Z^d$, we must extend $kl(x,y)$ to the entire integral domain. We do so by defining
\[
    kl(x_1, x_2, y_1, y_2) = \begin{cases} 0 & x_1\leq 0, x_2\leq 0, y_1\leq 0, \text{ or } y_2 \leq 0 \\
    kl(\min\{1, \frac{x_1}{x_2}\}, \min\{1, \frac{y_1}{y_2}\})
    \end{cases}
\]
which satisfies $KL(x_1, \ldots, x_n \| y_1, \ldots, y_n) = \sum_{i=1}^n kl(x_i, \sum_j{x_j} , y_i, \sum_j y_j)$.
Each term $kl(x_i, \sum_j{x_j} , y_i, \sum_j y_j)$ is actually constant on the set $P^i_{a_1,b_1} \times P^i_{a_2,b_2}$ for any $a_1,b_1,a_2,b_2$. Thus, we will use the following basis:
\[
    \mathcal{P}_n^2 = \{P_{a_1,b_1}^i \times P^i_{a_2,b_2} : 1 \leq i \leq n; a_1,a_2 \in \{-1, 0, 1\}; b_1,b_2 \in \{-n, \ldots, n\}\}.
\]
Furthermore, because $\vol(A \times B) = \vol(A) \vol(B)$, we have that $\vol(P_{a_1,b_1}^i \times P^i_{a_2,b_2}) = \vol(P_{a_1,b_1}^i) \vol(P^i_{a_2,b_2})$ and thus can be computed efficiently by Lemma~\ref{lem:vol-entropy}.
Using similar techniques as the entropy, we can obtain an unbiased estimator:
\begin{theorem}
    The KL divergence is $O(n^3)$-locally decomposable under $\mathcal{P}_n^2$, and its corresponding unbiased formula is     
    \begin{multline*}
        g(x_1, \ldots, x_n, y_1, \ldots, y_n) \\ = \sum_{i=1}^n \sum_{a_1,a_2 \in \{-1, 0, 1\}} \sum_{b_1 = -n}^n \sum_{b_2 = -n}^n \vol(P_{a_1,b_1}^i) \vol(P_{a_2,b_2}^i)kl\left(x_i+a_1, b_1+\sum_{j=1}^n{x_j}, y_i+a_2, b_2+\sum_{j=1}^n y_j \right).
    \end{multline*}
\end{theorem}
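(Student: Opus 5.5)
The plan mirrors the entropy case (Theorem~\ref{thm:entropy}), now with $2n$ input coordinates $(x,y)\in\Z^{2n}$ and noise vector $(\xi,\zeta)\in\{-1,0,1\}^n\times\{-1,0,1\}^n$. The weight $\prod_j \alpha_{\xi_j}\prod_j\alpha_{\zeta_j}$ factorizes over the two blocks. Therefore the volume of any product set $S\times T$ with $S,T\subseteq\{-1,0,1\}^n$ equals $\vol(S)\vol(T)$, where each factor is the $n$-dimensional volume from Section~\ref{app:faster-formula}.

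The first step is to establish the local decomposition. For fixed $i$, the term $kl\bigl(x_i+\xi_i,\sum_j x_j+\sum_j\xi_j,\, y_i+\zeta_i,\sum_j y_j+\sum_j\zeta_j\bigr)$ depends on the noise only through the four numbers $(\xi_i,\sum_j\xi_j,\zeta_i,\sum_j\zeta_j)$. As $a_1,a_2$ range over $\{-1,0,1\}$ and $b_1,b_2$ over $\{-n,\dots,n\}$, the sets $P^i_{a_1,b_1}$ partition $\{-1,0,1\}^n$, and so do the sets $P^i_{a_2,b_2}$. Hence the products $P^i_{a_1,b_1}\times P^i_{a_2,b_2}$ partition $\{-1,0,1\}^{2n}$, and the $i$-th term is constant on each product, equal to $kl(x_i+a_1,b_1+\sum_j x_j,y_i+a_2,b_2+\sum_j y_j)$. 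Writing the $i$-th term as a sum of these constants times indicators and summing over $i$ expresses $KL(x+\xi\,\|\,y+\zeta)$ as a combination of indicators of basis sets from $\mathcal{P}^2_n$.

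Next, applying \eqref{eq:unbias-rewrite} together with the volume factorization yields exactly the displayed formula. Each $\vol(P^i_{a,b})$ is given by Lemma~\ref{lem:vol-entropy} and does not depend on $i$.

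The count of terms is $n\cdot 9\cdot(2n+1)^2=O(n^3)$, which gives $O(n^3)$-local decomposability. For the $O(n^3)$ running time, precompute all $O(n)$ distinct volumes $\vol(P_{a,b})$ in $O(n^2)$ total as in the entropy case, then sum the $O(n^3)$ terms in constant time each. Unbiasedness requires that Theorem~\ref{thm:unbiase-multivariate} applies. This holds because the extended $kl$ takes arguments in $[0,1]$ or is zero, but $kl$ may blow up when the $y$-ratio approaches $0$. So the one point to check is that the extended function is bounded, or at least of polynomial growth, so that Lemma~\ref{lem:poly} applies.

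The only delicate step is the partition claim for the product basis, in particular confirming that the product weight equals the product of the two block weights. Everything else is bookkeeping, and I expect no real obstacle beyond this.
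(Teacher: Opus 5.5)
Your proposal is correct and follows the paper's proof. The products $P^i_{a_1,b_1}\times P^i_{a_2,b_2}$ partition $\{-1,0,1\}^{2n}$, each $kl$ term is constant on them, and \eqref{eq:unbias-rewrite} with $\vol(A\times B)=\vol(A)\vol(B)$ gives the displayed formula. The growth check you left open is immediate: for positive integer arguments $\min\{1,y_1/y_2\}\ge 1/y_2$, so each term has absolute value at most $1/e+\log y_2$, and Lemma~\ref{lem:poly} applies.
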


By memoizing $\vol(P^i_{a,b})$ for each possible $a,b$ (taking $O(n^2)$ time) and then computing the final sum over $O(n^3)$ terms, the above estimator has running time $O(n^3)$.

\begin{proof}
    For convenience, write $\xi_i' = \xi_{i+n}$. 
    For all $\xi \in P^i_{a_1,b_1} \times P^i_{a_2,b_2}$, the value of $kl(x_i+\xi_i, \sum_j{x_j} + \xi_j, y_i + \xi_i', \sum_j y_j + \xi_j')$ is $kl(x_i+a_1, b_1+\sum_j{x_j}, y_i+a_2, b_2+\sum_j y_j)$. The entire space $\{-1, 0, 1\}^{2n}$ is partitioned by the $P^i_{a_1,b_1} \times P^i_{a_2,b_2}$ for $a_1,a_2 \in \{-1, 0, 1\}$ and $b_1, b_2 \in \{-n, \ldots, n\}$. Thus, we can write
    \begin{multline*}
        kl\left(x_i + \xi_i, \sum_j x_j + \xi_j, y_i + \xi_i', \sum_j x_j'\right) \\ = \sum_{a_1,a_2 \in \{-1, 0, 1\}} \sum_{b_1 = -n}^n \sum_{b_2 = -n}^n \mathbf{1}[\xi \in P_{a_1,b_1}^i \times P_{a_2,b_2}^i]kl\left(x_i+a_1, b_1+\sum_j{x_j} , y_i+a_2, b_2+\sum_j y_j\right),
    \end{multline*}
    
    and the the KL divergence can be written as
    \begin{multline*}
        KL(x_1+\xi_1, \ldots, x_n+\xi_n \| y_1+\xi_1', \ldots, y_n+\xi_n') \\ = \sum_{i=1}^n \sum_{a_1,a_2 \in \{-1, 0, 1\}} \sum_{b_1 = -n}^n \sum_{b_2 = -n}^n \mathbf{1}[\xi \in P_{a_1,b_1}^i \times P_{a_2,b_2}^i]kl\left(x_i+a_1, b_1+\sum_j{x_j}, y_i+a_2, b_2+\sum_j y_j\right),
    \end{multline*}
    yielding the result.
\end{proof}

\subsection{Proof of \texorpdfstring{\Cref{thm:unbiased-prod}}{Theorem 14}}
\begin{proof}
    Rearranging the definition of $g$, 
    \begin{align*}
        g(y_1, \ldots, y_n) &= \sum\limits_{(\xi_1, \ldots, \xi_n) \in \{-1, 0, 1\}^n} f(y_1 + \xi_1, \ldots, y_n + \xi_n) \prod\limits_{j=1}^n \alpha_{\xi_j} \\
        &= \sum\limits_{(\xi_1, \ldots, \xi_n) \in \{-1, 0, 1\}^n} f_1(Y_1 + \xi^{(1)}) f_2(Y_2 + \xi^{(2)}) \prod_{\xi \in \xi^{(1)}} \alpha_{\xi} \prod_{\xi \in \xi^{(2)}} \alpha_\xi,
    \end{align*}
    where $\xi^{(1)}, \xi^{(2)}$ is the partitioning of $(\xi_1, \ldots, \xi_n)$ into the coordinates corresponding to $Y_1, Y_2$. We can also partition the sum this way---observe that the sum over $(\xi_1, \ldots, \xi_n) \in \{-1, 0, 1\}^n$ is the same as a sum of $(\vec{\xi}_1, \vec{\xi}_2) \in \{-1, 0, 1\}^{Y_1} \times \{-1, 0, 1\}^{Y_2}$. Thus, the above sum becomes 
    \begin{align*}
        &\;\sum\limits_{\xi^{(1)} \in \{-1, 0, 1\}^{Y_1}, \xi^{(2)} \in \{-1, 0, 1\}^{Y_2}} f_1(Y_1 + \xi^{(1)}) f_2(Y_2 + \xi^{(2)}) \prod_{\xi \in \xi^{(1)}} \alpha_{\xi} \prod_{\xi \in \xi^{(2)}} \alpha_\xi \\
        &= \left( \sum\limits_{\xi^{(1)} \in \{-1, 0, 1\}^{Y_1}} f_1(Y_1 + \xi^{(1)})  \prod_{\xi \in \xi^{(1)}} \alpha_{\xi} \right)\left(\sum\limits_{ \vec{\xi}_2 \in \{-1, 0, 1\}^{Y_2}}f_2(Y_2 + \xi^{(2)})\prod_{\xi \in \xi^{(2)}} \alpha_\xi\right) \\
        &= g_1(Y_1) g_2(Y_2).
        \qedhere
    \end{align*}
\end{proof}

\section{Empirical Results}\label{appendix:empirical-results}

In this appendix we present details of our empirical validation of our estimator~\Cref{eq:unbiased-multivariate}. 
We study four different types of non-linear functions:
\begin{itemize}
    \item Polynomials, for which we emphasize the advantage of the discrete Laplace mechanism over the classical Laplace mechanism for high values of $\varepsilon$.
    \item Empirical entropy, where we see a significantly improved mean square error for low- and medium values of $\varepsilon$.
    \item The partition function, for which we see improved estimation near the convergence radius boundary.
    \item Profiles, for which we give the first empirical results showing that accurate estimation is feasible in practice.
\end{itemize}

\subsection{Estimation of Polynomials}\label{sec:estimation-polynomials}

For the first application, we consider publishing the number of $k$-stars in a graph. A $k$-star is a subgraph consisting of a central node connected to $k$ neighbors. As such, it can be expressed as $\sum_i \binom{d_i}{k}$ with $d_i$ the degree of node $i$ and $\binom{y}{k}$ a polynomial in $y$.
We address the problem in the edge local differential privacy setting, where two adjacency vectors are considered neighbors if they have a Hamming distance of one. This enables us to publish each degree with Laplace or discrete Laplace with parameter $1 / \varepsilon$. Note that this is the same setting as the one considered in \cite{hillebrand2023unbiased} and \cite{imola_locally_2021}.

We evaluate this task on the Facebook dataset from \cite{leskovec2012learning}. This graph, where each node represents a user and edges represent friendships, contains 4039 nodes and 88,234 edges.

We compare four algorithms:
\begin{itemize}
    \item Unbiased DLap: the number of $k$-stars is computed using our algorithm by post-processing the degrees published with discrete Laplace.
    \item Naive DLap: the number of $k$-stars is naively computed on the degrees published with discrete Laplace.
    \item Unbiased Lap: the number of $k$-stars is computed using the algorithm from \cite{hillebrand2023unbiased} by post-processing the degrees published with continuous Laplace.
    \item LocalLapKStars: the method described in \cite{imola_locally_2021} that does not rely on post-processing.
\end{itemize}

\begin{figure}[t]
    \centering
    \includegraphics[width=0.45\linewidth]{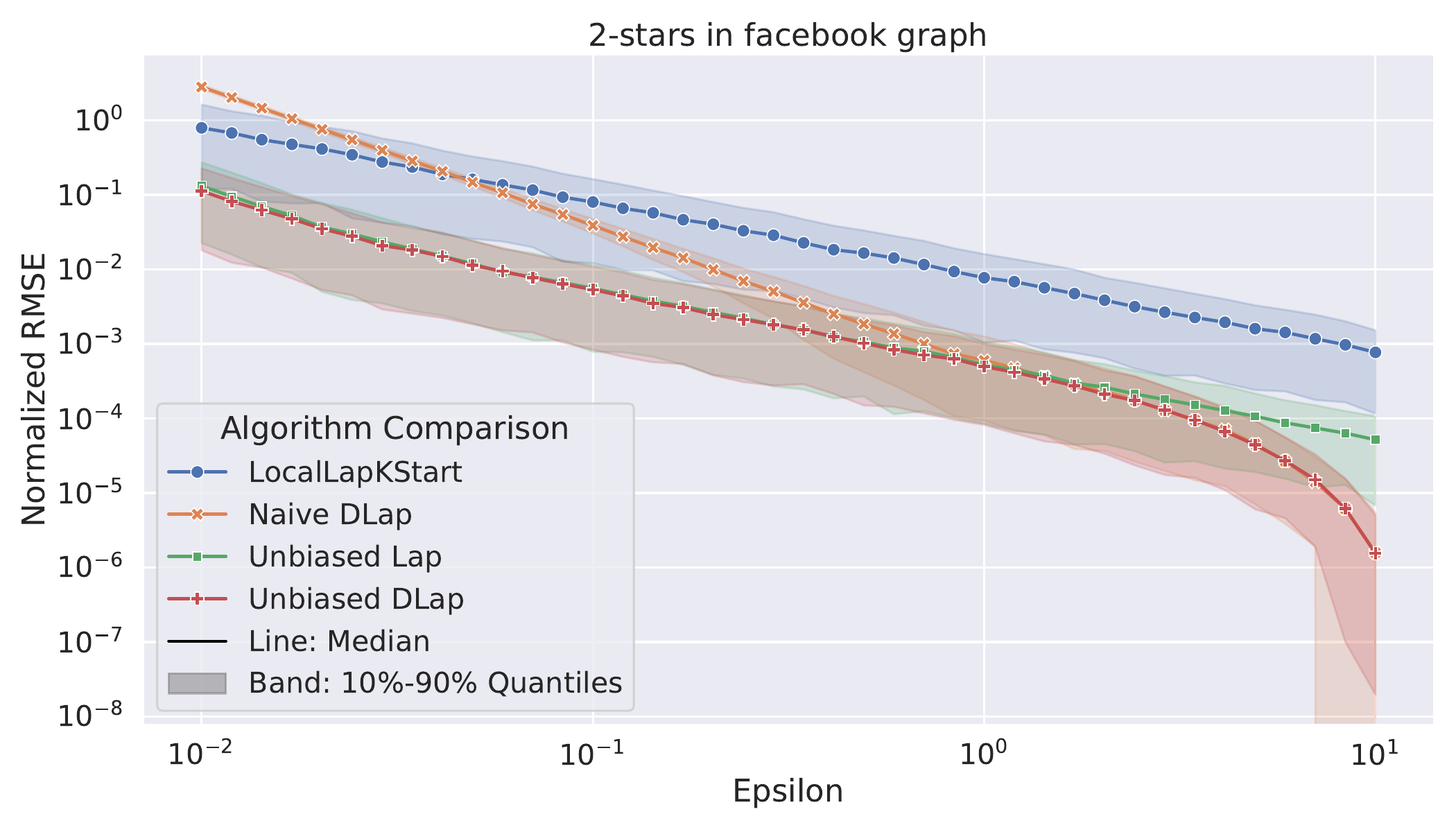}
    \includegraphics[width=0.45\linewidth]{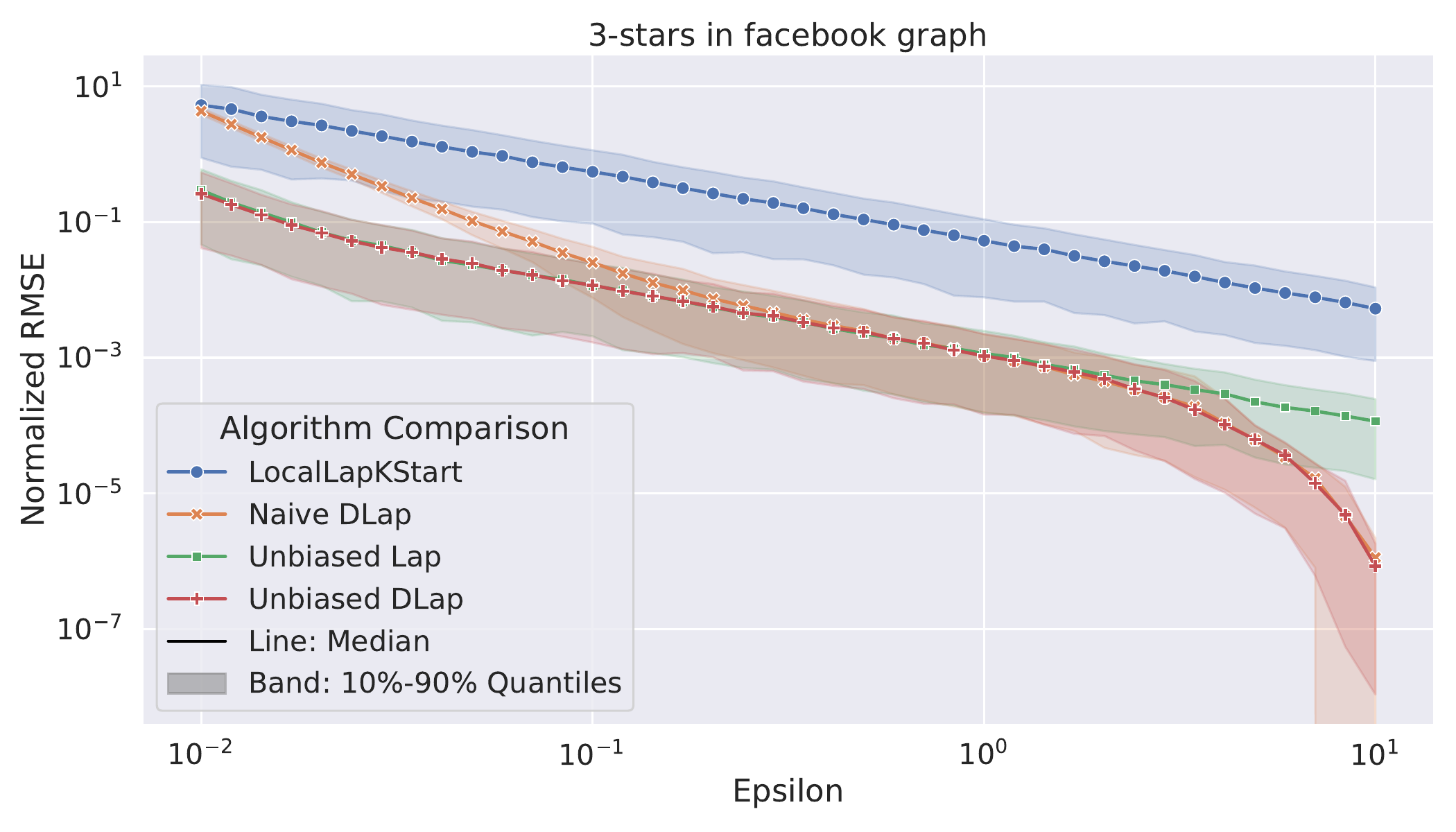}
    \caption{Root mean square error the number of $k$-stars (normalized) under varying privacy budgets.}
    \label{fig:kstars}
\end{figure}

The results, presented on Figure~\ref{fig:kstars}, show that our method outperforms the three other methods across all privacy budgets. We see that the gap with LocalLapKStars is large (at least two orders of magnitude) for all budgets. Naive DLap has a larger error with a sharper slope than Unbiased DLap until a threshold around $\varepsilon = 1$ after which the two methods have similar performances. When compared to Unbiased Lap, the MSE follows the one of Unbiased DLap until $\varepsilon = 1$ after which the error of Unbiased DLap shifts downward with an exponential decay.

These results show that our method obtains the best performance across all budgets with the biggest improvement over Unbiased Lap being in the low privacy regime, and the one over Naive DLap being in the high privacy regime. This shows that both the Discrete Laplace distribution and the unbiasing are necessary to obtain optimal performances.

\subsection{Estimation of Entropy}\label{sec:estimation-bounded-entropy}
In the second application, we study the estimation of the entropy of the distribution given by a histogram.
We focus on the \emph{bounded setting}, i.e., we assume that the sum of all columns $s$ is a publicly known value and that two histograms are neighbors if they have a $\ell_1$-sensitivity of 2.
The data consists of a histogram with $n \in \N^*$ columns, where the number of elements $x_i$ in column $i \in [n]$ gets published with discrete Laplace calibrated with sensitivity 2. 
The goal is to estimate, from the published counts, the value of the entropy
\[
    \sum_{i=1}^n \frac{x_i}{s} \log\frac{s}{x_i}.
\]

We evaluate both the naive and the unbiased estimator on the NIPS Bag of Words dataset \citep{newman2008bag}. 
To make the entropy function defined on $\mathbb{Z}^n$, as required for our estimator we replace the term $\frac{x_i}{s} \log\frac{s}{x_i}$ by zero when $x_i$ is not positive.
This dataset consists of a collection of 1,500 research papers available as a count histogram of words on a dictionary of 12,419 words. 
We excluded 9 documents that contained at most 5 words.
The results are presented in Figure~\ref{fig:bow-nips}. We can see that the unbiased method always outperforms the naive estimator, with the greatest improvement being for the low $\varepsilon$ regime.

\begin{figure}[t]
    \centering
    \includegraphics[width=0.6\linewidth]{figures/nips_bow_normalized_rmse_uncertainty_bands}
    \caption{Mean square error of the entropy estimation on the bag of words of NIPS articles. Quantiles are shown over about 30,000 runs for each $\varepsilon$ (20 repetitions over 1491 histograms).}
    \label{fig:bow-nips}
\end{figure}

\subsection{Estimation of Partition Functions}\label{sec:partition-function}
We evaluate the estimators on the task of computing the partition function
$Z(t)=\sum_i \exp(t x_i)$ from a privatized histogram $\tilde{x}=x+\DLap(e^{-\varepsilon})$.
Partition functions are ubiquitous in machine learning: they normalize exponential-family models, energy-based models, and Gibbs distributions, and their logarithms encode quantities such as likelihoods, free energies, and cumulant-generating functions. 
Estimating $Z(t)$ from a private histogram therefore provides a simple test case for nonlinear post-processing of privatized data, while retaining the essential difficulty that exponentiation amplifies the bias introduced by additive noise.

To make estimation feasible we limit $t$ to an interval in which the naive estimator $\tilde{Z}(t) = \sum_i \exp(t \tilde{x}_i)$ has a well-defined expectation.
We compare $\tilde{Z}(t)$ to the unbiased estimator derived in Section~\ref{sec:DebiasDLap}, choosing parameter $\varepsilon$ small enough that the former has a visible bias.
To see how close the unbiased estimator is to what we might hope for, we also compare to a central model baseline that has access to the true histogram $x$ and needs to output \emph{a single value} of the partition function.

\paragraph{Datasets and results.}
Experiments are conducted on both synthetic and real datasets, varying the parameter $t$ and observing the distribution of estimates over $200$ noise realizations.
The datasets cover two synthetic frequency tables and two real histograms. 
The synthetic uniform table has approximately equal frequencies across the domain, while the synthetic Zipf table has a heavy-tailed frequency profile. 
The ExAC histogram records low allele-count genetic variants, restricted to non-TCGA variants with allele count at most $10$. 
The Shakespeare dataset is an expanded word-frequency profile derived from Shakespeare text, giving a real heavy-tailed distribution with a much larger dynamic range.

Figure~\ref{fig:uniform} (synthetic uniform data) shows that the naive estimator exhibits a clear positive bias that grows rapidly with $t$, leading to large relative error for moderate values of $t$.
In contrast, the unbiased estimator closely tracks the true partition function across the entire range, with substantially reduced bias.
The variance, visualized by the $5$--$95\%$ bands, is moderately larger for the unbiased estimator but remains well-controlled.
This demonstrates that for smooth distributions, debiasing can effectively remove systematic error without introducing prohibitive variance.

For heavy-tailed data (synthetic Zipf, Figure~\ref{fig:zipf}), the qualitative picture is similar but more pronounced.
The naive estimator significantly overestimates $Z(t)$ even for small $t$, while the unbiased estimator remains centered around the true value.
However, the spread of the unbiased estimator increases for larger $t$, reflecting the sensitivity of the exponential function to noise in high-magnitude coordinates.
This highlights the expected bias–variance trade-off: removing bias incurs increased variance, especially for skewed distributions.

Results on real datasets further confirm these trends.
On the ExAC histogram (Figure~\ref{fig:exac}), the unbiased estimator achieves near-zero relative error across all tested values of $t$, while the naive estimator accumulates a steadily increasing bias.
Similarly, on the Shakespeare dataset (Figure~\ref{fig:shakespeare}), which has large dynamic range, the unbiased estimator remains accurate even when $Z(t)$ spans many orders of magnitude, whereas the naive estimator again shows systematic overestimation.

Overall, the experiments validate the theoretical claims: the unbiased estimator successfully eliminates bias for nonlinear functions, and although it may incur higher variance, its mean accuracy is consistently superior across distributions.
In regimes where multiple independent estimates are aggregated, as for the partition function, this unbiasedness is particularly important, as it prevents the linear accumulation of error inherent to biased estimators.

\begin{figure}[t]
\centering
\begin{subfigure}{0.48\textwidth}
    \centering
    \includegraphics[width=\textwidth]{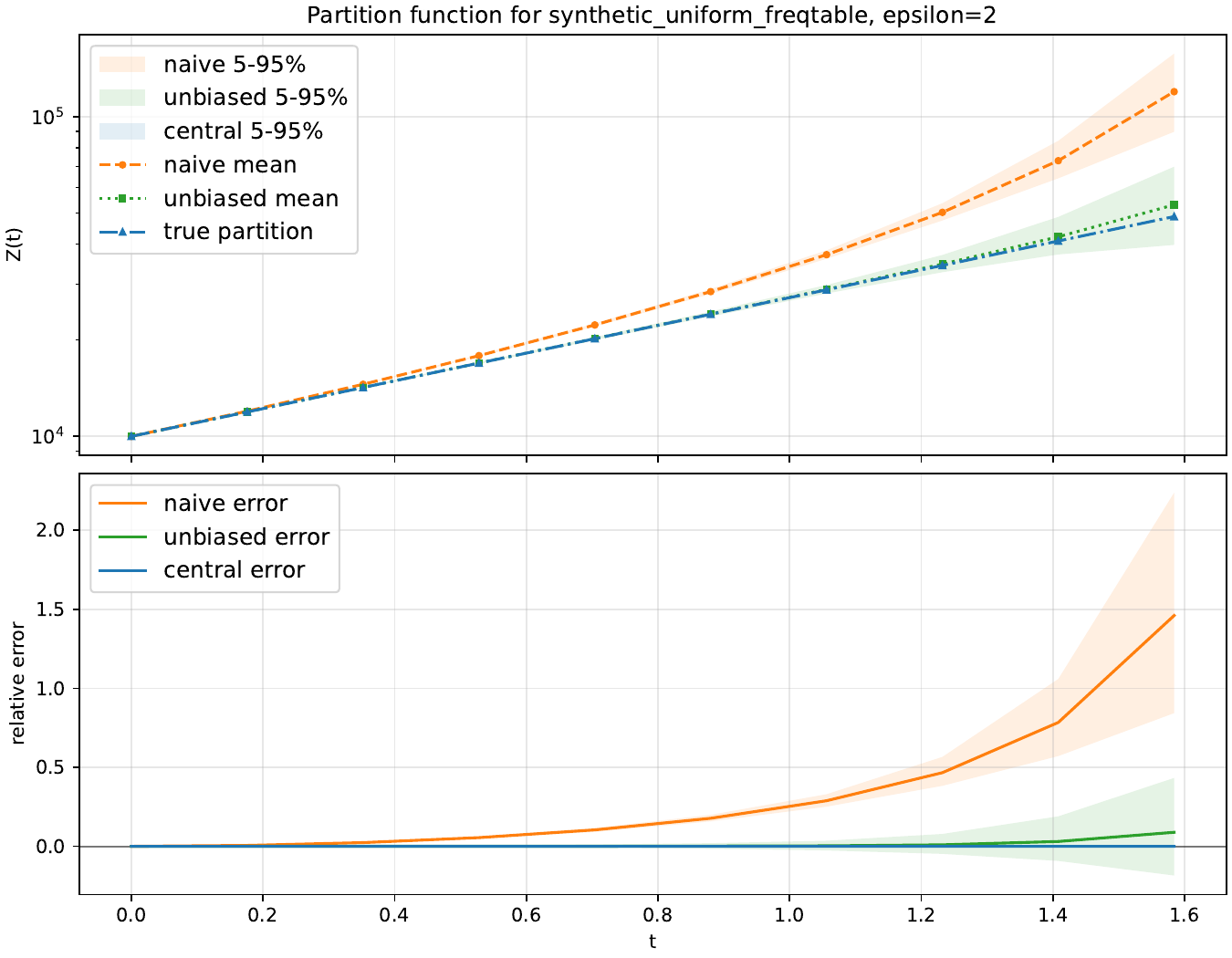}
    \caption{Synthetic uniform data.}
    \label{fig:uniform}
\end{subfigure}\hfill
\begin{subfigure}{0.48\textwidth}
    \centering
    \includegraphics[width=\textwidth]{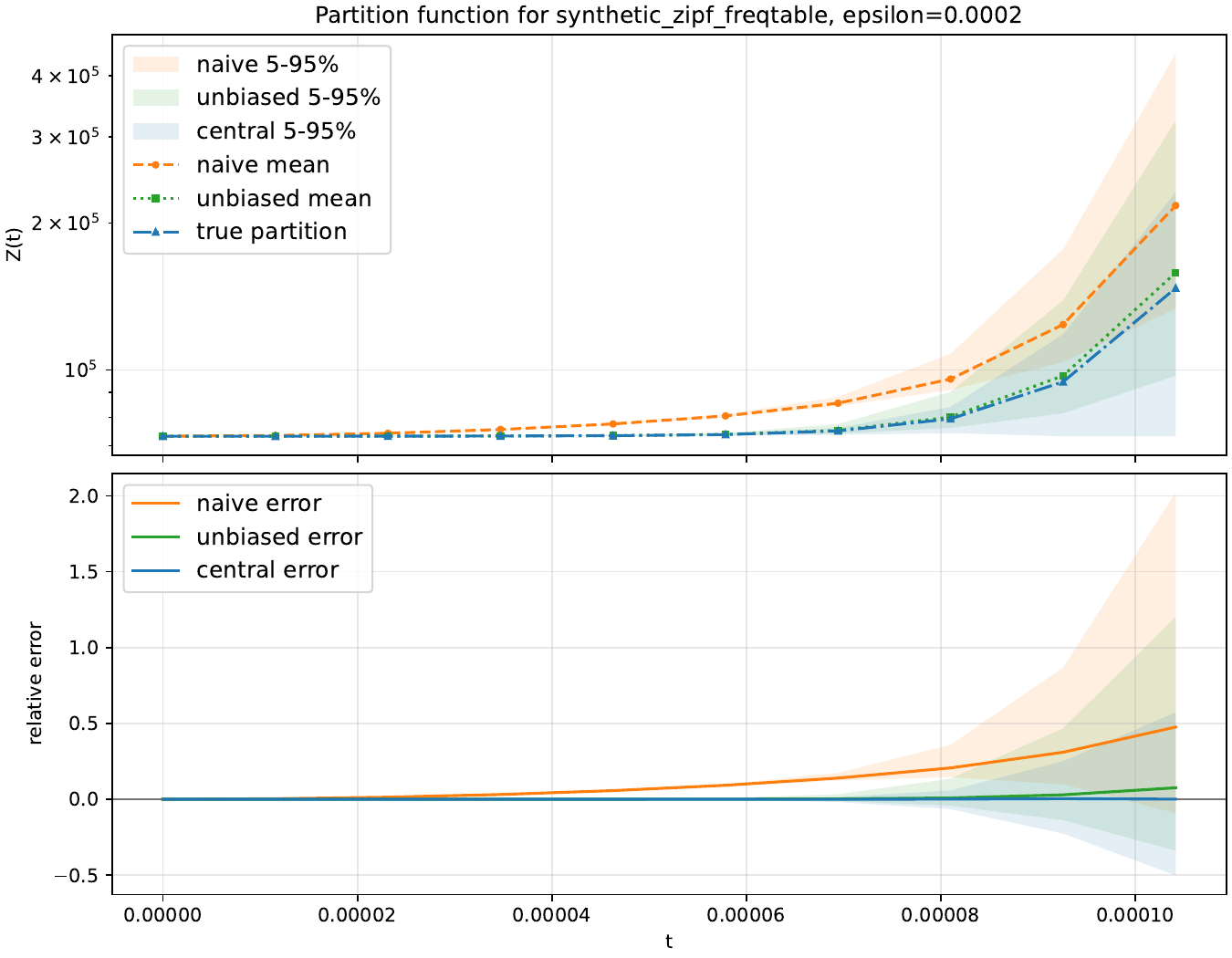}
    \caption{Synthetic Zipf data.}
    \label{fig:zipf}
\end{subfigure}
\caption{Partition function estimation on synthetic datasets.}
\label{fig:partition-synthetic}
\end{figure}

\begin{figure}[t]
\centering
\begin{subfigure}{0.48\textwidth}
    \centering
    \includegraphics[width=\textwidth]{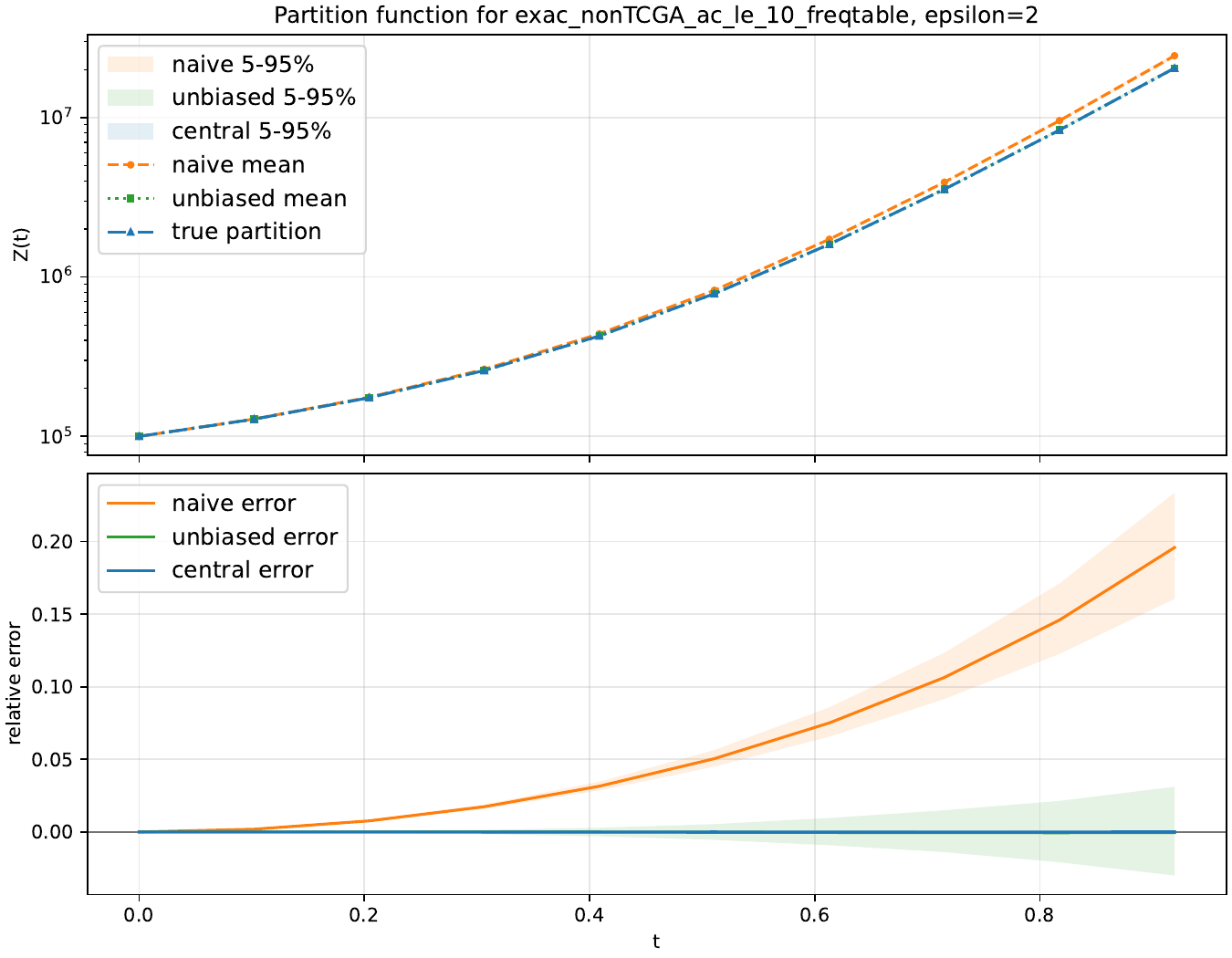}
    \caption{ExAC data.}
    \label{fig:exac}
\end{subfigure}\hfill
\begin{subfigure}{0.48\textwidth}
    \centering
    \includegraphics[width=\textwidth]{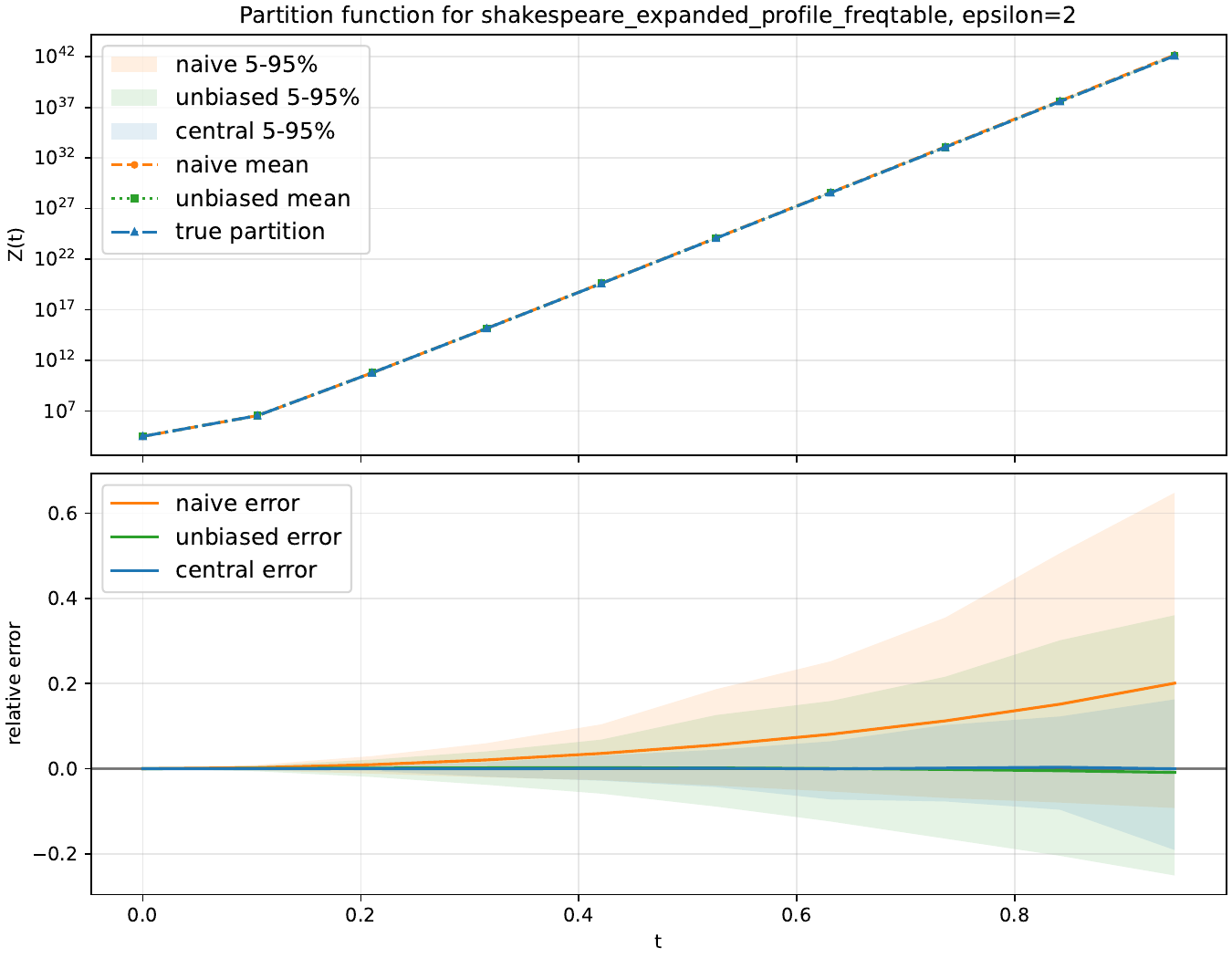}
    \caption{Shakespeare data.}
    \label{fig:shakespeare}
\end{subfigure}
\caption{Partition function estimation on real datasets.}
\label{fig:partition-real}
\end{figure}

\subsection{Profile Estimation}\label{sec:profile-estimation}
We next evaluate the estimators on profile estimation. Given a histogram $x$, the profile records, for each frequency $k$, the fraction of domain elements with count exactly $k$. Equivalently, for each $k$ we estimate
\[
    \phi_k(x) = \frac{1}{n}\sum_{i=1}^n \mathbf{1}[x_i=k].
\]
This is a natural discrete analogue of estimating the shape of a distribution from privatized counts, and it is also a basic primitive in distribution testing, species estimation, and frequency-statistics tasks. 
Since each coordinate contribution is an indicator function, profile estimation is a direct test of the local debiasing formula for nonsmooth functions.

Private profile estimation has been studied theoretically in several settings, including the sketch post-processing approach of~\citet{wu2024profile} and the shuffle model anonymized histogram estimators of~\citet*{ghazi2022anonymized}.
An unbiased estimator for this setting can easily be obtained from the threshold estimator of~\citep{ghazi2022anonymized}, so our estimator is not algorithmically novel in this setting, but we include empirical results for completeness.
(Prior works on private profile estimation did not, to our knowledge include empirical results.)

We compare the naive plug-in estimator, which computes the profile of the privatized histogram $\tilde{x}=x+\DLap(e^{-\varepsilon})$, to the unbiased estimator obtained by applying the univariate debiasing operator to each indicator $\mathbf{1}[x_i=k]$ and then averaging over coordinates. 
As in the partition-function experiments, we include a central model baseline that has access to the true histogram and outputs a single profile estimate. 
Since the algorithms in~\citep{wu2024profile,ghazi2022anonymized} are complex with unspecified implementation details we do not include them in the comparison.
All experiments use $\varepsilon=1$ and observe the distribution of estimates over $200$ noise realizations.

\paragraph{Datasets and results.} We use the same four datasets described in Section~\ref{sec:partition-function}. 
In the profile-estimation task, these datasets induce profiles with different shapes: the synthetic uniform table concentrates all mass at frequency~1, the synthetic Zipf and ExAC datasets have decreasing profiles over small frequencies, and the Shakespeare profile is more irregular and extends over a much wider range of frequencies.

Figure~\ref{fig:profile-synthetic} shows the results on synthetic data. On the uniform table, the naive estimator spreads mass away from the true frequency, severely underestimating the dominant profile entry and creating artificial mass at neighboring frequencies. The unbiased estimator removes this smoothing bias and closely follows the true profile. On the Zipf table, the naive estimator again underestimates the mass at frequency $1$ and overestimates several larger frequencies, whereas the unbiased estimator remains centered near the true profile.

Figure~\ref{fig:profile-real} shows the corresponding real-data experiments. 
On ExAC, the naive estimator has a large negative bias at frequency $1$ and a positive bias at nearby frequencies, while the unbiased estimator is nearly indistinguishable from the true and central profiles. 
On Shakespeare, both estimators track the broad shape of the profile, but the naive estimator shows oscillatory bias across frequencies, whereas the unbiased estimator remains centered with smaller mean error.

Overall, our empirical investigation of profile estimation demonstrate that our post-processing method can be effective for discontinuous statistics such as profiles.
The improvement is most visible at low frequencies, where discrete Laplace noise tends to move counts between adjacent bins and therefore introduces systematic smoothing in the naive profile.

\begin{figure}[t]
\centering
\begin{subfigure}{0.48\textwidth}
    \centering
    \includegraphics[width=\textwidth]{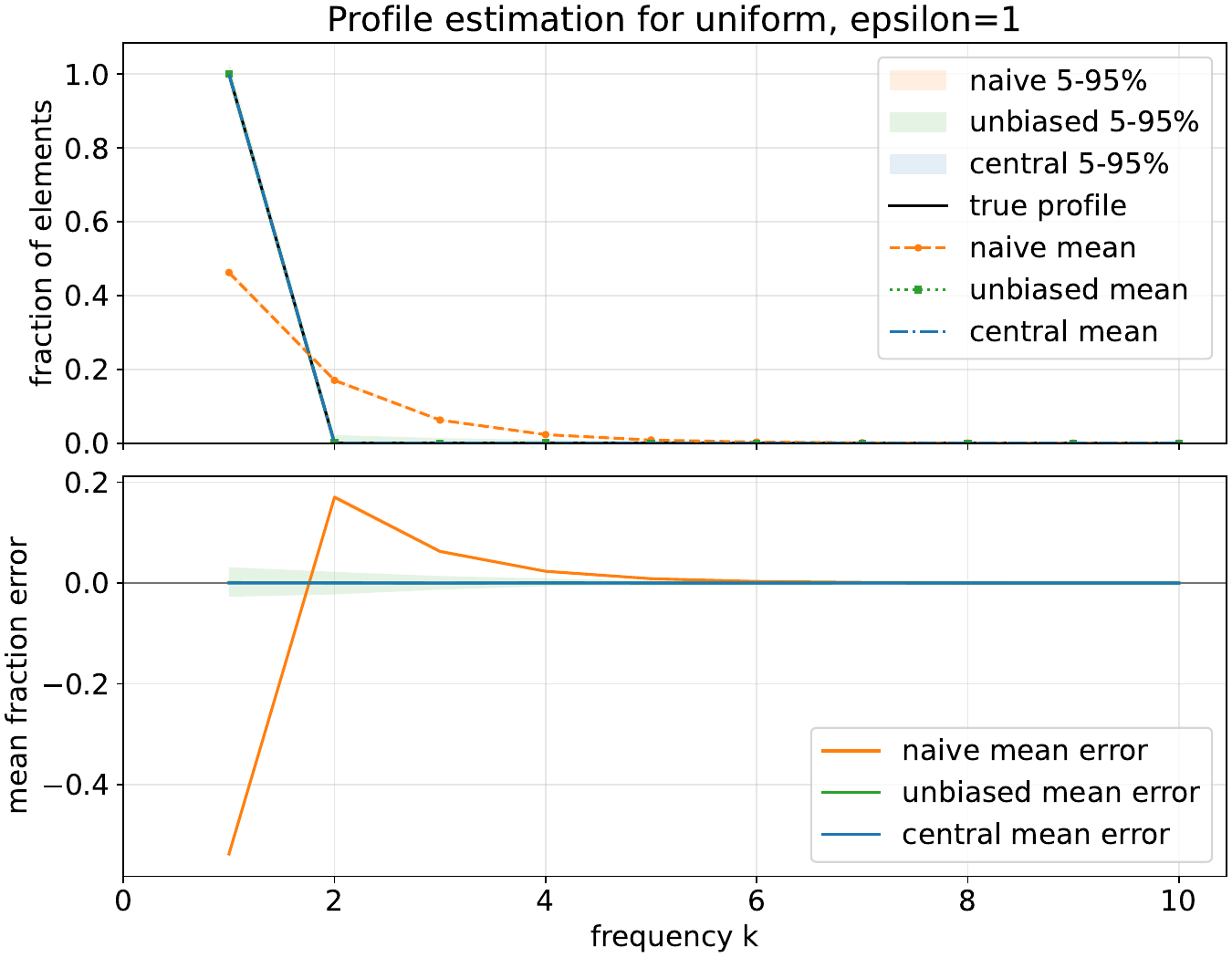}
    \caption{Synthetic uniform data.}
    \label{fig:profile-uniform}
\end{subfigure}\hfill
\begin{subfigure}{0.48\textwidth}
    \centering
    \includegraphics[width=\textwidth]{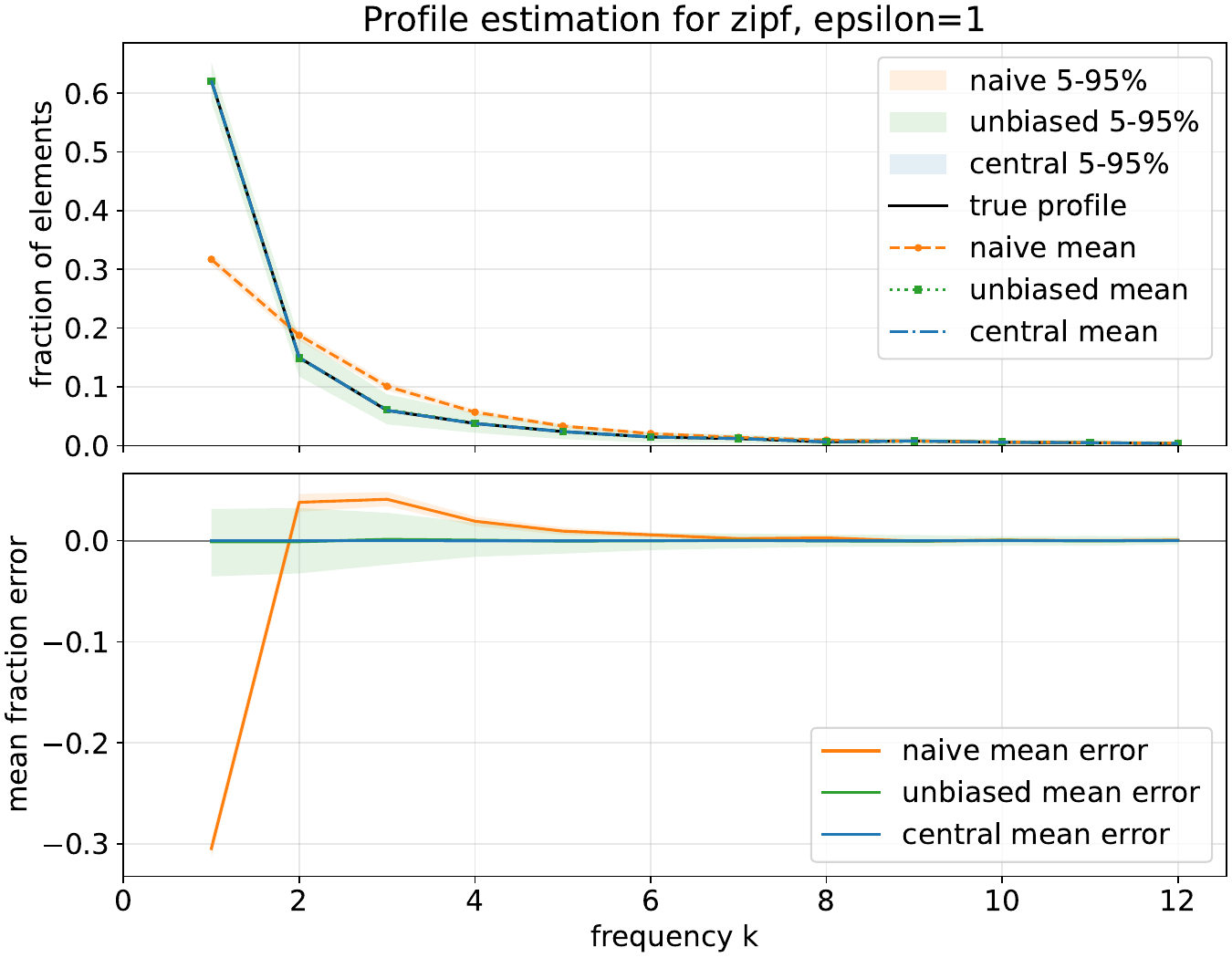}
    \caption{Synthetic Zipf data.}
    \label{fig:profile-zipf}
\end{subfigure}
\caption{Profile estimation on synthetic datasets.}
\label{fig:profile-synthetic}
\end{figure}

\begin{figure}[t]
\centering
\begin{subfigure}{0.48\textwidth}
    \centering
    \includegraphics[width=\textwidth]{figures/exac_nonTCGA_ac_le_10_freqtable_profile_sweep.pdf}
    \caption{ExAC data.}
    \label{fig:profile-exac}
\end{subfigure}\hfill
\begin{subfigure}{0.48\textwidth}
    \centering
    \includegraphics[width=\textwidth]{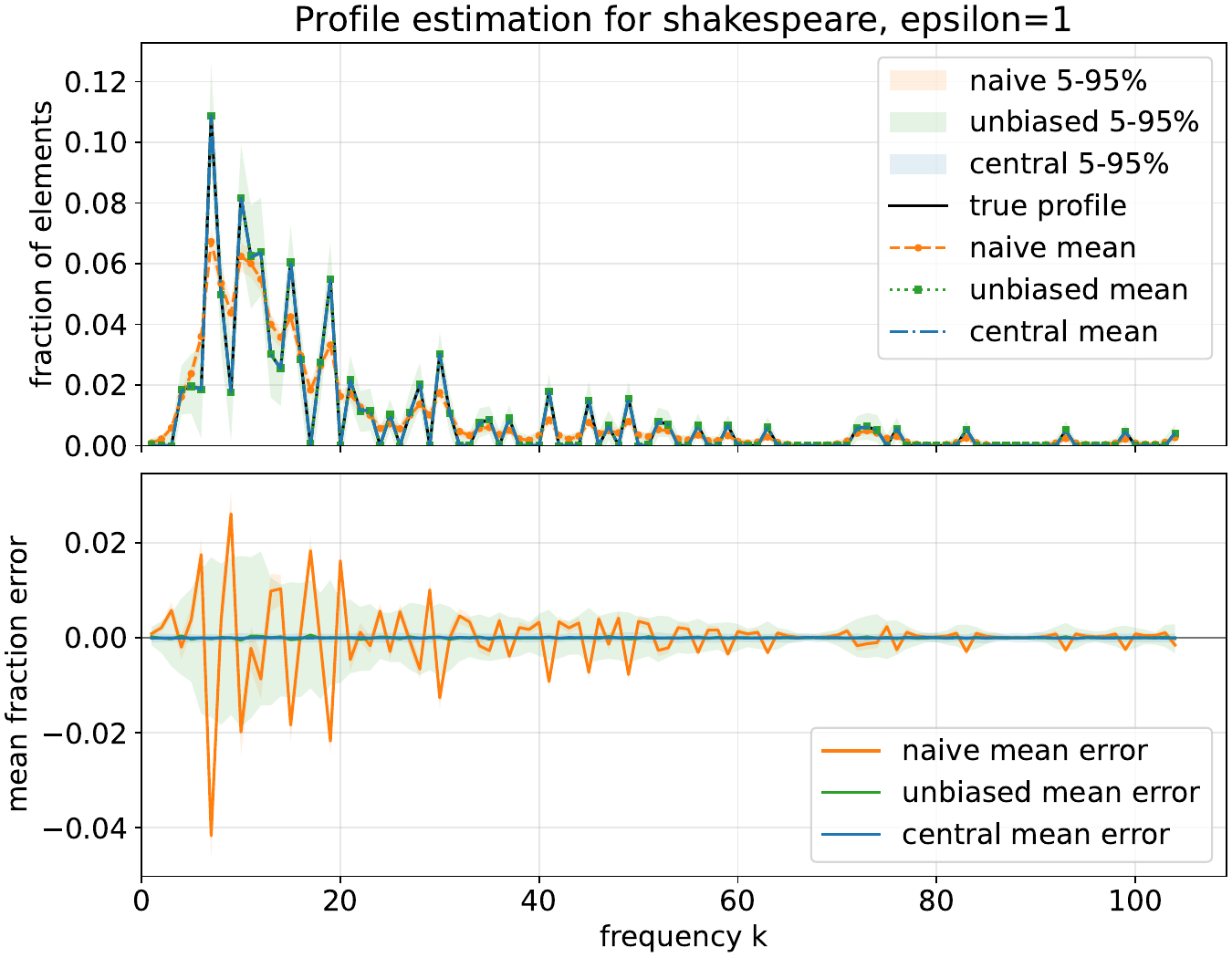}
    \caption{Shakespeare data.}
    \label{fig:profile-shakespeare}
\end{subfigure}
\caption{Profile estimation on real datasets.}
\label{fig:profile-real}
\end{figure}

\end{document}